\documentclass[12pt]{article}

\usepackage{latexsym}
\usepackage{amsmath}
\usepackage{amssymb}
\usepackage{amsthm}
\usepackage{amscd}
\usepackage[mathscr]{eucal}
\usepackage{fullpage}
\usepackage{graphicx}
\usepackage{subfigure}
\usepackage{psfrag}
\usepackage{rotating}
\usepackage{lscape} 
\usepackage{color}

\usepackage{adjustbox}

\parskip0.2cm     %% space between paragraphs
\parindent0cm

\newcommand{\textfrac}[2]{{\textstyle \frac{#1}{#2}}}

\newcommand{\R}{\mathbb{R}}

\newcommand{\prad}{\widehat{p}_\mathrm{rad}}
\newcommand{\ptan}{\widehat{p}_\mathrm{tan}}
\newcommand{\Prad}{\mathcal{P}_\mathrm{rad}}
\newcommand{\Ptan}{\mathcal{P}_\mathrm{tan}}

\definecolor{ggreen}{cmyk}{0.7,     0,      0.9,      0}
\definecolor{viol}{cmyk}{0.3,1,0,0}
\definecolor{myred}{cmyk}{0.1, 1, 0.5, 0}
\definecolor{bblue}{rgb}{0.2, 0.29996, 0.8 }

\theoremstyle{plain}

\newtheorem{theorem}{Theorem}%[section]
%[section]
\newtheorem{lemma}{Lemma}

\newtheorem{proposition}{Proposition}%[section]

\newtheorem{definition}{Definition}
\newtheorem{assumption}{Assumption}

%%%%%%%%%%%%%%%%%
%%%%%%%%%%%%%%%%%

%%%%%%%%%%%%%%%%%

\title{\bf %Dynamical systems approach to 
	 Static self-gravitating \\ Newtonian elastic balls}
 \author
 {A.~Alho  \\
           {\small Center for Mathematical Analysis, Geometry and Dynamical Systems}  \\
      {\small Instituto Superior T\'ecnico, Universidade de Lisboa} \\
      {\small Av. Rovisco Pais, 1049-001 Lisboa, Portugal} \\
       {\small  \tt   artur.alho@tecnico.ulisboa.pt}\\[0.4cm]
       S. Calogero  \\
       {\small Department of Mathematical Sciences}  \\
       {\small Chalmers University of Technology, University of Gothenburg}  \\
       {\small Gothenburg, Sweden} \\
       {\small  \tt  calogero@chalmers.se}
       }

\begin{document}
\maketitle

%%%%%%%%%%%%%%%%%

\begin{abstract}
\noindent
The existence of static self-gravitating Newtonian  elastic balls is proved under general assumptions on the constitutive equations of the elastic material. The proof uses methods from the theory of finite-dimensional dynamical systems and the Euler formulation of elasticity theory for spherically symmetric  bodies introduced recently by the authors. Examples of elastic materials covered by the results of this paper are Saint Venant-Kirchhoff, John and Hadamard materials.  

\end{abstract}

\section{Introduction and main results}
Static, spherically symmetric, self-gravitating matter distributions in Newtonian gravity are described by the following system of nonlinear integro-differential equations
\begin{subequations}\label{CPSS}
	\begin{align}
	\frac{dp_{\mathrm{rad}}}{dr} &= -\frac{2}{r}(p_{\mathrm{rad}}-p_{\mathrm{tan}})-\rho \frac{d\Phi}{dr}, \\
	\frac{d\Phi}{dr} &= \frac{m}{r^2},\quad m(r)=4\pi\int^{r}_{0}\rho(s)s^2 ds 
	\end{align}
\end{subequations}
defined in the interior of the matter support, $\Omega:=\mathrm{Int}\{r>0:\rho(r)>0\}$. Here, $p_{\mathrm{rad}}(r)$ and  $p_{\mathrm{tan}}(r)$ are the radial and tangential stresses (or pressures) respectively, $\rho(r)$ is the mass density,  $\Phi(r)$ is the Newtonian gravitational potential self-induced by the matter distribution and $m(r)$ is the mass enclosed in the ball of radius $r$.

The system~\eqref{CPSS} is completed by adding a constitutive equation between the pressure variables and the mass density which depends on the specific material model and its response to stress. A popular example is the barotropic fluid, for which $p_{\mathrm{rad}}(r)=p_{\mathrm{tan}}(r)=\widehat{p}(\rho(r))$, for some function $\widehat{p}:(0,\infty)\to\R$. For this matter model, the existence of solutions to~\eqref{CPSS} with finite radius has been extensively studied and is by now well-understood~\cite{HU03,Makino84, RR}. A non-linear stability result for these solutions is proved in~\cite{Rein}. 
%he question of existence of spherically symmetric static self-gravitating fluids with finite radius has also been satisfactorily answered in the relativistic case, see e.g.~\cite{, Makino98, RR, HRU03}. 
Barotropic fluids are widely used in astrophysics, e.g., to model the matter content of main sequence stars~\cite{Cha39,KWW}.
% In these applications a polytropic equation of state $p=c\rho^\gamma$ is often assumed.

In this paper we study the system~\eqref{CPSS} when the matter distribution consists of a single elastic body, which we assume to have the shape of a ball. 
Examples of astrophysical objects which are frequently modeled as elastic bodies are planets and neutron star crusts~\cite{CH2008,MW16}.
%The constitutive equation of elastic materials  is commonly specified in Lagrangian coordinates, where the stress and the mass density are given in terms of the deformation map between the reference (undeformed) state and the physical (deformed) configuration. 
% Moreover, time independence of the equations of motion  implies spherical symmetry, with the pressure being a monotonically decreasing function with increasing $r$, which readily implies that static perfect fluid matter distributions with compact support only exist as single balls configurations. 
For our analysis we use the Euler form of the the constitutive equations for spherically symmetric elastic balls given recently in~\cite{AC18}, namely

\begin{subequations}\label{elasticEOS}
\begin{equation}
p_{\mathrm{rad}}(r)=\prad(\delta(r),\eta(r)),\quad p_{\mathrm{tan}}(r)=\ptan(\delta(r),\eta(r)),
 \end{equation}
 where
 \begin{equation}\label{etadeltadef}
 \delta(r)=\frac{\rho(r)}{\mathcal{K}},\quad\eta(r)=\frac{m(r)}{\frac{4\pi}{3}\mathcal{K} r^3},
 \end{equation}
 \end{subequations}
 where $\mathcal{K}>0$ is a given constant (reference density) and $\prad,\ptan:(0,\infty)^2\to\R$ are functions independent of $\mathcal{K}$ (constitutive functions). The reference configuration of the elastic ball 
 %as defined in the Lagrangian formulation of elasticity theory~\cite{Ciarlet83, MH}, 
 corresponds to the state $\delta=\eta=1$ in which the mass density of the ball equals the constant reference density $\mathcal{K}$. 
In the case of so-called hyperelastic materials the constitutive functions can be deduced from a stored energy function $\widehat{w}(\delta,\eta)$ by
\begin{equation}\label{hyperDef}
\prad(\delta,\eta) = \delta^2 \partial_\delta \widehat{w}(\delta,\eta), \quad  \quad \ptan(\delta,\eta)=\prad(\delta,\eta)+\frac{3}{2}\delta\eta \partial_{\eta} \widehat{w}(\delta,\eta).
\end{equation}
When $\widehat{w}$ is independent of $\eta$, the hyperelastic material becomes a barotropic fluid. 
The system~\eqref{CPSS} in terms of the variables $(\delta(r),\eta(r))$ reads
 	\begin{subequations}\label{CPSS_Rho}
 		\begin{align}
 	\partial_\delta \prad(\delta,\eta)	\frac{d\delta}{dr} &=-\frac{3}{r}\partial_{\eta}\prad(\delta,\eta)(\delta-\eta) -\frac{2}{r}(\prad(\delta,\eta)-\ptan(\delta,\eta))-\frac{4\pi}{3}\mathcal{K}^2r \delta \eta,\label{eqdelta}\\
 	\frac{d\eta}{dr} &= \frac{3}{r}(\delta-\eta) .\label{eqeta}
	%\frac{dm}{dr} &= 4\pi\mathcal{K}\delta r^2 .
 		\end{align}
 		\end{subequations}
In \cite{AC18}, and using this form of the equations, the particular example of the (non-hyperelastic) Seth model was studied in detail. The existence of single and multi-body configurations consisting of a ball, or a vacuum core shell, surrounded by an arbitrary number of shells was proved along with sharp mass/radius inequalities. In this paper we extend the results in~\cite{AC18} for single Seth elastic balls to general constitutive equations. Before stating our main results, we give the precise definition of regular and strongly regular ball of matter.

\begin{definition}\label{balldef}
	A triple $(\rho,p_\mathrm{rad},p_\mathrm{tan})$ is called a regular static self-gravitating ball of matter if there exists a constant $R>0$ such that $\Omega:=\mathrm{Int}\{r>0:\rho(r)>0\}=(0,R)$ and
	\begin{itemize}
		%\item[(i)] $\Omega=[0,R)$,
		\item[(i)] $(\rho,p_\mathrm{rad},p_\mathrm{tan})\in C^0([0,R])\cap C^1((0,R])$ satisfy~\eqref{CPSS} for $r\in (0,R)$,
		\item[(ii)] $p_\mathrm{rad}(r),p_\mathrm{tan}(r)$ are positive for $r\in [0,R)$, 
		%\item[(iii)] $\partial_\delta \prad$ is positive in $\Omega$ \textcolor{red}{remove this condition},  
		\item[(iii)] $p_\mathrm{rad}(R)=0$,
		\item[(iv)] $p_\mathrm{rad}(0)=p_\mathrm{tan}(0)$,
		\item[(v)] $\rho(r)=p_\mathrm{tan}(r)=p_\mathrm{tan}(r)=0$, for $r>R$.
	\end{itemize}
If in addition $(\rho,p_\mathrm{rad},p_\mathrm{tan})\in C^1([0,R])$ and 
\[
\lim_{r\to 0^+}\frac{d\rho}{dr}(r)=\lim_{r\to 0^+}\frac{dp_\mathrm{rad}}{dr}(r)=\lim_{r\to 0^+}\frac{dp_\mathrm{tan}}{dr}(r)=0,
\]
then $(\rho,p_\mathrm{rad},p_\mathrm{tan})$ is called a strongly regular static self-gravitating ball of matter. When the principal pressures  are given as in~\eqref{elasticEOS} for some constant $\mathcal{K}>0$ and functions $\prad,\ptan\in C^1((0,\infty)^2)$ independent of $\mathcal{K}$, then we call $(\rho,p_\mathrm{rad},p_\mathrm{tan})$ a (strongly) regular static self-gravitating elastic ball with reference density $\mathcal{K}$ and constitutive functions $\prad,\ptan$.
%\item[(vii)] $m(R)=M$. 
\end{definition}
%
 %

%We also remark that , hence the regular center condition $p_\mathrm{rad}(0)=p_\mathrm{tan}(0)$ in Definition~\ref{balldef} is satisfied if and only the constitutive functions fulfill $\prad(\delta_c,\delta_c)=\ptan(\delta_c,\delta_c)$, for all possible values of $\delta_c=\delta(0)$.

{\it Remark.} The boundary condition $p_\mathrm{rad}(R)=0$ means that the ball of matter is surrounded by vacuum. See~\cite{KWW} for other boundary conditions used in astrophysics. We also remark that $\lim_{r\to 0^+}\delta(r)=\lim_{r\to 0^+}\eta(r)$ holds for regular elastic balls.

The distinction between regular and strongly regular balls is important for the following reason. 
Define the mass density $\widetilde{\rho}(x)$ and the stress tensor $\sigma(x)$ in Cartesian coordinates by $\widetilde{\rho}(x)=\rho(|x|)$ and
\[
\sigma_{ij}(x)=-p_\mathrm{rad}(|x|)\frac{x_ix_j}{|x|^2}-p_\mathrm{tan}(|x|)\left(\delta_{ij}-\frac{x_ix_j}{|x|^2}\right).
\]
Then $\widetilde{\rho},\sigma\in C^0(B_R)\cap C^1(B_R\diagdown\{0\})$ for regular balls supported in $B_R=\{x:|x|\leq R\}$, while $\widetilde{\rho},\sigma\in C^1(B_R)$ for strongly regular balls. For barotropic fluids with equation of state $p_\mathrm{rad}=p_\mathrm{tan}=\widehat{p}(\rho)$ it follows immediately from~\eqref{CPSS} that any regular static self-gravitating ball is strongly regular if $\widehat{p}\,'(\rho_c)\neq 0$, where $\rho_c=\rho(0)$ is the central density of the ball, which holds in particular under the strict hyperbolicity condition $\widehat{p}\,'(\rho)>0$ for the Euler equations. Our first main result shows that a 
similar strong regularity criteria holds for regular elastic balls. We set
\[
\chi(\delta)=\ptan(\delta,\delta)-\prad(\delta,\delta),
\]
and denote
\begin{equation}\label{isopressdef}
\widehat{p}_\mathrm{iso}(\delta,\eta)=\frac{1}{3}\big(\widehat{p}_\mathrm{rad}(\delta,\eta)+2\ptan(\delta,\eta)\big)=-\frac{1}{3}\mathrm{Tr}(\sigma);
\end{equation}
the quantity $p_\mathrm{iso}(r)=\widehat{p}_\mathrm{iso}(\delta(r),\eta(r))$ is the isotropic pressure of the ball.

%The first result of this paper extends this strong regularity criteria to elastic balls.   

%
%
%We begin by showing that under a mild condition on the constitutive functions every regular static self-gravitating elastic ball is strongly regular.
\begin{theorem}\label{regtheo}
Let $(\rho,p_\mathrm{rad},p_\mathrm{tan})$ be a regular static self-gravitating elastic ball with central density $\rho(0)=\rho_c$. Assume that $\prad,\ptan\in C^2((0,\infty)^2)$ and 
\begin{equation}\label{chi}
\chi(\delta)=0,\ \partial_\eta \widehat{p}_\mathrm{iso}(\delta,\delta)=0,\quad \text{for all $\delta>0$.}
\end{equation}
If $\partial_\delta\prad(\delta_c,\delta_c)\neq 0$, where $\delta_c=\rho_c/\mathcal{K}$,
then $(\rho,p_\mathrm{rad},p_\mathrm{tan})$ is strongly regular and the following estimate holds:
\begin{equation}\label{decaypressurescenter}
|p'_\mathrm{rad}(r)|+|p'_\mathrm{tan}(r)|+|\delta'(r)|+|\eta'(r)|\leq C r, \quad 0\leq r\leq 1,
\end{equation}
for some positive constant $C$.  Moreover in the case of hyperelastic materials there holds the identity
\begin{equation}\label{isopressid}
\frac{3}{2}\partial_\eta \widehat{p}_\mathrm{iso}(\delta,\delta)=\chi'(\delta)-\frac{\chi(\delta)}{\delta},
\end{equation}
hence for hyperelastic materials the condition~\eqref{chi} is equivalent to $\chi(\delta)=0$, for all $\delta>0$.
\end{theorem}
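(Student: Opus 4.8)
The plan is to study the system \eqref{CPSS_Rho} near $r=0$ under the hypotheses of the theorem. First I would establish the behaviour of $\delta,\eta$ near the centre. Equation \eqref{eqeta} together with the Remark (that $\lim_{r\to0^+}\delta=\lim_{r\to0^+}\eta=\delta_c$) shows that $\eta$ is, in a suitable sense, an average of $\delta$; writing $w=\delta-\eta$ one gets $\eta' = 3w/r$ and one can solve for $\eta$ via the integrating factor $r^3$, namely $\eta(r) = 3r^{-3}\int_0^r \delta(s)s^2\,ds$, which immediately gives $\eta(r)-\delta_c = 3r^{-3}\int_0^r(\delta(s)-\delta_c)s^2\,ds = o(1)$ and, once $\delta$ is shown to be $C^1$ at $0$ with $\delta'(0)=0$, the sharper bound $|\eta(r)-\delta_c|\le Cr$ and $|\eta'(r)|\le Cr$. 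So the crux is to control $\delta$.

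For $\delta$, I would rewrite \eqref{eqdelta} by collecting the three terms on the right-hand side and using the hypotheses \eqref{chi}. Along the diagonal $\eta=\delta$ the term $\partial_\eta\prad(\delta,\eta)(\delta-\eta)$ vanishes and $\prad-\ptan = -\chi(\delta) = 0$, so the apparent $1/r$ singularity on the right cancels at leading order; what remains after dividing by $\partial_\delta\prad(\delta,\eta)$ (nonzero near the centre by continuity, since $\partial_\delta\prad(\delta_c,\delta_c)\ne0$ and $\prad\in C^2$) is $\delta'(r) = r\,F(r)$ where $F$ involves the gravitational term $-\tfrac{4\pi}{3}\mathcal{K}^2\delta\eta$ plus terms of the form $(\delta-\eta)/r^2$ times smooth functions. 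Using $\delta-\eta = \tfrac{r}{3}\eta' $ from \eqref{eqeta}, together with $|\delta-\eta| = O(r)$ hence $(\delta-\eta)/r^2 = O(1/r)$ a priori — here one must be careful. The clean way is: since $\chi\equiv0$ and $\partial_\eta\widehat p_{\mathrm{iso}}(\delta,\delta)=0$, a Taylor expansion of the bracket $\big[\,3\partial_\eta\prad(\delta,\eta)(\delta-\eta)+2(\prad-\ptan)\,\big]$ around the diagonal shows it is $O((\delta-\eta)^2) = O(r^2)$, so dividing by $r$ leaves an $O(r)$ quantity; thus $\delta'(r) = O(r)$, which gives $\delta\in C^1([0,R])$ with $\delta'(0)=0$, and then bootstraps to $|\delta'(r)|\le Cr$. (The precise combination $\partial_\eta\widehat p_{\mathrm{iso}} = \tfrac13(\partial_\eta\prad + 2\partial_\eta\ptan)$ is exactly what appears when differentiating $3\partial_\eta\prad(\delta,\eta)(\delta-\eta) + 2(\prad(\delta,\eta)-\ptan(\delta,\eta))$ in $\eta$ at $\eta=\delta$, so condition \eqref{chi} is precisely the cancellation needed.) From $|\delta'|,|\eta'|\le Cr$ and $p_{\mathrm{rad}},p_{\mathrm{tan}}\in C^2$, the chain rule gives $|p'_{\mathrm{rad}}|,|p'_{\mathrm{tan}}|\le Cr$, which is \eqref{decaypressurescenter}, and the vanishing of the derivatives at $0$ upgrades the ball to strongly regular.

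For the identity \eqref{isopressid} in the hyperelastic case, I would simply compute using the definitions \eqref{hyperDef}. We have $\widehat p_{\mathrm{iso}} = \prad + \tfrac23(\ptan-\prad) = \prad + \delta\eta\,\partial_\eta\widehat w$, so $\partial_\eta\widehat p_{\mathrm{iso}}(\delta,\eta) = \partial_\eta\prad(\delta,\eta) + \delta\,\partial_\eta\widehat w + \delta\eta\,\partial^2_\eta\widehat w$, where $\partial_\eta\prad = \delta^2\partial^2_{\delta\eta}\widehat w$. Meanwhile $\chi(\delta) = \ptan(\delta,\delta)-\prad(\delta,\delta) = \tfrac32\delta^2\,\partial_\eta\widehat w(\delta,\delta)$, so differentiating in $\delta$ (total derivative along the diagonal) produces $\chi'(\delta) = 3\delta\,\partial_\eta\widehat w + \tfrac32\delta^2(\partial^2_{\delta\eta}\widehat w + \partial^2_\eta\widehat w)$ evaluated at $(\delta,\delta)$. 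Subtracting $\chi(\delta)/\delta = \tfrac32\delta\,\partial_\eta\widehat w$ and comparing with $\tfrac32\partial_\eta\widehat p_{\mathrm{iso}}(\delta,\delta)$ gives the claimed equality after matching terms. Then \eqref{isopressid} shows that $\chi\equiv0$ forces $\partial_\eta\widehat p_{\mathrm{iso}}(\delta,\delta)\equiv0$ automatically, so the two conditions in \eqref{chi} collapse to one.

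The main obstacle I anticipate is the second paragraph: rigorously justifying that the singular-looking right-hand side of \eqref{eqdelta} is genuinely $O(r)$ after division by $r$, i.e. controlling the combination $3\partial_\eta\prad(\delta,\eta)(\delta-\eta)+2(\prad-\ptan)$ to second order in $\delta-\eta$ uniformly near the centre. This requires knowing a priori that $|\delta-\eta| = O(r)$ (which itself needs the integral representation of $\eta$ and the continuity of $\delta$) and that the relevant second derivatives of $\prad,\ptan$ are bounded near $(\delta_c,\delta_c)$ — hence the $C^2$ hypothesis. The identity \eqref{isopressid}, by contrast, is a routine though slightly lengthy computation, and the strong-regularity conclusion then follows mechanically from the derivative bounds.
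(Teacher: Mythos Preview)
Your overall structure --- Taylor-expand the bracketed singular terms in \eqref{eqdelta} around the diagonal, use the hypotheses \eqref{chi} to kill the first-order contribution, and then read off the derivative bounds --- matches the paper exactly, and your computation of the hyperelastic identity \eqref{isopressid} is essentially the paper's computation.

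However, there is a genuine gap at precisely the point you flag as ``the main obstacle''. You write that the bracket is $O((\delta-\eta)^2)=O(r^2)$, and in the last paragraph you say this ``requires knowing a priori that $|\delta-\eta|=O(r)$ (which itself needs the integral representation of $\eta$ and the continuity of $\delta$)''. But continuity of $\delta$ together with the integral formula $\eta(r)=3r^{-3}\int_0^r\delta(s)s^2\,ds$ only gives $\delta-\eta=o(1)$, not $O(r)$. Equivalently, $\delta-\eta=\tfrac{r}{3}\eta'$ is only useful once you know $\eta'$ is bounded near $0$ --- which is part of the conclusion, not a hypothesis. So the argument as written is circular: you need $|\delta-\eta|=O(r)$ to get $\delta'=O(r)$, and you need $\delta'=O(r)$ (or at least bounded) to get $|\delta-\eta|=O(r)$.

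The paper closes this loop with a genuinely nonlinear step that you are missing. After the Taylor expansion one has
\[
\delta'(r)=\mathcal{R}(\delta,\eta)\Big(\frac{(\delta-\eta)^2}{r}+r\Big),
\]
and setting $u(r)=r\delta'(r)$ (which one checks tends to $0$ as $r\to0^+$ since $\delta-\eta\to0$), the integral representation plus Cauchy--Schwarz gives
\[
|\delta(r)-\eta(r)|\le \frac{1}{\sqrt{5r}}\Big(\int_0^r u(s)^2\,ds\Big)^{1/2},
\]
so that $u(r)\le C\big(r^2+r^{-1}\int_0^r u(s)^2\,ds\big)$. This is a \emph{quadratic} integral inequality, and the paper invokes a nonlinear Gr\"onwall-type lemma (Stachurska) to conclude $u(r)\le Cr^2$, i.e.\ $|\delta'(r)|\le Cr$. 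Once that is in hand, your bootstrap to $\eta',p_{\mathrm{rad}}',p_{\mathrm{tan}}'$ goes through exactly as you describe. You should either supply this nonlinear closing argument or an equivalent one; a naive linear Gr\"onwall or a direct ``$O(r)$ from continuity'' claim will not suffice.
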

The proof of Theorem~\ref{regtheo} is given in Section~\ref{proofregtheo} where we also show that the assumptions are satisfied by the Seth model studied in~\cite{AC18}. In particular the regular elastic balls constructed in~\cite{AC18} are strongly regular. We remark that the condition $\chi(\delta)=0$ in Theorem~\ref{regtheo} is equivalent to demand that the principal pressures should be equal at the center for all possible values of the central density. This condition is satisfied by all physically relevant elastic materials, including of course barotropic fluids. Furthermore we show in~\cite{AC19} that the condition for strict hyperbolicity of the time dependent equations of motion for spherically symmetric elastic bodies is that
\begin{equation}\label{stricthypcond}
\partial_\delta\prad(\delta(r),\eta(r))>0
\end{equation}
should hold for all $r\geq0$, which implies in particular the assumption $\partial_\delta\prad(\delta_c,\delta_c)\neq 0$ in Theorem~\ref{regtheo}.

The second main result of this paper is the following.
\begin{theorem}\label{maintheo}
Under the Assumptions~\ref{As1}--\ref{neg-press} on the constitutive functions $\prad,\ptan$ given in Section~\ref{assumptions}, there exists $\Delta\in (1,\infty]$, uniquely determined by $(\prad,\ptan)$ and which can be explicitly computed, such that for all $\rho_c,\mathcal{K}>0$ satisfying
\begin{equation}\label{initialdata}
1<\delta_c:=\frac{\rho_c}{\mathcal{K}}<\Delta
\end{equation} 
there holds 
\begin{equation}\label{central-conditions}
\prad(\delta_c,\delta_c)>0,
\quad \partial_\delta \prad(\delta_c,\delta_c)>0
\end{equation}
%{\color{red} 
%	\begin{equation}\label{UpsCenter}
%3\partial_\delta\prad(\delta_c,\delta_c)+\partial_\eta \prad(\delta_c,\delta_c)+2\partial_\eta \ptan(\delta_c,\delta_c))>0
%\end{equation}
%}
and there exists a unique regular static self-gravitating elastic ball with central density $\rho(0)=\rho_c$ and reference density $\mathcal{K}$. Moreover 
\begin{equation}\label{bounds}
\partial_\delta\prad(\delta(r),\eta(r))>0,\quad \rho(r)<\rho_c,\quad \frac{4\pi}{3}\max(\rho(r),\mathcal{K})r^3<m(r)<\frac{4\pi}{3}\rho_c r^3
\end{equation}
hold for all $r\in (0,R]$, where $R>0$ is the radius of the ball. 
\end{theorem}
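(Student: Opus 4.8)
The plan is to reformulate the singular boundary-value problem~\eqref{CPSS_Rho} as an autonomous finite-dimensional dynamical system on a suitable phase space, to identify the central state $\delta=\eta=\delta_c$ with a fixed point (or a point on a critical set) of that system, and to show that the unique orbit emanating from it reaches the boundary $p_\mathrm{rad}=0$ in finite ``time'' $r=R$. First I would introduce a logarithmic independent variable $\tau=\ln r$ (so $r\partial_r = \partial_\tau$) and, if necessary, rescale $\delta$ and $\eta$ by powers of $r$ and $\mathcal{K}$; under the nondegeneracy $\partial_\delta\prad(\delta_c,\delta_c)>0$ from~\eqref{central-conditions} (which in turn follows from Assumptions~\ref{As1}--\ref{neg-press}, and whose verification near $\delta_c$ for $1<\delta_c<\Delta$ is the first thing to check), equation~\eqref{eqdelta} can be solved for $d\delta/dr$, giving a smooth vector field away from $r=0$. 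The term $\frac{4\pi}{3}\mathcal{K}^2 r\delta\eta$ is the obstruction to autonomy; the standard remedy is to append the auxiliary variable $u = \frac{4\pi}{3}\mathcal{K}^2 r^2$ (or $w=m/(\text{something})$), whose evolution $du/d\tau = 2u$ closes the system. The resulting system has a fixed point at $(\delta,\eta,u)=(\delta_c,\delta_c,0)$ corresponding to $r\to 0^+$.

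The core of the argument is a local and then global analysis of the orbit leaving this fixed point. Locally I would linearize: the $(\delta,\eta)$ block should, thanks to~\eqref{eqeta} giving $d\eta/d\tau = 3(\delta-\eta)$ and the $\eta$-dependence of the right side of~\eqref{eqdelta}, have eigenvalues that make the fixed point a hyperbolic (or partially hyperbolic) saddle with a one-dimensional unstable manifold tangent to the direction forced by $u\sim r^2$; by the unstable-manifold theorem this singles out a \emph{unique} $C^1$ orbit with the correct behavior $\delta,\eta\to\delta_c$, $\delta-\eta = O(r^2)$ as $r\to 0$, which is exactly the regularity demanded in Definition~\ref{balldef} together with the Remark that $\lim_{r\to0^+}\delta=\lim_{r\to0^+}\eta$. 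Uniqueness of the ball then reduces to uniqueness of this center manifold / unstable orbit. For the global part I would use the monotonicity structure: from~\eqref{eqeta}, $\eta$ is squeezed, and a suitable Lyapunov-type functional (or direct sign analysis of $d\delta/dr$ using the pressure positivity $\prad>0$ and Assumptions~\ref{As1}--\ref{neg-press}) shows $\delta$ is strictly decreasing along the orbit, yielding $\rho(r)<\rho_c$ and, via $m(r)=4\pi\int_0^r\rho s^2\,ds$, the mass bounds $\frac{4\pi}{3}\mathcal{K}r^3 < m(r)$ (from $\delta>1$, i.e.\ $\rho>\mathcal{K}$, which must persist as long as $\delta>1$) and $m(r)<\frac{4\pi}{3}\rho_c r^3$; one also needs $\rho(r)>\mathcal{K}$, equivalently $\delta>1$, to be maintained, or rather the weaker $\max(\rho,\mathcal{K})$ form, which follows by combining $\eta>1$ (from $m$ versus the reference) with the bound on $\rho$.

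To finish I must show the orbit actually hits $\prad=0$ in finite $r$. Here the assumption(s) labeled \texttt{neg-press} presumably guarantee that $\prad(\delta,\eta)$ becomes nonpositive before $\delta$ decreases to, say, $1$ (or to some threshold), so that $\prad$ along the decreasing-$\delta$ orbit must vanish at some finite radius $R$; the number $\Delta$ is then characterized as the supremum of central densities $\delta_c$ for which the whole orbit stays in the region where the vector field is defined and $\partial_\delta\prad>0$ (ensuring~\eqref{bounds}), and is ``explicitly computable'' from the zero set of $\partial_\delta\prad$ or of $\prad$ restricted to the orbit closure. Strong regularity at $r=R$ is immediate from $\prad,\ptan\in C^1$ and $\partial_\delta\prad(\delta(R),\eta(R))>0$, which lets~\eqref{eqdelta} be solved for $\delta'(R)$. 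The main obstacle I anticipate is the \emph{global} step: showing the orbit does not leave the admissible region through some other face (e.g.\ $\partial_\delta\prad\to 0$, or $\delta\to 0$, or $\eta$ escaping its expected range) before $\prad$ reaches zero — this is where the precise content of Assumptions~\ref{As1}--\ref{neg-press} must be used to construct an invariant region (a trapping region in $(\delta,\eta,u)$-space) containing the unstable orbit and bounded by the level set $\{\prad=0\}$, and it is also where the exact value of $\Delta$ gets pinned down.
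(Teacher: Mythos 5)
Your outline does follow the same broad strategy as the paper (autonomize the system with an auxiliary variable carrying the $r^2$-factor, view the center as a point on a line of fixed points, get existence and uniqueness of the regular solution from the one-dimensional unstable manifold, then truncate at the first zero of the radial pressure). However, the decisive analytic step — the paper's Theorem~\ref{global-regular} — is exactly the part you flag as ``the main obstacle'' and do not carry out: one must prove that the unique orbit exists globally, stays in the region where $\partial_\delta\prad>0$, and that the mean density decays to zero, so that the orbit crosses $\eta=1$ where $\Prad(1,y)<0$ (Assumption~\ref{neg-press}) forces the radial pressure to vanish at a finite radius. Your choice of auxiliary variable $u=\tfrac{4\pi}{3}\mathcal{K}^2r^2$ cannot support this analysis: $u\to\infty$ along the orbit and the vector field degenerates as $\delta,\eta\to0$, so there is no bounded limit dynamics to study. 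The paper instead uses the weighted variable $v\propto r^2x^{6+a}y^{b}$ of~\eqref{v} together with the time change $dr/d\xi=r\Gamma(x,y)$, precisely so that (via Assumptions~\ref{gamma-ups}--\ref{abcass}, in particular $b>0$, $c=0$) $v$ remains bounded and the flow extends $C^1$ to the invariant boundary $x=0$; there, the explicit Dulac function $1/(y^{1+b}v)$ (requiring~\eqref{dulac-condition}) excludes periodic orbits and Poincar\'e--Bendixson pins the $\omega$-limit at $(y_\star,v_\star)$. Without some replacement for this machinery, your proposed ``trapping region bounded by $\{\prad=0\}$'' is not constructed, and nothing rules out the orbit reaching $\partial_\delta\prad=0$, or wandering, before the pressure ever vanishes.

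Two further gaps. First, Definition~\ref{balldef}(ii) also demands $p_\mathrm{tan}>0$ in the interior; your proposal never addresses this, which is the role of Assumption~\ref{ptanpositive} and of the threshold $X_\sharp$, so your characterization of $\Delta$ (only through the locus where $\partial_\delta\prad$ degenerates) captures just the $X_\flat$ half of the paper's $\Delta=\min(X_\flat,X_\sharp)^3$. Second, your route to~\eqref{bounds} through ``$\delta$ strictly decreasing'' is unproved and is not what the paper establishes: what holds is monotone decrease of the mean density (i.e.\ $x<x_c$), $y=\delta/\eta<1$, and $x>1$ inside the ball (the latter from $p_\mathrm{rad}>0$ together with~\eqref{neg}); these three facts, not monotonicity of $\delta$, yield $\rho<\rho_c$ and the mass inequalities in~\eqref{bounds}.
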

%\begin{theorem}\label{maintheo2}
%Under the additional Assumption 8 on the constitutive functions $\prad,\ptan$ given in Section~\ref{assumptions}, the  static self-gravitating elastic ball in Theorem~\ref{maintheo} is strongly regular.
%\end{theorem}

The proof of Theorem~\ref{maintheo} is given in Section~\ref{proof}; it is independent of the proof of Theorem~\ref{regtheo}. In Section~\ref{Examples} we show that the Assumptions~\ref{As1}--\ref{neg-press} on $\prad,\ptan$ presented in Section~\ref{assumptions} are satisfied by important and widely used examples of elastic material models, namely Saint-Venant-Kirchhoff materials, John materials and Hadamard materials. We also give for each specific material model a more precise formulation of Theorem~\ref{maintheo}, which contains the exact value of $\Delta$. In particular we shall find that $\Delta=\infty$ for the John model, while $\Delta<\infty$ for the Saint-Venant-Kirchhoff and Hadamard models. Moreover in a remark after Theorem~\ref{hadtheo} in Section~\ref{Examples} we argue that the constant $\Delta$ might not be optimal for Hadamard materials.
% as well as additional mass/radius inequalities. 

{\it Remark.}  Most of the elastic material models used in the applications, including the examples in Section~\ref{Examples}, belong to the class of power-law hyperelastic materials, see Definition~\ref{powerdef} in Section~\ref{powerlawmaterials}. For these materials some of the assumptions in Theorem~\ref{maintheo} are always satisfied, see Proposition~\ref{powerprop} in Section~\ref{powerlawmaterials}. Moreover all hyperelastic power-law materials satisfy the hypotheses of Theorem~\ref{regtheo} and thus when Theorem~\ref{maintheo} is applied to these models, the regular static self-gravitating elastic ball is strongly regular.

{\it Remark.} The constant $\Delta$ in~\eqref{initialdata} is determined uniquely by the constitutive equations as follows. By Assumption~\ref{initial-reg} there exists $\Delta_\flat\in (1,\infty]$ such that $\partial_\delta \prad(\delta,\delta)>0$ for $0<\delta<\Delta_\flat$ and if $\Delta_\flat<\infty$ then $\partial_\delta\prad(\Delta_\flat,\Delta_\flat)=0$. By Assumption~\ref{ptanpositive} there exists $\Delta_\sharp\in(1,\infty]$ such that $0<\eta<\Delta_\sharp$, $0<\delta<\eta$ and $\prad(\delta,\eta) \geq 0$ imply $\ptan(\delta,\eta)>0$ and if $\Delta_\sharp<\infty$ then $\prad(\delta_*,\Delta_\sharp)=0$ for some $\delta_*\in (0,\Delta_\sharp)$ implies $\ptan(\delta_*,\Delta_\sharp)=0$; note that $\Delta_\flat,\Delta_\sharp$ are both unique. The value of $\Delta$ is given by $\Delta=\min(\Delta_\flat,\Delta_\sharp)$. The bound $\delta_c<\Delta_\flat$ ensures that the strict hyperbolicity condition~\eqref{stricthypcond} is verified at the center, while the bound $\delta_c<\Delta_\sharp$ ensures that the tangential pressure remains positive in the interior of the ball.
%If $\Delta<\infty$ and $\Delta=\theta_1$, then for $\delta\geq\Delta$ the inequality~\eqref{stricthypcond} is violated at the center, hence in this case the hypothesis $\delta<\Delta$ in Theorem~\ref{maintheo} is necessary for the existence of regular balls satisfying the strict hyperbolicity condition $\partial_\delta\prad(\delta(r),\eta(r))>0$ in the interior. If however $\Delta<\infty$ and $\Delta=\theta_2$ the condition $\delta_c<\Delta$ might be stronger than necessary.  
For the examples of stored energy functions in Section~\ref{Examples} the values of $\Delta_\flat,\Delta_\sharp$ can be computed exactly; for more complicated material models these values may be only found numerically. 

{\it Remark.} The first inequality in~\eqref{central-conditions} is equivalent to the positivity of the central radial pressure of the ball, and it is therefore necessary. As shown in Section~\ref{Examples}, for Saint Venant-Kirchhoff and Hadamard materials there holds $\widehat{p}_\mathrm{rad}(\delta_c,\delta_c)\leq 0$ for $\delta_c\leq 1$, hence the assumption $\delta_c>1$ in Theorem~\ref{maintheo} is necessary for these materials. However for the John model there exists $\Delta_*<1$ such that $\widehat{p}_\mathrm{rad}(\delta_c,\delta_c)> 0$ for $\delta_c\in (0,\Delta_*)$, see the remark after Theorem~\ref{johntheo} in Section~\ref{Examples}.  The question whether finite radius ball solutions for the John model exist when $\delta_c\in (0,\Delta_*)$ remains open. 
%One may also argue that, due to the attractive nature of gravity, the solutions with $\rho_c > \mathcal{K}$ are physically more relevant.

The proof of Theorem~\ref{maintheo} is based on the following argument. We begin by proving that, under Assumptions~\ref{As1}--\ref{hardone} on the constitutive functions $\prad,\ptan$, center data $\delta(0)=\eta(0)=\delta_c$ satisfying $\partial_\delta\prad(\delta_c,\delta_c)>0$  launch a unique global positive solution $(\delta,\eta)$ of~\eqref{CPSS_Rho} and $\delta(r)\to 0$ as $r\to \infty$, see Theorem~\ref{global-regular} in Section~\ref{proof}. The proof of this result relies on methods from the theory of finite-dimensional dynamical systems~\cite{H02}. The regular balls in Theorem~\ref{maintheo} are constructed by truncating these global solutions as follows. By Assumption~\ref{neg-press} on $\prad$ the radial pressure becomes negative before the density approaches zero.
Therefore if $p_\mathrm{rad}(0)=\prad(\delta_c,\delta_c)>0$, then there exists $R>0$ such that $\overline{p}_\mathrm{rad}(r)=\prad(\delta(r),\eta(r))>0$ for $r\in [0,R)$, while $\overline{p}_\mathrm{rad}(R)=0$. By Assumption~\ref{ptanpositive} on $\prad,\ptan$ the tangential pressure is positive when the radial pressure is positive, hence we also have $\overline{p}_\mathrm{tan}(r)=\ptan(\delta(r),\eta(r))>0$, for $r\in [0,R)$. Thus letting $\overline{\rho}(r)=\mathcal{K}\delta(r)$, the triple $(\rho,p_\mathrm{rad},p_\mathrm{tan})=(\overline{\rho},\overline{p}_\mathrm{rad},\overline{p}_\mathrm{tan})\mathbb{I}_{r\leq R}$ defines a regular static self-gravitating elastic ball supported in the interval $[0,R]$. 
%The constitutive functions are also required to satisfy~\eqref{chi}, and thus the elastic ball is strongly regular by Theorem~\ref{regtheo}. 
 
%
%As to Theorem~\ref{maintheo}, we first present the idea using the simple case of barotropic fluids as an example. Let $(\rho,p_\mathrm{rad}=p_\mathrm{tan}=p)$ be a regular static self-gravitating fluid ball with equation of state $p=\widehat{p}(\rho)$. As $\rho\in C^0([0,R])$, then~\eqref{CPSS} gives 
%\[
%\lim_{r\to 0^+}\frac{dp}{dr}=-\lim_{r\to 0^+}\rho\frac{m}{r^2}=0.
%\]
%As $dp/dr=d\widehat{p}/d\rho d\rho/dr$, then $\lim_{r\to 0^+}d\rho/dr=0$ holds provided $d\widehat{p}/d\rho(\rho_c)\neq0$. Hence regular static self-gravitating fluid balls are automatically strongly regular if the equation of state satisfies $d\widehat{p}/d\rho(\rho)\neq0$ (in fact, in the case of fluids one typically requires $d\widehat{p}/d\rho(\rho)>0$, so that the sound speed is well-defined). We shall use a similar argument for elastic balls and show that strong regularity in this case follows for regularity when $\chi(\delta)\neq0$, where

%We conclude this introduction by mentioning some important works related to the subject of the present paper. \textcolor{red}{review}
%{\it Remark.} In a future work we shall study static self-gravitating elastic balls in general relativity.
\section{Proof of Theorem~\ref{regtheo}}\label{proofregtheo}
As we are interested only in the behavior of regular balls when $r\to0^+$, we may assume that $r\in [0,r_*)$, where $r_*$ can be chosen arbitrarily small. In particular, we can assume that $(\delta,\eta)$ lies in an arbitrarily small disk $D$ around $(\delta_c,\delta_c)$.  Hence by Taylor's theorem there exist functions $\mathcal{R}_1(\delta,\eta)$, $\mathcal{R}_{2}(\delta,\eta)$ bounded in $D$ such that
\begin{align*}
&\partial_\eta\prad(\delta,\eta)=\partial_\eta\prad(\delta,\delta)+\mathcal{R}_1(\delta,\eta)(\delta-\eta),\\
&\prad(\delta,\eta)-\ptan(\delta,\eta)=-(\partial_\eta\prad(\delta,\delta)-\partial_\eta\ptan(\delta,\delta))(\delta-\eta)+\mathcal{R}_2(\delta,\eta)(\delta-\eta)^2,
\end{align*}
where we used that $\chi(\delta)=\ptan(\delta,\delta)-\prad(\delta,\delta)=0$.  Replacing in~\eqref{eqdelta} and using the hypothesis $3\partial_\eta \widehat{p}_\mathrm{iso}(\delta,\delta)=\partial_\eta(\prad+2\ptan)(\delta,\delta)=0$, the first order terms cancel out and thus we obtain
\[
\partial_\delta\prad(\delta,\eta)\frac{d\delta}{dr}=(\mathcal{R}_1(\delta,\eta)+\mathcal{R}_2(\delta,\eta))\frac{(\delta-\eta)^2}{r}-\frac{4\pi\mathcal{K}^2}{3}\delta\eta r.
\]
Since $\partial_\delta\prad(\delta_c,\delta_c)\neq 0$, then for $r_*$ sufficiently small there holds $\inf_{r\in (0,r_*)}\partial_\delta\prad(\delta,\eta)\neq 0$. In conclusion, in a sufficiently small interval $(0,r_*)$ we can write the equation for $\delta'$ as
\begin{equation}\label{eqdeltanew}
\frac{d\delta}{dr}=\mathcal{R}(\delta,\eta)\left(\frac{(\delta-\eta)^2}{r}+r\right),
\end{equation}
where $\mathcal{R}(\delta,\eta)$ is bounded in $D$. Let $u(r)=r\delta'(r)\in C^0((0,r_*))$. By~\eqref{eqdeltanew}, $u$ extends uniquely to a continuous function on the interval $[0,r_*)$, which we continue to denote by $u$. Furthermore $u(r)\to 0$ as $r\to 0^+$.  Now, by the definition of $\eta$ in~\eqref{etadeltadef} we have
\[
\eta(r)=\frac{3\int_0^r\delta(s)s^2\,ds}{r^3}=\delta(r)-r^{-3}\int_0^ru(s)s^2\,ds.
\]
Using Cauchy-Schwarz's inequality we have
\begin{equation}\label{frometa}
|\delta(r)-\eta(r)|\leq \frac{1}{\sqrt{5r}}\left(\int_0^ru(s)^2\,ds\right)^{1/2}.
\end{equation}
Replacing in~\eqref{eqdeltanew} we obtain that $u(r)$ satisfies, for all $r\in (0,r_*)$, 
\[
u(r)\leq C\left(r^2+\frac{1}{r}\int_0^ru(s)^2\,ds\right), 
\]
where $C$ here and below denotes a positive constant which may change from line to line. Hence for all $\varepsilon\in (0,r_*)$ and $r\in [\varepsilon,r_*)$ we have
\begin{align*}
u(r)&\leq C\left(r^2+\frac{1}{r}\int_0^\varepsilon u(s)^2\,ds+\frac{1}{r}\int_\varepsilon^ru(s)^2\,ds\right)\\
&\leq C\left(r^2+\frac{\varepsilon}{r}D(\varepsilon)+\frac{1}{r}\int_\varepsilon^ru(s)^2\,ds\right),
\end{align*}
where $D(\varepsilon)=\|u\|_{L^\infty((0,\varepsilon))}^2\to 0$ as $\varepsilon\to 0$ and where the constant $C>0$ here and below is independent of $\varepsilon$.
We next use the following nonlinear Gr\"{o}nwall's type inequality which was proved in~\cite{stat}, see also~\cite[Theorem 25]{ssd} and~\cite[Corollary 2]{MS}.
\begin{lemma}[B.~Stachurska~\cite{stat}]
Let the functions $u,a,b$ and $k$ be continuous and nonnegative in $J = [\alpha,\beta]$, $n\geq 2$ be a positive integer and assume that $a/b$ is a nondecreasing function. If
\[
u(r)\leq a(r)+b(r)\int_\alpha^r k(s) u(s)^n\,ds,\quad r\in J,
\]
then
\[
u(r)\leq a(r)\left(1-(n-1)\int_\alpha^rk(s)b(s)a(s)^{n-1}\,ds\right)^{\frac{1}{1-n}},\quad \alpha\leq r\leq\beta_n,
\]
where
\[
\beta_n=\sup\left\{r\in J:1-(n-1)\int_\alpha^rk(s)b(s)a(s)^{(n-1)}\,ds>0\right\}.
\]
\end{lemma}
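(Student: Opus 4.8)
The plan is to recognize this as a Bihari-type integral inequality with power nonlinearity and to reduce it, using the hypothesis that $a/b$ is nondecreasing, to the model case in which the inhomogeneous term is the constant $1$; once in that form, separation of variables closes the argument. First I would set
\[
m(r)=a(r)+b(r)\int_\alpha^r k(s)\,u(s)^n\,ds,
\]
so that $u\le m$ on $J$ and, since $u^n\le m^n$, also $m(r)\le a(r)+b(r)\int_\alpha^r k(s)\,m(s)^n\,ds$. Since the argument proceeds by dividing through by $a$ and $b$, I would first treat the case $a(r),b(r)>0$ on $J$ and recover the general (merely nonnegative) case by an approximation argument at the end (e.g.\ replacing $a$ by $a+\varepsilon b$, which preserves the monotonicity of $a/b$, working on subintervals where $b>0$, and letting $\varepsilon\to0^+$); when $b\equiv0$ the statement is trivial.

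\textbf{Normalization step.} Put $\rho(r)=m(r)/a(r)$, so that $\rho\ge u/a$, $\rho(\alpha)=1$, and
\[
\rho(r)\le 1+\frac{b(r)}{a(r)}\int_\alpha^r k(s)\,a(s)^n\,\rho(s)^n\,ds.
\]
This is the one place the hypothesis is used: $a/b$ nondecreasing is equivalent to $b/a$ nonincreasing, so for $\alpha\le s\le r$ one has $\frac{b(r)}{a(r)}\,a(s)\le \frac{b(s)}{a(s)}\,a(s)=b(s)$, hence $\frac{b(r)}{a(r)}a(s)^n\le b(s)a(s)^{n-1}$. Inserting this and writing $K(s):=k(s)b(s)a(s)^{n-1}\ge0$ yields the reduced inequality $\rho(r)\le 1+\int_\alpha^r K(s)\,\rho(s)^n\,ds$, which no longer carries a prefactor in front of the integral.

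\textbf{Separation of variables.} Define $\theta(r)=1+\int_\alpha^r K(s)\rho(s)^n\,ds$; then $\theta\in C^1(J)$, $\theta(\alpha)=1$, $\theta\ge1>0$, $\rho\le\theta$, and $\theta'=K\rho^n\le K\theta^n$. Consequently $\frac{d}{dr}\bigl(\tfrac{1}{1-n}\theta^{1-n}\bigr)=\theta^{-n}\theta'\le K$, and integrating from $\alpha$ to $r$ gives $\theta(r)^{1-n}\ge 1-(n-1)\int_\alpha^r K(s)\,ds=:G(r)$. On the set where $G(r)>0$ — which is exactly the interval $[\alpha,\beta_n)$ of the statement — raising to the negative power $\tfrac{1}{1-n}$ reverses the inequality, so $\theta(r)\le G(r)^{1/(1-n)}$, whence
\[
u(r)\le m(r)=a(r)\rho(r)\le a(r)\theta(r)\le a(r)\,G(r)^{1/(1-n)},
\]
which is the assertion. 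The endpoint $r=\beta_n$ is included for free: either $G(\beta_n)>0$ and the bound extends by continuity, or $G(\beta_n)=0$ and then $G(\beta_n)^{1/(1-n)}=+\infty$.

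\textbf{Where the difficulty lies.} The only genuinely delicate point is the normalization step, i.e.\ correctly absorbing the prefactor $b(r)/a(r)$ into the integral by means of the monotonicity of $a/b$; this is what turns a non-separable integral inequality into a separable one, and it is the sole use of the hypothesis. The remaining analytic nuisance is the passage from positive $a,b$ to merely nonnegative $a,b$; everything else is the classical Bihari/Gr\"onwall computation. I also note that integrality of $n$ plays no role: the argument works verbatim for any real $n>1$, with $\beta_n$ defined by the same supremum.
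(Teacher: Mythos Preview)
The paper does not prove this lemma; it merely quotes it from Stachurska and applies it. Your argument is correct and is essentially the standard proof: normalize by $a$, use the monotonicity of $a/b$ to push the prefactor $b(r)/a(r)$ inside the integral, and then run the Bihari separation-of-variables computation on the resulting inequality $\rho\le 1+\int K\rho^n$. Your remark that integrality of $n$ is irrelevant is also correct.
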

The result applies to our case with
\[
n=2,\quad a(r)=C(r^2+\frac{\varepsilon}{r}D(\varepsilon)),\quad b(r)=\frac{C}{r},\quad k(r)\equiv 1, \quad \alpha=\varepsilon,\quad \beta=r_*
\]
and gives
\[
u(r)\leq \frac{C(r^2+\varepsilon D(\varepsilon)/r)}{1-C^2(r^2-\varepsilon^2)/2+C^2(\varepsilon D(\varepsilon)/r-D(\varepsilon))}.
\]
This estimate holds in the largest interval $[\varepsilon, \beta_2(\varepsilon))\subseteq[\varepsilon,r_*)$ where the denominator is positive. As the latter quantity is bounded below by $1-C^2r^2-C^2D(\varepsilon)$, we obtain that $u$ satisfies 
\[
u(r)\leq \frac{C(r^2+D(\varepsilon))}{1-C^2r^2-C^2D(\varepsilon)},
\]
for $\varepsilon$ sufficiently small and for all $r\in [\varepsilon,r_*)$ such that $\varepsilon\leq r<C^{-1}\sqrt{1-C^2D(\varepsilon)}$. Taking the limit as $\varepsilon\to 0^+$ we conclude that $u(r)\leq Cr^2$ holds for all sufficiently small $r$, that is $\delta'(r)\leq C r$, as claimed in the theorem. Replacing in~\eqref{frometa} we obtain $|\delta-\eta|\leq C r^2$, and thus~\eqref{eqeta} gives the estimate $|\eta'(r)|\leq C r$. Finally, since
\[
p_\mathrm{rad}'(r)=\partial_\delta\prad(\delta,\eta)\delta'+\partial_\eta \prad(\delta,\eta)\eta',
\]
and similarly for the tangential pressure, the proof of~\eqref{decaypressurescenter} follows. As to the identity~\eqref{isopressid}, we compute, using~\eqref{hyperDef},
\[
\chi(\delta)=\frac{3}{2}\delta^2\partial_\eta\widehat{w}(\delta,\delta)\Rightarrow \chi'(\delta)=3\delta\partial_\eta \widehat{w}(\delta,\delta)+\textfrac{3}{2}\delta^2(\partial_\eta\partial_\delta\widehat{w}(\delta,\delta)+\partial_\eta^2\widehat{w}(\delta,\delta)),
\]
as well as $\widehat{p}_\mathrm{iso}(\delta,\eta)=\delta^2\partial_\delta\widehat{w}(\delta,\eta)+\delta\eta\partial_\eta\widehat{w}(\delta,\eta)$, which implies
\[
\partial_\eta \widehat{p}_\mathrm{iso}(\delta,\delta)=\delta^2\partial_\delta\partial_\eta\widehat{w}(\delta,\delta)+\delta\partial_\eta\widehat{w}(\delta,\delta)+\delta^2\partial_\eta^2\widehat{w}(\delta,\delta)=\textfrac{2}{3}(\chi'(\delta)-\chi(\delta)/\delta).\qed
\]

{\bf Example: The Seth model.} The constitutive equations for Seth materials are 
\begin{equation}\label{pseth}
\widehat{p}_\mathrm{rad}(\delta,\eta)=\lambda\,\eta^{2/3}+\frac{\lambda+2\mu}{2}\eta^{-4/3}\delta^2-p_0,\quad
\widehat{p}_\mathrm{tan}(\delta,\eta)=(\lambda+\mu)\,\eta^{2/3}+\frac{\lambda}{2}\eta^{-4/3}\delta^2-p_0,
\end{equation}
where $\lambda,\mu$ are the Lam\'e material constants, with $\mu>0$ and $p_0=(3\lambda+2\mu)/2>0$. For this material model the identities~\eqref{chi} hold and moreover 
\[
\partial_\delta\prad(\delta,\delta)=\frac{(\lambda+2\mu)}{\delta^{1/3}}>0.
\]
In~\cite{AC18} we have proved that if (and only if) $\delta_c>1$ there exists a unique regular static self-gravitating ball of Seth elastic matter with $\delta(0)=\rho_c/\mathcal{K}=\delta_c$. It follows by Theorem~\ref{regtheo} in the present paper that these balls are strongly regular.

\section{Assumptions on the constitutive functions}\label{assumptions}
Before presenting the assumptions on the constitutive functions required in Theorem~\ref{maintheo}, it is convenient to express the constitutive equations and the system~\eqref{CPSS_Rho} in terms of the new variables 
 \[
 x(r)=\eta(r)^{1/3},\quad y(r)=\frac{\delta(r)}{\eta(r)}.
 \]
 Let
 \begin{equation}\label{newconsteqs}
 p_\mathrm{rad}(r)=\Prad(x(r),y(r)),\quad p_\mathrm{tan}(r)=\Ptan(x(r),y(r))
  \end{equation}
  be the constitutive equations of the elastic ball in the variables $x,y$; we always assume that $\Prad,\Ptan\in C^2((0,\infty)^2)$. For hyperelastic materials~\eqref{hyperDef} gives
  \begin{equation}\label{hyperxy}
  \mathcal{P}_\mathrm{rad}(x,y)= x^3 y^2\,\partial_y \mathcal{W}(x,y),\quad
\mathcal{P}_\mathrm{tan}(x,y)=\frac{1}{2} x^3y (x\partial_x \mathcal{W}(x,y)-y\partial_y\mathcal{W}(x,y)),
  \end{equation}
  where $\mathcal{W}(x,y)=\widehat{w}(\delta,\eta)$ and $\mathcal{W}\in C^3((0,\infty)^2)$.

  {\it Remark.} We choose the power 1/3 in the definition of $x$ so that the constitutive functions $\Prad,\Ptan$ for the elastic materials in Section~\ref{Examples} are rational functions of $(x,y)$. Any other exponent would work.

  In terms of the variables $x,y$, the system~\eqref{CPSS_Rho} reads
  \begin{subequations}\label{newsystem}
\begin{align}
&r\frac{dx}{dr}=-x(1-y),\\
&r\partial_y \Prad(x,y)\frac{dy}{dr}=x\partial_x\Prad(x,y)(1-y)+2(\Ptan(x,y)-\Prad(x,y))-\frac{4\pi\mathcal{K}^2}{3}r^2x^6y.
\end{align}
\end{subequations}

\begin{definition}\label{regdef}
A solution $(x,y)$ of~\eqref{newsystem} in the interval $[0,r_*)$ will be called regular if $(x,y)\in C^0([0,r_*))\cap C^1((0,r_*))$,  $x(r),y(r)>0$, for $r\in [0,r_*)$,
%\item[(ii)] $\partial_y\Prad(x(r),y(r))>0$, for $r\in [0,r_*)$,
and $\lim_{r\to 0^+}y(r)=1$.
\end{definition}

%\subsection{Assumptions}

We are now ready to introduce our assumptions on the constitutive functions. 
It will be shown in Section~\ref{powerlawmaterials} that some of these assumptions are always satisfied by power-law hyperelastic materials,  see  Proposition~\ref{powerprop}.
We begin by requiring some standard conditions in elasticity theory~\cite{AC18,AC19}.
%\textcolor{red}{ASSUMPTIONS 1-2 WILL BE REPLACED BY THE ASSUMPTION THAT THE MATERIAL IS A POWER-LAW MATERIAL}
\begin{assumption}\label{As1}
	The constitutive functions~\eqref{newconsteqs} are such that the reference configuration of the elastic ball is stress-free:
	\[
	\Prad(1,1)=\Ptan(1,1)=0,
	\]
	%
	%and that the store energy function $w(\delta,\eta)$ is such that 
	%$$
	%w(1,1)=0.
	%$$
	and linear elasticity applies near the reference configuration: 
	%
	%\begin{subequations}\label{Hooke}
	\begin{align*}
	&\partial_y\Prad(1,1)=\lambda+2\mu, &\partial_x\Prad(1,1)=3\lambda+2\mu,\\
	&\partial_y\Ptan(1,1)=\lambda, &\partial_x\Ptan(1,1)=3\lambda+2\mu,
	\end{align*}
   % \end{subequations}
where the Lam\'e coefficients $\lambda,\mu$ satisfy $\mu\geq0$ and $\lambda+2\mu>0$.  
%For elastic materials satisfying linear Hooke's law the shear modulus satisfies $\mu> 0$ ($\mu=0$ for fluids) and the bulk modulus $k=\lambda+2\mu/3>0$. More generally, conditions based on the reality of (irrotational/longitudinal) $\mathrm{P}$-waves and (shear) $\mathrm{S}$-waves imply in particular that, 
%\begin{equation}
%c_{\mathrm{P}}(1,1)=\sqrt{(\lambda+2\mu)/\mathcal{K}}>0 \qquad \qquad c_{\mathrm{S}}(1,1)=\sqrt{\mu/\mathcal{K}}>0
%\end{equation}	
	%respectively.}	
%where $\lambda,\mu$ are the constant Lam\'e coefficients of the elastic material and satisfy $\mu\geq 0$ and $\lambda+2\mu>0$.
	%$$
	%
	%demands $\partial_\delta \prad(1,1)>0$. \\
	\end{assumption}
%
%{\it Remark.} Equations~\eqref{Hooke} in the variables $\delta,\eta$ read
%		\begin{align*}
%		&\partial_{\delta} \prad (1,1) = \lambda+2\mu, \quad  \partial_{\eta} \prad (1,1) =-\frac{4}{3}\mu\\
%		&\partial_{\delta} \ptan (1,1) = \lambda, \quad  \partial_{\eta} \ptan (1,1) = \frac{2\mu}{3} ,
%		\end{align*}
%	which were shown in~\cite{AC18} to be equivalent (for spherically symmetric bodies) to the conditions of compatibility with linear elasticity that are commonly imposed in the Lagrangian formulation of elasticity theory.
	
{\it Remark.}
	The Lam\'e coefficients in the constitutive functions depend on the material making up the body, i.e., whether it is rubber, steel, copper, etc., see~\cite[p.~129]{Ciarlet83} for a table of values. A body made of a given material (i.e., with given parameters $\lambda,\mu)$ is described by different constitutive equations depending on the amount of strain in the body. Assumption~\ref{As1} requires that for infinitesimal strain (small deformations compared to the size of the body) linear elasticity theory applies, see~\cite{AC18,AC19} for more details.

{\it Remark.} The parameter $\mu$ is also called shear modulus of the material; for fluids, and only in this case, it is given by $\mu=0$. When $\mu>0$ the speed of linear shear (or transversal) elastic waves is defined as $c_\mathrm{T}=\sqrt{\mu/\mathcal{K}}$. The bound $\lambda+2\mu>0$ is also a standard condition on the Lam\'e coefficients (strong ellipticity~\cite{MH}) and implies in particular that the speed of the linear longitudinal elastic waves $c_\mathrm{L}=\sqrt{(\lambda+2\mu)/\mathcal{K}}$ is well-defined.
%Some references in elasticity (e.g.~\cite{Ciarlet83}) demands $\lambda>0$, but we shall not need it. 
%We also remark that some of the assumptions below restrict further the values of $\lambda,\mu$. For the examples of elastic materials in Section~\ref{Examples} these additional constraints amount to require that the bulk modulus is positive and thus have a solid physical motivation.

%{\it Remark.} All power-law hyperelastic elastic materials satisfy Assumption~\ref{As1}, see Proposition~\ref{powerprop} in Section~\ref{Examples}.
%
%

%	%
%
Next we require that the regular center condition $p_\mathrm{rad}(0)=p_\mathrm{tan}(0)$ in Definition~\ref{balldef} should be satisfied for all possible values of the center density $\rho(0)$, that is to say $\chi(\delta)=\prad(\delta,\delta)-\ptan(\delta,\delta)=\Prad(\delta^{1/3},1)-\Ptan(\delta^{1/3},1)=0$, for all $\delta>0$. 
\begin{assumption}\label{samepcenter}
There holds $\Prad(x,1)=\Ptan(x,1)$, for all $x>0$. 
\end{assumption}
%{\color{red} We shall need a stronger assumption:
%\begin{assumption}\label{2ndC1}
%	We assume
%	\begin{subequations}
%		\begin{align}
%		&\Prad (x,1) = \Ptan (x,1) \\
%		&x\partial_x  \Prad (x,1) = x\partial_x  \Ptan (x,1) \\
%		&x\partial_x\Prad (x,1)=\partial_y\Prad(x,1)+2\partial_y \Ptan(x,1) 
%		\end{align}
%	\end{subequations}
%	%
%	for all $x>0$.
%\end{assumption}
%}
%{\it Remark.} 
%For hyperelastic materials Assumption~\ref{samepcenter} is equivalent to $x\partial_x\mathcal{W}(x,1)=3\partial_y\mathcal{W}(x,1)$. 
%All power-law hyperelastic materials satisfy Assumption~\ref{samepcenter}, see Proposition~\ref{powerprop} in Section~\ref{Examples}.
 
It is convenient to define 
\begin{equation}\label{Thetadef}
\Theta(x,y)=\frac{\Ptan(x,y)-\Prad(x,y)}{1-y},
\end{equation}
which, due to Assumptions~\ref{As1}-\ref{samepcenter}, satisfies $\Theta(x,y)\in C^0((0,\infty)^2)$ and
\[ 
\Theta(x,1)=\partial_y\Prad(x,1)-\partial_y\Ptan(x,1),\quad \Theta(1,1)=2\mu.
\] 
In the following assumption we impose that in the limits $x,y\to 0^+$ the material model should behave as a power-law material.
\begin{assumption}\label{gamma-ups}
There exist (necessarily unique) $a\in\R$, $b\in\R$, $c\in\R$ such that the functions
\begin{align*}
&\Gamma(x,y)=(\lambda+2\mu)^{-1}x^ay^{b}\partial_y\Prad(x,y),\\
&\Upsilon(x,y)=(\lambda+2\mu)^{-1}x^a y^{b-1+c}\left(x\partial_x\Prad(x,y)+2\Theta(x,y)\right)
\end{align*}
satisfy $\Gamma,\Upsilon\in C^0([0,\infty)^2)$ and $\Gamma(0,0)\neq0$, $\Upsilon(0,0)\neq0.$
%\begin{align}
%&\Gamma(0,y)>0,\quad \Upsilon(0,y)>0,\quad \text{ for all $0<y<1$,}\label{positivity}\\
%&\Gamma(x,0)>0,\quad \Upsilon(x,0)>0,\quad\text{for all $0<x<X$.}\label{positivity2}
%\end{align} 
%In the following we denote by $a,b$ the minimum powers for which this assumptions holds. 
%Moreover we assume that, when $b=0$, 
%\begin{equation}\label{exception}
%\Gamma(x,y)=\Gamma_0(y),\quad \Upsilon(x,y)=\Upsilon_0(y),\quad \text{ for all $(x,y)\in (0,\infty)\times (0,1)$.}
%\end{equation}
\end{assumption}
%{\it Remark.} The only reason to assume the exceptional condition~\eqref{exception} when $b=0$ is to include the Seth model among the materials to which Theorem~\ref{maintheo} applies. Materials for which $b=0$ but $\Gamma,\Upsilon$ are not independent of $x$ for $y\in (0,1)$ are not covered by our theorem.

%{\it Remark.} By Assumption~\ref{As1} we have $\Gamma(1,1)=1$ and $\Upsilon(1,1)=3$.
%	, while by Assumption~\ref{2ndC1}, and using~\eqref{Theta2} 
%\begin{equation}
%\Upsilon(x,1)=3\Gamma(x,1)\quad \text{for all}\quad x>0\,.
%\end{equation}

We need to impose restrictions on the parameters $a,b,c$ introduced in Assumption~\ref{gamma-ups}.

\begin{assumption}\label{abcass}
The exponents $a,b,c$ in Assumption~\ref{gamma-ups} satisfy
\[
a>-4,\quad b>0,\quad c=0.
\]
\end{assumption}

%{\it Remark.} Power-law hyperelastic materials satisfy Assumption~\ref{gamma-ups} although the constants $a,b,c$ might not satisfy the conditions required in Assumption~\ref{abcass}. Moreover for these material models the condition $c=0$ follows by $b>0$, see  Proposition~\ref{powerprop} in Section~\ref{Examples}.

{\it Remark.} The conditions $b>0$ and $c=0$ imposed in Assumption~\ref{abcass} are the most significant restrictions on the constitutive functions.
% and are introduced solely for mathematical convenience: they allow us to avoid rather technical and lengthy center manifold analyses in the proof of Theorem~\ref{maintheo}.  
As shown in Proposition~\ref{powerprop} in Section~\ref{powerlawmaterials}, for power-law hyperelastic materials the condition $c=0$ follows by $b>0$. 
%However power-law materials for which $b\leq 0$ may violate the condition $c=0$. 
In Section~\ref{Examples} we show that Assumption~\ref{abcass} is violated by two important examples of elastic material models, namely the Seth model ($b=-1, c=2$) and the Signorini model ($b=-1, c=1$). For the former type of materials the conclusions of Theorem~\ref{maintheo} still hold, see~\cite{AC18}, which shows that Assumption~\ref{abcass} is stronger than necessary. The problem for Signorini materials is still unsolved, although we conjecture that Theorem~\ref{maintheo} also holds in this case.

%while under Assumption~\ref{2ndC1}
%\begin{equation}\label{Theta2}
%\Theta(x,1)=-\frac{1}{2}\left(x\partial_x\Prad(x,1)-3\partial_y\Prad(x,1)\right)\,.
%\end{equation}

To motivate our next assumption we remark that, by Assumption~\ref{As1}, $\Gamma(1,1)=1$ and $\Upsilon(1,1)=3$, hence the inequalities $\Gamma(x,1)>0$ and $\Upsilon(x,1)>0$ are satisfied in a open interval around $x=1$ by continuity. In the following two assumptions we impose that these inequalities can only be violated for large values of $x$.

\begin{assumption}[{\bf i}]\label{initial-reg}
There exists (a necessarily unique) $X_\flat\in (1,\infty]$ such that $\Gamma(x,1)>0$ for all $0<x<X_\flat$ and if $X_\flat<\infty$ then $\Gamma(X_\flat,1)=0$. 
\end{assumption}
\setcounter{assumption}{4} 
\begin{assumption}[{\bf ii}]\label{initial-reg2}
There holds $\Upsilon(x,1)>0$, for all $x\in (0,X_\flat)$.
\end{assumption}
%{\it Remark.} We must have $X>1$, because {\color{red} $\partial_x\Prad(1,1)+2\Theta(1,1)=3\partial_y\Prad(1,1)=3(\lambda+2\mu)>0$} by Assumption~\ref{As1}.
The reason to split Assumption~\ref{initial-reg} in two parts is that in the important case of hyperelastic materials Assumption~\ref{initial-reg2}(ii) follows by Assumptions~\ref{samepcenter} and~\ref{initial-reg}(i), as proved in the following simple lemma.

\begin{lemma}\label{hypergamma}
For hyperelastic materials satisfying Assumption~\ref{samepcenter} the following identity holds
\begin{equation}\label{hypcond}
x\partial_x\Prad(x,1)+2\Theta(x,1)=3\partial_y\Prad(x,1),\text{ i.e., $\Upsilon(x,1)=3\Gamma(x,1)$.}
\end{equation}
In particular Assumption~\ref{initial-reg2}(ii) follows by Assumptions~\ref{samepcenter} and~\ref{initial-reg}(i) when the material is hyperelastic.
\end{lemma}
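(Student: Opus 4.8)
The plan is to establish the identity \eqref{hypcond} by a direct computation starting from the hyperelastic formulas \eqref{hyperxy}, restricted to the line $y=1$, and then read off the consequence for $\Upsilon$ versus $\Gamma$. First I would write down $\partial_y\Prad$, $\partial_x\Prad$ and the combination $\Ptan-\Prad$ in terms of $\mathcal{W}$ and its first and second partials, using \eqref{hyperxy}: from $\Prad = x^3 y^2 \partial_y\mathcal{W}$ we get $\partial_y\Prad = 2x^3 y\,\partial_y\mathcal{W} + x^3 y^2\,\partial_y^2\mathcal{W}$ and $\partial_x\Prad = 3x^2 y^2\,\partial_y\mathcal{W} + x^3 y^2\,\partial_x\partial_y\mathcal{W}$; similarly from the expression for $\Ptan$ one obtains $\Ptan-\Prad = \tfrac12 x^4 y\,\partial_x\mathcal{W} - \tfrac12 x^3 y^2\,\partial_y\mathcal{W} - x^3 y^2\,\partial_y\mathcal{W} = \tfrac12 x^4 y\,\partial_x\mathcal{W} - \tfrac32 x^3 y^2\,\partial_y\mathcal{W}$.

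Next I would use Assumption~\ref{samepcenter}, which says $\Prad(x,1)=\Ptan(x,1)$ for all $x>0$, i.e. $\chi\equiv 0$ along $y=1$. In the $(x,y)$ variables this reads $\Ptan(x,1)-\Prad(x,1)=0$, which by the formula just derived (at $y=1$) gives $\tfrac12 x^4\,\partial_x\mathcal{W}(x,1) - \tfrac32 x^3\,\partial_y\mathcal{W}(x,1)=0$, hence $x\,\partial_x\mathcal{W}(x,1) = 3\,\partial_y\mathcal{W}(x,1)$. Differentiating this scalar identity in $x$ yields a relation between the second derivatives of $\mathcal{W}$ along the line $y=1$, namely $\partial_x\mathcal{W}(x,1) + x\,\partial_x^2\mathcal{W}(x,1) = 3\,\partial_x\partial_y\mathcal{W}(x,1)$; I will also need $\Theta(x,1) = \partial_y\Prad(x,1)-\partial_y\Ptan(x,1)$, which (from the remark after \eqref{Thetadef}) I may take as given, and which can likewise be expanded via \eqref{hyperxy}. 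Substituting all of these into $x\partial_x\Prad(x,1) + 2\Theta(x,1)$ and collecting terms, the first-order-in-$\mathcal{W}$ and second-order-in-$\mathcal{W}$ contributions should reorganize exactly into $3\big(2x^3\,\partial_y\mathcal{W}(x,1) + x^3\,\partial_y^2\mathcal{W}(x,1)\big) = 3\,\partial_y\Prad(x,1)$, which is \eqref{hypcond}. The claim $\Upsilon(x,1)=3\Gamma(x,1)$ is then immediate from the definitions in Assumption~\ref{gamma-ups}, since both $\Gamma$ and $\Upsilon$ at $y=1$ carry the same prefactor $(\lambda+2\mu)^{-1}x^a$ (the $y$-powers being $1$). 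Finally, since $\Gamma(x,1)>0$ on $(0,X_\flat)$ by Assumption~\ref{initial-reg}(i), the identity forces $\Upsilon(x,1)=3\Gamma(x,1)>0$ there, which is exactly Assumption~\ref{initial-reg2}(ii).

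The only mildly delicate point is bookkeeping: making sure the expansion of $\Theta(x,1)$ in terms of $\mathcal{W}$ is done consistently (either directly from $\partial_y\Prad-\partial_y\Ptan$ using \eqref{hyperxy}, or via the limit definition \eqref{Thetadef} together with Assumption~\ref{samepcenter}) so that the second-derivative terms $\partial_x^2\mathcal{W}$, $\partial_x\partial_y\mathcal{W}$, $\partial_y^2\mathcal{W}$ cancel in the right combination. I expect this to be the main — though still routine — obstacle; everything else is algebra on \eqref{hyperxy}. One should keep in mind that $\mathcal{W}\in C^3$, so all the partials used and the one differentiation in $x$ of the constraint $x\partial_x\mathcal{W}(x,1)=3\partial_y\mathcal{W}(x,1)$ are legitimate.
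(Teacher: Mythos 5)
Your plan is correct and is essentially the paper's own argument: both proofs obtain \eqref{hypcond} by direct algebra on \eqref{hyperxy} combined with Assumption~\ref{samepcenter}, use $\Theta(x,1)=\partial_y\Prad(x,1)-\partial_y\Ptan(x,1)$, and then deduce Assumption~\ref{initial-reg2}(ii) from $\Upsilon(x,1)=3\Gamma(x,1)>0$ on $(0,X_\flat)$. The only organizational difference is that the paper first derives the $y$-independent identity $x\partial_x\Prad-y\partial_y\Prad-2y\partial_y\Ptan=2(\Prad-\Ptan)$ and specializes to $y=1$, while you specialize first via $x\partial_x\mathcal{W}(x,1)=3\partial_y\mathcal{W}(x,1)$; note also that the differentiated constraint $\partial_x\mathcal{W}(x,1)+x\partial_x^2\mathcal{W}(x,1)=3\partial_x\partial_y\mathcal{W}(x,1)$ you anticipate needing is in fact superfluous, since no $\partial_x^2\mathcal{W}$ terms ever appear and the $\partial_x\partial_y\mathcal{W}$ contributions cancel directly between $x\partial_x\Prad(x,1)$ and $2\Theta(x,1)$.
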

\begin{proof}
Equation~\eqref{hypcond} is equivalent to 
\[
x\partial_x\Prad (x,1)=\partial_y\Prad(x,1)+2\partial_y \Ptan(x,1).
\]
Using~\eqref{hyperxy} we find
\[
x\partial_x\Prad (x,y)-y\partial_y\Prad(x,y)-2y\partial_y \Ptan(x,y)=2(\Prad(x,y)-\Ptan(x,y)),
\]
hence the identity~\eqref{hypcond} follows by Assumption~\ref{samepcenter}.
\end{proof}

{\it Remark.} For hyperelastic materials satisfying Assumption~\ref{samepcenter}, Equation~\eqref{hypcond} is equivalent to $\partial_\eta\widehat{p}_\mathrm{iso}(\delta,\delta)=0$, where $\widehat{p}_\mathrm{iso}(\delta,\eta)$ is the isotropic pressure~\eqref{isopressdef}, hence Lemma~\ref{hypergamma} follows by the last statement in Theorem~\ref{regtheo}. The short proof in the variables $x,y$ is given here to make the proof of Theorem~\ref{maintheo} independent of Theorem~\ref{regtheo}.

In the next assumption we impose additional regularity and some inequalities on the functions $\Gamma,\Upsilon$ defined in Assumption~\ref{gamma-ups}, which will be used to prove global existence and to study the asymptotic behavior of regular solutions to the system~\eqref{newsystem}.

\begin{assumption}\label{hardone}
$\Gamma,\Upsilon\in C^1([0,\infty)^2)$. Moreover there holds
\begin{equation}\label{positivityGamma}
\Gamma(x,y)> 0, \quad\text{for all $(x,y)\in[0,X_\flat)\times[0,1)$}
\end{equation}
%\end{equation}
and $\Upsilon_0(y)=\Upsilon(0,y)$ satisfies
\begin{equation}\label{dulac-condition}
\Upsilon_0(y)>0,\quad (b\Upsilon_0(y)-y\Upsilon'_0(y))(1-y)+y\Upsilon_0(y)>0,\quad \text{ for all $0<y<1$.}
\end{equation}
\end{assumption}
{\it Remark.} Due to Assumptions~\ref{gamma-ups} and~\ref{initial-reg}(i), the inequalities~\eqref{positivityGamma} and~\eqref{dulac-condition} are also valid respectively for $y=1$ and $y=0$.  
%{\it Remark.} By Assumption~\ref{initial-reg}(i)-(ii)  we also have 
%\begin{equation}
%\Gamma(x,1)>0, \  \Upsilon(x,1)>0 \quad\text{for all $0<x<X$.}\label{positivity3}
%\end{equation}
%Moreover for hyperelastic materials satisfying Assumption~\ref{samepcenter} there holds $\Upsilon(x,1)=3\Gamma(x,1)$, which is equivalent to~\eqref{hypcond}.

%{\it Remark.} 
%For power-law hyperelastic materials satisfying Assumption~\ref{gamma-ups} with $b>0$ and $a\geq -3$, the inequality~\eqref{dulac-condition} follows by the bound $\Gamma(0,y)>0$ in Assumption~\ref{hardone}, see Proposition~\ref{powerprop} in Section~\ref{Examples}.
%The bound $\Gamma(0,0)>0$ may restrict the Lam\'e parameters to satisfy some additional standard inequalities. For instance, for Saint Venant-Kirchhoff and John materials, the inequality $\Gamma(0,0)>0$ is equivalent to $3\lambda+2\mu>0$, i.e., to the bulk modulus $\kappa=\lambda+2\mu/3$ being positive; see Section~\ref{Examples}.

The assumptions thus far are sufficient to prove existence and uniqueness of global regular solutions to the system~\eqref{newsystem}, see Theorem~\ref{global-regular} in Section~\ref{proof}. The elastic balls in Theorem~\ref{maintheo} will be constructed by truncating these regular solutions at a proper finite radius. For this purpose we need two more assumptions.
The first one is a generic requirement which ensures that the tangential pressure is positive in the interior of the ball.

\begin{assumption}\label{ptanpositive}
There exists (a necessarily unique) $X_\sharp \in (1,\infty]$ such that $\Prad(x,y)\geq0$ and $(x,y)\in (0,X_\sharp )\times (0,1)$ imply $\Ptan(x,y)>0$ and if $X_\sharp <\infty$ then $\Prad(X_\sharp ,y_*)=0$ for some $y_*\in (0,1)$ implies $\Ptan(X_\sharp ,y_*)=0$.
\end{assumption}
Let 
\begin{equation}\label{xflat}
X=\min(X_\flat,X_\sharp )\in (1,\infty].
\end{equation}
The elastic balls in Theorem~\ref{maintheo} will be constructed by truncating regular solutions of~\eqref{newsystem} with center datum $x(0)\in (1,X)$. 
The next final assumption is introduced to ensure that the radial pressure can be chosen positive at the center and that it becomes negative for large radii along regular solutions of~\eqref{newsystem}, so that in particular it must vanish at some finite radius if it is positive at the center. Since $x(0)>1$ and, as we prove in Theorem~\ref{global-regular}, $x(r)\to 0$ as $r\to\infty$, it suffices to require the following.

\begin{assumption}\label{neg-press}
 There holds $\Prad(x,1)>0$, for $x\in (1,X)$.
Moreover 
\begin{equation}\label{neg}
\Prad(1,y)<0,\quad\text{for all $0<y<1$}.
\end{equation}
\end{assumption}
{\it Remark.} The inequality~\eqref{neg} is stronger than necessary and may be replaced by the condition that the radial pressure becomes negative for some value of $x\in (0,1]$ and any given $y\in (0,1)$. We choose to express Assumption~\ref{neg-press} in this simple form because~\eqref{neg} is easily provable for all examples of elastic materials in Section~\ref{Examples}. We also remark that~\eqref{neg} implies the bound $m(r)>(4\pi/3)\mathcal{K}r^3$ (i.e., $x(r)>1$) in the interior of the ball which is claimed in~\eqref{bounds}, hence if~\eqref{neg} is weakened this bound might not be true anymore.

\section{Proof of Theorem~\ref{maintheo}}\label{proof}
Theorem~\ref{maintheo} will be proved as a simple corollary of the following result.
\begin{theorem}\label{global-regular}
Let Assumptions~\ref{As1}--\ref{hardone} hold.
For all $x_c\in (0,X_\flat)$ there exists a unique global regular solution $(x(r),y(r))$ of~\eqref{newsystem} satisfying $\lim_{r\to 0^+}x(r)=x_c$. Moreover $x(r)<x_c$, $y(r)<1$, and $\partial_y\Prad(x(r),y(r))>0$ for all $r>0$,
\begin{equation}\label{limit}
\lim_{r\to\infty}(x(r),y(r))=(0,y_\star),\quad y_\star=\frac{a+4}{a+6},
\end{equation}
and $x(r)=O(r^{-2/(6+a)})$ as $r\to\infty$, where $a>-4$ is defined in Assumption~\ref{gamma-ups}.
\end{theorem}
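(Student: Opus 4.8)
The plan is to turn \eqref{newsystem} into an autonomous three–dimensional dynamical system, to realise the regular solution as the physical branch of the unstable manifold of a suitable fixed point, and then to control the global flow by monotonicity together with a Poincar\'e--Bendixson analysis on an invariant plane. First I would set $\tau=\ln r$, so that $r\,d/dr=d/d\tau$, denoted by a dot, and introduce $u=\frac{4\pi\mathcal K^{2}}{3}r^{2}x^{6+a}$; the first equation of \eqref{newsystem} then gives $\dot u=u\big((6+a)y-(4+a)\big)$, and multiplying the second by $x^{a}$ and using $c=0$ (Assumption~\ref{abcass}), $x^{a}\partial_y\Prad=(\lambda+2\mu)y^{-b}\Gamma$, $x^{a}(x\partial_x\Prad+2\Theta)=(\lambda+2\mu)y^{1-b}\Upsilon$ and $\Ptan-\Prad=(1-y)\Theta$ (with $\Theta$ as in \eqref{Thetadef} and $\Gamma,\Upsilon$ as in Assumption~\ref{gamma-ups}), the system becomes
\begin{align*}
\dot x&=-x(1-y),\\
\Gamma(x,y)\,\dot y&=(1-y)\,y\,\Upsilon(x,y)-\frac{y^{b+1}}{\lambda+2\mu}\,u,\\
\dot u&=u\big((6+a)y-(4+a)\big),
\end{align*}
which is autonomous and $\mathcal K$–independent on $\{x\ge0,\ 0\le y\le1,\ u\ge0\}$, with invariant coordinate planes $\{x=0\}$, $\{y=0\}$, $\{u=0\}$ (the last two because $b>0$). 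Since $ux^{-(6+a)}e^{-2\tau}$ is constant along orbits, solutions of \eqref{newsystem} with reference density $\mathcal K$ correspond, after the time shift fixing that constant to $\frac{4\pi\mathcal K^{2}}{3}$, exactly to the $u>0$ orbits of this system; a regular solution is then an orbit emanating as $\tau\to-\infty$ from a point $P_{c}=(x_{c},1,0)$ of the line of fixed points $L=\{(x,1,0):x>0\}$, and \eqref{limit} is the statement that it converges as $\tau\to+\infty$ to $P_{\star}=(0,y_{\star},u_{\star})$ with $y_{\star}=\frac{a+4}{a+6}$ and $u_{\star}=(\lambda+2\mu)(1-y_{\star})y_{\star}^{-b}\Upsilon(0,y_{\star})>0$.

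Because $x_{c}\in(0,X_{\flat})$, Assumptions~\ref{initial-reg}(i) and~\ref{initial-reg2}(ii) give $\Gamma(x_{c},1),\Upsilon(x_{c},1)>0$, so the linearisation at $P_{c}$ is block–triangular with eigenvalues $0$ (tangent to $L$), $-\Upsilon(x_{c},1)/\Gamma(x_{c},1)<0$, and $2>0$; hence $P_{c}$ has a unique one–dimensional unstable manifold, one branch of which enters $\{x<x_{c},\ y<1,\ u>0\}$ and the other $\{u<0\}$. The first branch is the unique regular solution, with $u\sim\mathrm{const}\cdot r^{2}$ and $1-y\sim\mathrm{const}\cdot r^{2}$ as $r\to0^{+}$, and $\lim_{r\to0^{+}}x(r)=x_{c}$. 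For global continuation I would note that $\{y=0\}$ is invariant and that on $\{y=1\}$ one has $\dot y=-u/\big((\lambda+2\mu)\Gamma(x,1)\big)<0$ since $\Gamma(x,1)>0$ for $x<X_{\flat}$; hence $\{0<y<1\}$ is forward invariant, so $\dot x=-x(1-y)<0$ and $x$ decreases strictly, remaining in $(0,x_{c})\subset(0,X_{\flat})$, whence $\Gamma(x,y)>0$ by \eqref{positivityGamma} and $\partial_y\Prad>0$ along the whole solution. On any bounded $\tau$–interval $(x,y)$ stays in a compact subset of $\{\Gamma>0\}$ and $\dot u/u=(6+a)y-(4+a)$ is bounded, so the solution cannot blow up or reach the boundary of the domain in finite time: it is global.

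For the behaviour as $r\to\infty$, monotonicity gives $x(r)\to x_{\infty}\ge0$. If $x_{\infty}>0$ then, since $\dot x=0$ forces $y=1$ and there $\dot y<0$ unless $u=0$, the $\omega$–limit set is the single point $(x_{\infty},1,0)\in L$, so the orbit lies in the one–dimensional stable manifold of that point, which is contained in $\{u=0\}$, contradicting $u>0$. Hence $x(r)\to0$, and the $\omega$–limit set is a compact connected invariant subset of $\{x=0\}$, on which the system reduces to a planar flow in $(y,u)$ over $(0,1)\times(0,\infty)$ whose only fixed points are $P_{0}=(0,0,0)$, $P_{1}=(0,1,0)$, $P_{\star}$. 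After checking that $u$ stays bounded along the orbit (a barrier estimate exploiting that large $u$ forces $\dot y<0$, hence $y<y_{\star}$, hence $\dot u<0$), the Dulac criterion with the multiplier provided by \eqref{dulac-condition} excludes periodic orbits, and an inspection of the stable/unstable manifolds of $P_{0}$ and $P_{1}$ excludes heteroclinic cycles, so by Poincar\'e--Bendixson the $\omega$–limit set is a single fixed point. It is not $P_{0}$ (near $y=0$ one has $\dot y>0$ since $\Upsilon_{0}(0)>0$ and $b>0$, so $y\not\to0$) nor $P_{1}$ (near $P_{1}$, $\dot u/u\to2>0$, so $u\not\to0$); therefore it is $P_{\star}$, which is \eqref{limit}. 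Finally $u\to u_{\star}\in(0,\infty)$ means $r^{2}x^{6+a}=\frac{3u}{4\pi\mathcal K^{2}}$ is bounded, giving $x=O(r^{-2/(6+a)})$.

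I expect this last, asymptotic, step to be the main obstacle: proving that $u$ stays bounded, verifying that \eqref{dulac-condition} genuinely yields a Dulac function for the planar boundary system, and excluding heteroclinic graphics all require real work, whereas the autonomous reduction, the local unstable–manifold construction, and the global continuation are comparatively routine once the bookkeeping of the change of variables and the time shift encoding $\mathcal K$ is handled with care.
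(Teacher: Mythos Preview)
Your overall architecture is the paper's: autonomise \eqref{newsystem} in three variables, read off the regular solution as the unstable branch of a normally hyperbolic line of equilibria, trap the orbit in $(0,x_c)\times(0,1)$ using $\Gamma>0$, force $x\to0$, and finish with Dulac/Poincar\'e--Bendixson on $\{x=0\}$. The one substantive divergence is the choice of the third variable. The paper takes
\[
v=\frac{4\pi\mathcal K^{2}}{3(\lambda+2\mu)}\,r^{2}x^{6+a}y^{b}=\frac{y^{b}}{\lambda+2\mu}\,u,
\]
and reparametrises time by $d\xi=\Gamma^{-1}\,d\tau$, so that $dv/d\xi\le(D-bv)v$ for a constant $D$; this is exactly where the hypothesis $b>0$ of Assumption~\ref{abcass} does its work, and it gives boundedness of $v$ in one line. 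The same $y^{b}$ factor makes the Dulac multiplier $\phi=y^{-(1+b)}v^{-1}$ produce precisely the left side of \eqref{dulac-condition}, and it pulls the extra boundary fixed point $\mathrm{Q}_2$ (which in your $u$--coordinate sits at $u=+\infty$) into the finite picture.

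Without that factor your outline has a genuine ordering problem. Your argument for $x_\infty=0$ invokes the $\omega$--limit set and the stable manifold of $(x_\infty,1,0)$, which presupposes a bounded forward orbit; but you only address boundedness of $u$ afterwards, and the sketched barrier ``large $u$ forces $\dot y<0$'' breaks down for small $y$, since the $u$--term in $\dot y$ carries $y^{b+1}$ and is dominated there by $(1-y)y\,\Upsilon_0(y)$ (indeed, for any fixed $u$ one has $\dot y>0$ near $y=0$). The paper avoids this by proving $v$ bounded \emph{first}, then deducing $x\to0$ via a Barbalat-type argument (uniform continuity of $dx/d\xi$), and only then descending to the planar system. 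The repair is cheap: either work with $v=y^{b}u/(\lambda+2\mu)$ from the outset, or keep $u$ but establish that $y^{b}u$ is bounded before any $\omega$--limit reasoning.
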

Before proving Theorem~\ref{global-regular} we show how Theorem~\ref{maintheo} follows from it.
\subsection*{Proof of Theorem~\ref{maintheo}}
 The constant $\Delta$ in the theorem is given by
\begin{equation}\label{deltaflat}
\Delta=X^3,\quad X=\min(X_\flat,X_\sharp )\in (1,\infty],
\end{equation}
see~\eqref{xflat}, where $X_\flat,X_\sharp $ are defined in Assumptions~\ref{initial-reg},~\ref{ptanpositive}. Let $\rho_c,\mathcal{K}>0$ be given such that~\eqref{initialdata} hold with $\Delta$ given by~\eqref{deltaflat}. Let $x_c=\delta_c^{1/3}\in (1,X)$ and let $(x(r),y(r))$ be the global regular solution of~\eqref{newsystem} satisfying $\lim_{r\to 0^+}x(r)=x_c$. This solution satisfies $x(r)<x_c<X$ and $y<1$, for all $r>0$, see Theorem~\ref{global-regular}. Define $\delta(r)=x(r)^3y(r)>0$, $\eta(r)=x(r)^3>0$, which is the unique global regular solution of~\eqref{CPSS_Rho} satisfying $\lim_{r\to 0^+}\delta(r)=\lim_{r\to 0^+}\eta(r)=\delta_c$, and let 
\begin{align*}
&\overline{p}_\mathrm{rad}(r)=\widehat{p}_\mathrm{rad}(\delta(r),\eta(r))=\Prad(x(r),y(r)),\\ 
&\overline{p}_\mathrm{tan}(r)=\widehat{p}_\mathrm{tan}(\delta(r),\eta(r))=\Ptan(x(r),y(r)).
\end{align*}
By Assumption~\ref{neg-press} we have $\overline{p}_\mathrm{rad}(0)=\widehat{p}_\mathrm{rad}(\delta_c,\delta_c)=\Prad(x_c,1)>0$. Moreover by~\eqref{limit}, and since $x_c=x(0)>1$, there exists $R_*$ such that $x(R_*)=1$ and thus $\overline{p}(R_*)=\Prad(1,y(R_*))<0$ by~\eqref{neg} in Assumption~\ref{neg-press}. It follows that there exists $R\in(0,R_*)$ such that $\overline{p}_\mathrm{rad}(r)>0$ for $r\in [0,R)$ and $\overline{p}_\mathrm{rad}(R)=0$. Since $(x(r),y(r))\in (0,X)\times(0,1)$, then by Assumption~\ref{ptanpositive} we also have $\overline{p}_\mathrm{tan}(r)>0$, for all $r\in [0,R)$. Hence letting $\overline{\rho}(r)=\mathcal{K}\delta(r)$, we conclude that $(\rho,p_\mathrm{rad},p_\mathrm{tan})=(\overline{\rho},\overline{p}_\mathrm{rad},\overline{p}_\mathrm{tan})\mathbb{I}_{r\leq R}$ is, according to Definition~\ref{balldef}, a regular static self-gravitating elastic ball supported in the interval $r\in [0,R]$. Next we prove the bounds~\eqref{bounds}. The inequality $\partial_\delta\prad(\delta(r),\eta(r))>0$ in the interior of the ball follows from the inequality $\partial_y\Prad(x(r),y(r))>0$, for all $r\geq0$, proved in Theorem~\ref{global-regular}. As $p_\mathrm{rad}(0)=\Prad(x_c,1)>0$, $x_c>1$ and $\Prad(1,y)<0$, for all $0<y<1$ (see Assumption~\ref{neg-press}), then $x(r)>1$ must hold in the interior of the ball, which gives 
\begin{equation}\label{mr}
	m(r)>\frac{4\pi}{3}\mathcal{K}r^3,\quad r\in (0,R].
\end{equation}
Moreover the inequalities $x(r)< x_c$ and $y(r)<1$ proved in Theorem~\ref{global-regular} are equivalent to
\begin{equation}\label{mr2}
\frac{4\pi}{3}\rho(r)r^3< m(r)< \frac{4\pi}{3}\pi\rho_cr^3,
\end{equation} 
which together with~\eqref{mr} proves the second and third inequality in~\eqref{bounds}. \qed

\subsection*{Proof of Theorem~\ref{global-regular}}
One crucial step in the proof is to transform~\eqref{newsystem} into an autonomous dynamical system by replacing $r>0$ with a new independent variable. Before defining this new variable, we remark that within the interval of existence of regular solutions and as long as $\partial_y\Prad(x(r),y(r))$ remains positive we can rewrite~\eqref{newsystem} as
\begin{align*}
&r\Gamma(x,y)\frac{dx}{dr}=-x\Gamma(x,y)(1-y),\\
&r\Gamma(x,y)\frac{dy}{dr}=[\Upsilon(x,y)(1-y)-v]y,
\end{align*}
where $\Gamma,\Upsilon$ are the functions defined in Assumption~\ref{gamma-ups} and, using the condition $c=0$ in Assumption~\ref{abcass},
\begin{equation}\label{v}
v(r)=\frac{4\pi\mathcal{K}^2}{3(\lambda+2\mu)}r^2x(r)^{6+a}y(r)^{b},
\end{equation}
which satisfies
\[
r\Gamma(x,y)\frac{dv}{dr}=[b(\Upsilon(x,y)(1-y)-v)+\Gamma(x,y)(2-(6+a)(1-y)]v.
\]
Recall that $a,b$ are constants satisfying $a>-4$ and $b>0$,
see Assumption~\ref{abcass}.
This suggests to replace the radial variable $r>0$ with the new dimensionless independent variable $\xi\in\R$ defined by
\begin{equation}\label{newr}
\frac{dr}{d\xi}=r\Gamma(x(r),y(r))
\end{equation}
and thus transforms the system~\eqref{newsystem} into the following non-linear autonomous dynamical system:
\begin{subequations}\label{dyn-sys}
\begin{align}
&\frac{dx}{d\xi}=-\Gamma(x,y)(1-y)x,\\
&\frac{dy}{d\xi}=[\Upsilon(x,y)(1-y)-v]y,\\
&\frac{dv}{d\xi}=[b(\Upsilon(x,y)(1-y)-v)+\Gamma(x,y)(2-(6+a)(1-y))]v,
\end{align}
\end{subequations}
where by abuse of notation we use the same symbol to denote functions evaluated in $r>0$ or $\xi\in\R$ (e.g., $x(\xi)=x(r)$).   We emphasize that not all orbits of~\eqref{dyn-sys} correspond to regular solutions of~\eqref{newsystem}.
In particular, the condition $y(r)\to 1$  as $r\to 0^+$ for regular solutions of~\eqref{newsystem} is equivalent to $\lim_{\xi\to-\infty}y(\xi)=1$ for orbits of the dynamical system~\eqref{dyn-sys}.

The state space for the dynamical system~\eqref{dyn-sys} is $(0,\infty)^3$ and by Assumption~\ref{hardone} the flow has a $C^1$ extension on the boundary. The flow can also be extended smoothly on $(x,y,v)\in (-\epsilon,\infty)^3$, for some $\epsilon>0$, by continuing the functions $\Gamma,\Upsilon$ for $x,y<0$. Of course, the dynamics of orbits contained in $[0,\infty)^3$ is not affected by this extension. However the possibility of extending the flow for negative values of $x$, $y$ and $v$ is important to justify the local stability analysis of the fixed points on the boundary. 
%Moreover the line
%\[
%\mathcal{L}: \quad x\geq 0,\quad y=1,\quad v=0
%\]
%consists of boundary fixed points.
% As we are interested only in regular solutions of~\eqref{newsystem}, for which by definition $\lim_{\xi\to-\infty}y(\xi)=1$ must hold, we may restrict ourselves to study the flow in the region $\mathcal{U}$.

Proving local existence and uniqueness of regular solutions for the system~\eqref{newsystem} is equivalent to show that for all $x_c\in (0,X_\flat)$ there is exactly one orbit $\gamma_{x_c}(\xi)=(x(\xi),y(\xi),v(\xi))$ of the dynamical system~\eqref{dyn-sys} such that 
\begin{equation}\label{alphalimit}
\lim_{\xi\to-\infty}\gamma_{x_c}(\xi)=(x_c,1,0).
\end{equation}
To prove this, we begin by studying the local stability properties of the segment of fixed points 
\[
\mathrm{L}_c=(x_c,1,0)\quad x_c\in (0,X_\flat).
\] 
Let $f(x,y,v)$ be the vector field in the right hand side of~\eqref{dyn-sys}. The eigenvalues of $\nabla f(x_c,1,0)$ are
\[
\lambda_1=0,\quad \lambda_2=-\Upsilon(x_c,1),\quad \lambda_3=2\Gamma(x_c,1).
\] 
The corresponding eigenvectors are
\[
e_1=(1,0,0),\quad e_2=(-x_c\Gamma(x_c,1),\Upsilon(x_c,1),0),\quad e_3=(-\textfrac{x_c}{2},-1,2\Gamma(x_c,1)+\Upsilon(x_c,1)).
\]
As $\Gamma(x_c,1)>0$ and $\Upsilon(x_c,1)>0$ (see Assumption~\ref{initial-reg}), the line of fixed points $\mathrm{L}_c$ is a normally hyperbolic manifold of equilibria, see e.g.~\cite[Prop.~4.1]{Aul84}.
The local theory of normally hyperbolic invariant manifolds establishes the local existence and uniqueness of two flow-invariant 2-dimensional manifolds, a stable one $W^s$ and an unstable one $W^u$, consisting of orbits approaching and, respectively, straying from $\mathrm{L}_c$. Moreover the (un)stable manifold $W^u$ is foliated by the (un)stable manifolds of each fixed point on $\mathrm{L}_c$, see~\cite[Prop.~4.1]{HPS} and~\cite{KP90}.
%Due to that the stable manifold of each point on $\mathrm{L}_c$ lies on the $\{v=0\}$ invariant boundary while the unstable manifold intersects the interior $(0,\infty)^3$ of the state space, it follows that each fixed point $(x_c,1,0)$ for $x_c\in (0,X)$ is the $\alpha$-limit of a unique orbit $\gamma_{x_c}=(x(\xi),y(\xi),v(\xi))$ in the interior of the state space.
%As $\Gamma(x_c,1)>0$ and $\Upsilon(x_c,1)>0$ (see Assumption~\ref{initial-reg}), each fixed point $(x_c,1,0)$ has a one-dimensional stable manifold, a one-dimensional unstable manifold and a one-dimensional center manifold. 
%The center manifold is the line of fixed points $(x,1,0)$; 
Each unstable manifold in the foliation is tangent to $e_3$ at $(x_c,1,0)$---and thus intersects the interior $(0,\infty)^3$ of the state space---while each stable manifold is tangent to $e_2$ at $(x_c,1,0)$---and so it does not intersect the region $(0,\infty)^3$. It follows that each fixed point $(x_c,1,0)$ for $x_c\in (0,X_\flat)$ is the $\alpha$-limit of a unique orbit $\gamma_{x_c}=(x(\xi),y(\xi),v(\xi))$ in the interior of the state space, as claimed.
%%To prove this, we study the local stability properties of the . Let $f(x,y,v)$ be the vector field in the right hand side of~\eqref{dyn-sys}. 
%% The eigenvalues of $\nabla f(x_c,1,0)$ are $0$, $-\Upsilon(x_c,1)$, and $2\Gamma(x_c,1)$.
%%Since by Assumption~\ref{initial-reg}, and~\eqref{positivity3}, $\Gamma(x_c,1)>0$ and $\Upsilon(x_c,1)>0$, it follows that that tangent space at each point on the line consists of the direct sum  $E^c\oplus E^s\oplus E^u$, where $E^{c}=\langle (1,0,0)\rangle$, $E^s =\langle (-x_c\Gamma(x_c,1),\Upsilon(x_c,1),0) \rangle$, and $E^u=\langle(-\textfrac{x_c}{2},-1,2\Gamma(x_c,1)+\Upsilon(x_c,1))\rangle$. Hence the line of fixed points $\mathrm{L}_c$ is a normally hyperbolic manifold i.e., the eigenvector corresponding to the zero eigenvalue points along the line, while the linearization in the transverse directions have non-vanishing real part, for which the reduction theorem establishes the existence of a local flow invariant foliation with leaves which are homeomorphic to the standard saddle. Moreover the stable manifold of each point lies on the $\{v=0\}$ invariant boundary and the unstable manifold is tangent to $E^{u}$ at $(x_c,1,0)$ and thus intersects the interior $(0,\infty)^3$ of the state space.
%% 
%As the set of fixed points $L_c$ is normally hyperbolic, the stability properties of $L_c$ are determined by the signs of the eigenvalues $\lambda_2,\lambda_3$; see~\cite{} for an introduction to the theory of normally hyperbolic sets of fixed points. 
This orbit corresponds to the regular solution $(x(r),y(r))$ of~\eqref{newsystem} with center datum $x(0)=x_c$ and up to the maximal radius $r_*=r(\xi_*)$ such that $\Gamma(x(\xi),y(\xi))$ remains positive for $\xi\in [0,\xi_*)$. It will now be shown that $\Gamma(x(\xi),y(\xi))>0$ for all $\xi\in\R$, hence $r_*=\lim_{\xi\to\infty}r(\xi)$.
% i.e., the regular solutions of~\eqref{newsystem} are global. 
Consider the regions 
\[
\mathcal{U}=(0,\infty)\times(0,1)\times(0,\infty),\quad \mathcal{U}_{x_c}=(0,x_c)\times(0,1)\times(0,\infty).
\]
Since $(dy/d\xi)_{y=1}=-v<0$ for all $x,v>0$, then  $\mathcal{U}$ is future invariant.
By Assumption~\ref{hardone}, $\Gamma(x,y)>0$, for all $x\in (0,X_\flat)$, $y\in(0,1)$, and $v>0$. Thus $dx/d\xi<0$ holds in the region $\mathcal{U}_{x_c}$ and therefore $\mathcal{U}_{x_c}$ is also future invariant for all $x_c<X_\flat$. It follows that $\gamma_{x_c}\subset\mathcal{U}_{x_c}$ and therefore $\Gamma(x(\xi),y(\xi))$ remains positive along the entire orbit. In particular, $r_*=\lim_{\xi\to\infty}r(\xi)$. Note that this does not yet imply that the regular solution of~\eqref{newsystem} corresponding to $\gamma_{x_c}$ is global. 
Our next goal is to prove that
\begin{equation}\label{limits}
\lim_{\xi\to\infty}x(\xi)=0,\quad \lim_{\xi\to\infty}y(\xi)=y_\star,
\end{equation}
and
\begin{equation}\label{limitv}
\lim_{\xi\to\infty}v(\xi)=v_\star:=\frac{2\Upsilon_0(y_\star)}{a+6},\quad \Upsilon_0(y)=\Upsilon(0,y).
\end{equation}
If~\eqref{limits} hold, then, by~\eqref{newr} and the bound $\Gamma(0,y)>0$ in Assumption~\ref{hardone}, we obtain $dr/d\xi\sim \Gamma(0,y_\star) r>0$ as $\xi\to\infty$, which entails $r_*=\lim_{\xi\to\infty}r(\xi)=\infty$. Hence the regular solution of~\eqref{newsystem} corresponding to $\gamma_{x_c}$ is global and so~\eqref{limits} are equivalent to~\eqref{limit}. Moreover, $v_\star>0$ by the bound $\Upsilon(0,y)>0$ in Assumption~\ref{hardone} and so, by the definition~\eqref{v} of the variable $v$, if~\eqref{limitv} hold then $x(r)\sim r^{-2/(6+a)}$, as $r\to\infty$. Thus the proof of the theorem is complete if we show that~\eqref{limits}-\eqref{limitv} hold.

We begin by observing that $v(\xi)$ is bounded, because
$dv/d\xi\leq (D-bv)v$, where
$D=\sup\{b|\Upsilon(x,y)|+2\Gamma(x,y),\ (x,y)\in(0,x_c)\times(0,1)\}$ (note that we are using $b>0$ here).  
Moreover since $x(\xi)$ is decreasing, then the limit $x_\infty=\lim_{\xi\to\infty}x(\xi)<x_c$ exists. As $|d^2x/d\xi^2|$ is bounded, then $dx/d\xi$ is uniformly continuous and so $\lim_{\xi\to\infty} x'(\xi)=0$ must hold. In particular, 
\begin{equation}\label{temp}
\lim_{\xi\to\infty}\frac{dx}{d\xi}(\xi)=\lim_{\xi\to\infty}x_\infty\Gamma(x_\infty,y(\xi))(y(\xi)-1)=0.
\end{equation}
Assume now that $x_\infty>0$. Since, by Assumptions~\ref{hardone}, $\Gamma(x_\infty,y)>0$ for $y\in [0,1)$, then~\eqref{temp} implies that $\lim_{\xi\to\infty}y(\xi)=1$ must hold. 
%(Remark: if $\Gamma(x,0)=0$, then we might also have $y\to 0^+$; the analysis of this limit requires a center manifold argument, see the last remark in Section~\ref{assumptions}.) 
By the proven fact that $r(\xi)\to r_*>0$ as $\xi\to\infty$ and that $v(\xi)$ is bounded, we infer that $v(\xi)\to v_\infty\in (0,\infty)$, see~\eqref{v}. (We also infer that $r_*<\infty$, but we shall not make use of this fact.) It follows that $dy/d\xi\to-v_\infty<0$, as $\xi\to\infty$, which contradicts the fact that $y\to 1^-$. 
%But this is not possible because
% $dv/d\xi\leq (D-bv)v$, where
%$D=\sup\{b|\Upsilon(x,y)|+2\Gamma(x,y),\ (x,y)\in(0,x_c)\times(0,1)\}$, which implies that $v(\xi)$ is bounded. 
We conclude that $\lim_{\xi\to\infty}x(\xi)=0$ and thus the $\omega$-limit set of $\gamma_{x_c}$ coincides with the $\omega$-limit of the projection of $\gamma_{x_c}$ onto the boundary region $\{x=0,0< y<1\}$. The flow induced on this region is described by the 2-dimensional dynamical system
\begin{subequations}\label{2dim}
\begin{align}
&\frac{dy}{d\xi}=[\Upsilon_0(y)(1-y)-v]y,\\
&\frac{dv}{d\xi}=[b(\Upsilon_0(y)(1-y)-v)+\Gamma_0(y)(2-(6+a)(1-y))]v,
\end{align}
\end{subequations}
where  $\Gamma_0(y)=\Gamma(0,y)$ and $(y,v)\in (0,1)\times (0,\infty):=\mathcal{V}$. We are only interested in the orbits $\gamma=(y(\xi),v(\xi))$ of~\eqref{2dim} which are entirely contained in $\mathcal{V}$. As $v(\xi)$ is bounded along these orbits, then the $\omega-$limit set $\omega(\gamma)$ is non-empty and by the Poincar\'e-Bendixson theorem $\omega(\gamma)$ is either a periodic orbit, a fixed point, or a connected set consisting of homoclinic and heteroclinic orbits connecting fixed points. We begin by showing that there are no periodic orbits within $\mathcal{V}$. To this purpose we show that
\[
\phi(y,v)=\frac{1}{y^{1+b}v}>0
\]
is a strong Dulac function. In fact, letting $F(y,v)$ be the vector field in the right hand side of~\eqref{2dim}, and using~\eqref{dulac-condition}  in Assumption~\ref{hardone}, we find
\[
\nabla \cdot (\phi F)(y,v)=-\phi(y,v)((b\Upsilon_0(y)-y\Upsilon'_0(y))(1-y)+y\Upsilon_0(y))<0.
\]
Next we study the fixed points of the dynamical system~\eqref{2dim} in the region $\mathcal{V}$. 
In the interior we have the fixed point
\begin{equation}\label{attractor}
\mathrm{P}: (y,v)=(y_\star,v_\star)=\left(\frac{a+4}{a+6},\frac{2\Upsilon_0(y_\star)}{a+6}\right).
\end{equation}
Let again $F(y,v)$ denote the vector field in the right hand side of~\eqref{2dim}. Using the bound $\Gamma_0(y)>0$, $\Upsilon_0(y)>0$ in Assumption~\ref{hardone} together with~\eqref{dulac-condition} we find
\begin{align*}
&\det\nabla F(y_\star,v_\star)=2y_\star\Gamma_0(y_\star)\Upsilon_{0}(y_\star)>0,\\
&\mathrm{Tr}\nabla F(y_\star,v_\star)=-[(b\Upsilon_{0}(y_\star)-y_\star\Upsilon'_0(y_\star))(1-y_\star)+y_\star\Upsilon_0(y_\star)]<0.
\end{align*}
%where we used~\eqref{positivity} and~\eqref{dulac-condition}. 
It follows that the fixed point $\mathrm{P}$ is a hyperbolic sink.
On the boundary of $\mathcal{V}$ we find the fixed points
\[
\mathrm{Q}_0: (y,v)=(y_0,v_0)=(0,0),\quad \mathrm{Q}_1: (y,v)=(y_1,v_1)=(1,0)
\]
and if
\begin{equation}\label{sharp}
\Upsilon_0(0)>\frac{a+4}{b}\,\Gamma_0(0)
\end{equation}
we also have the fixed point
\[
\mathrm{Q}_2: (y,v)=(y_2,v_2)=(0,\Upsilon_0(0)-b^{-1}(a+4)\Gamma_0(0)).
\]
The fixed point $\mathrm{Q}_1$ is the end point of the segment $\mathrm{L}_c$ on the invariant boundary $\{x=0\}$. Thus, the local stability properties of $\mathrm{Q}_1$ can be inferred by the analysis of $\mathrm{L}_c$ restricted to $x_c=0$, which gives that 
%The eigenvalues/eigenvectors of $\nabla F(y_1,v_1)$ are 
%\[
%\lambda_1=2\Gamma_0(1),\quad \lambda_2=-\Upsilon_0(1),\quad e_1=(-\textfrac{1}{2\Gamma_0(1)+\Upsilon_0(1)},1), \quad e_2=(1,0)
%\]
%by which we infer that 
$\mathrm{Q}_1$ is the $\alpha$-limit of unique orbit in the interior of $\mathcal{V}$. To study the local stability properties of $\mathrm{Q}_0$, $\mathrm{Q}_2$ we first analyze the flow induced on $\{v=0\}$ and $\{y=0\}$.
As $(dy/d\xi)_{v=0}=\Upsilon_0(y)(1-y)y>0$  (by Assumption~\ref{hardone}), the boundary $\{v=0\}$ consists of an orbit connecting $\mathrm{Q}_0$ to $\mathrm{Q}_1$. Since
\[
(dv/d\xi)_{y=0}=(b(\Upsilon_0(0)-v)-(4+a)\Gamma_0(0))v
\]
then, if~\eqref{sharp} is violated, the flow on $\{v=0\}$ consists of an orbit converging to $\mathrm{Q}_0$. Hence in this case $\mathrm{Q}_0$ cannot be the $\alpha$ or $\omega$-limit point of an interior orbit.  
If~\eqref{sharp} holds, i.e., if $\mathrm{Q}_2$ is present, the boundary $\{v=0\}$ consists of two orbits: one connecting $\mathrm{Q}_0$ to $\mathrm{Q}_2$ and one converging to $\mathrm{Q}_2$ from the right (i.e., $\mathrm{Q}_0\rightarrow \mathrm{Q}_2\leftarrow - - $). Hence in this case $\mathrm{Q}_0$ is a source of a 1-parameter family of orbits in the interior of $\mathcal{V}$. Finally, again by Assumption~\ref{hardone}, the eigenvalues of $\nabla F(y_2,v_2)$ are 
$b^{-1}(4+a)\Gamma_0(0)>0$ and $(4+a)\Gamma_0(0)-b\Upsilon(0)<0$, 
and since we have already shown that the stable manifold of $\mathrm{Q}_2$ coincides with $\{v=0\}$, 
then $\mathrm{Q}_2$ must be the source of exactly one orbit in the interior of $\mathcal{V}$. Collecting the information obtained thus far we conclude that the qualitative behavior of the flow of the dynamical system~\eqref{2dim} is as depicted in Figure~\ref{fig:X0}. Due to the absence of periodic orbits, homoclinic  orbits, and heteroclinic contours, the Poincar\'e-Bendixson theorem entails that the $\omega$-limit set of orbits to the dynamical system~\eqref{2dim} in the interior of $\mathcal{V}$ must be the fixed point $\mathrm{P}$, which completes the proof of the theorem. \qed
%
%{\it Remark.} Note that $y=u/3$ where $u$ is the usual $u$-Milne variable, see e.g.~\cite{HU03,HRU03,HRU06}, while $v$ can be regarded as the usual $v$-Milne variable. The variable $y$ has been used before in elasticity, see e.g.~\cite{CH90} and references therein.

%
\begin{figure}[ht!]
	\centering
	\subfigure[If $\Upsilon_0(0)-\frac{4+a}{b}\Gamma_0(0)<0$]{\label{fig:X0A}
		\includegraphics[width=0.45\textwidth, trim = 0cm 0cm 0cm 0cm]{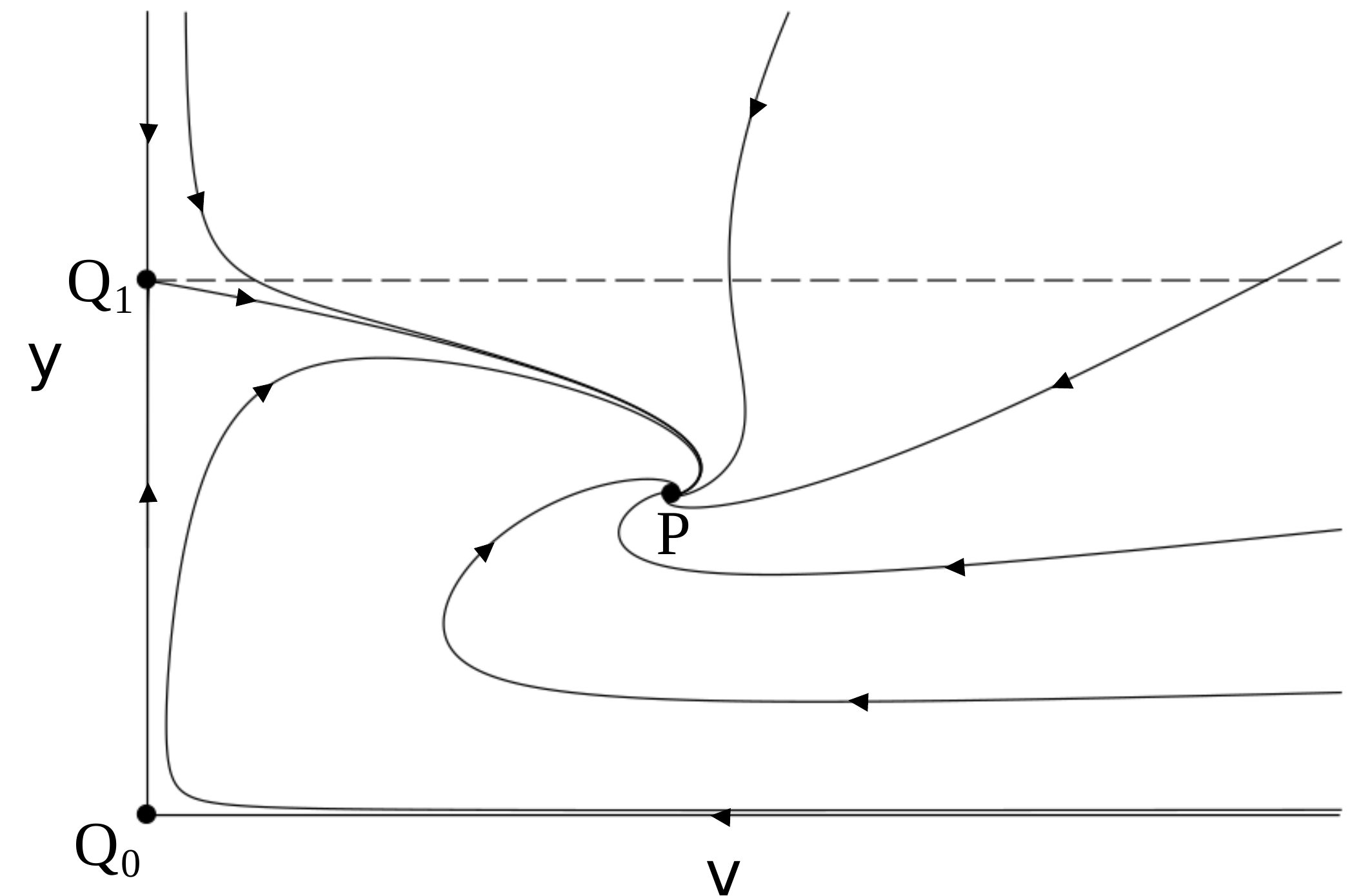}} \hspace{0.1cm}\qquad
	\subfigure[If $\Upsilon_0(0)-\frac{4+a}{b}\Gamma_0(0)>0$ ]{\label{fig:X0B}
		\includegraphics[width=0.45\textwidth, trim = 0cm 0cm 0cm 0cm]{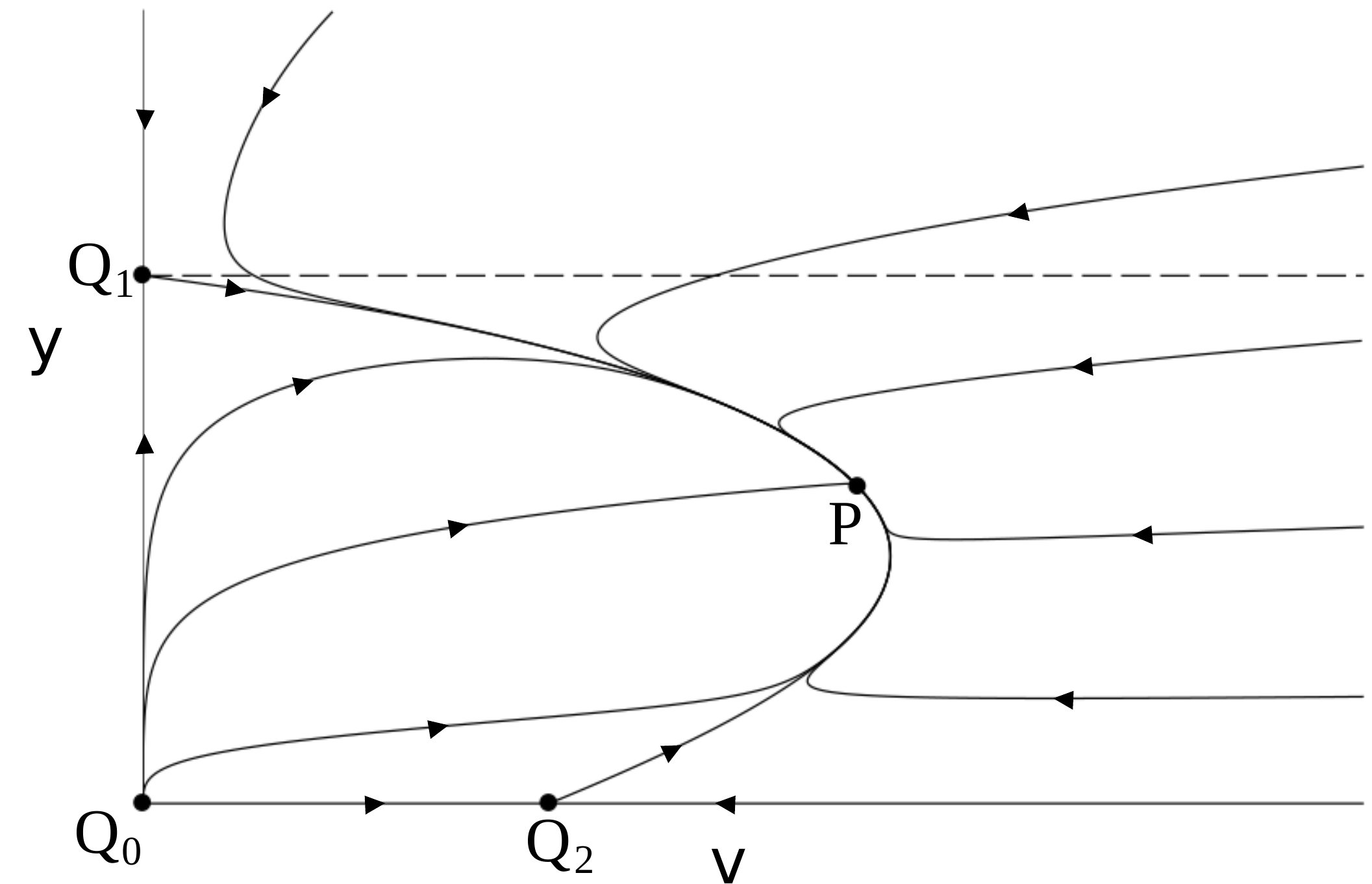}} \hspace{0.1cm}
	\caption{Qualitative behavior of the flow on the invariant boundary $\{x=0\}$.}
	\label{fig:X0}
\end{figure}

\section{Power-law hyperelastic materials}\label{powerlawmaterials}
In this section we study the validity of some of the assumptions in Section~\ref{assumptions} in the case of power-law hyperelastic materials, which are defined as follows.
   \begin{definition}\label{powerdef}
Let $(n_1,n_2,\dots, n_m)\in\mathbb{N}^m$, $m\geq 2$. A stored energy function $\widehat{w}(\delta,\eta)$ is said to be a power-law stored energy function of type $(n_1,\dots,n_m)$ if 
\begin{equation}\label{stored}
\widehat{w}(\delta,\eta)=\mathcal{W}(\eta^{1/3},\delta/\eta),\quad \mathcal{W}(x,y)=\sum_{j=1}^m x^{\gamma_j}\sum_{i=1}^{n_j}\alpha_{ij}y^{\beta_{ij}} +w_0,
\end{equation}
where $w_0,\gamma_j,\beta_{ij},\alpha_{ij}\in\R$ satisfy $\alpha_{ij}\neq 0$, $\gamma_1<\gamma_2<\dots <\gamma_m$, $\beta_{1j}<\beta_{2j}<\dots <\beta_{n_jj}$, for all $j=1,\dots,m$, $i=1,\dots,n_j$, and  if $n_j=1$, then $\gamma_j\neq 0$ and $\beta_{1j}=\gamma_j/3$. Moreover 
\begin{subequations}\label{condsyst}
\begin{align}
&\sum_{j=1}^m\sum_{i=1}^{n_j}\alpha_{ij}=-w_0\label{cond1}\\
&\sum_{j=1}^{m}\sum_{i=1}^{n_j}\alpha_{ij}\gamma_j=0,\quad\sum_{j=1}^{m}\sum_{i=1}^{n_j}\alpha_{ij}\gamma_j^2=3(3\lambda+2\mu),\quad  \sum_{j=1}^m\sum_{i=1}^{n_j}\alpha_{ij}\beta_{ij}^2=\lambda+2\mu,\label{cond2} \\ 
&\sum_{i=1}^{n_j}\alpha_{ij}(\beta_{ij}-\gamma_j/3)=0,\quad j=1,\dots, m\label{cond3}\end{align}
\end{subequations}
and at least one of the exponents $\beta_{ij}$ is different from 0 and $-1$.
\end{definition}
We refer to~\cite{AC19} for a systematic analysis of these stored energy functions. Here we limit ourselves to observe that for a stored energy function of the form~\eqref{stored}, the condition~\eqref{cond1} is equivalent to the normalization condition $\widehat{w}(1,1)=0$ (i.e., the reference configuration is a state of zero energy), while~\eqref{cond2} and~\eqref{cond3} are equivalent to respectively Assumption~\ref{As1} and~\ref{samepcenter} being satisfied. When $n_j=1$,~\eqref{cond3} enforces $\beta_{1j}=\gamma_j/3$ and in this case we require  $\gamma_j\neq 0$ so that $w_0$ is the only zero-degree term in the stored energy function. The last condition on the coefficients $\beta_{ij}$ is required so that $\Prad(x,y)$ is not independent of $y$, i.e., $\prad(\delta,\eta)$ is not independent of $\delta$, see~\eqref{hyperxy} and Equation~\eqref{pradpower} below.
%; the fact that $\beta_{1j}=\gamma_j/3$ when $n_j=1$ then follows by the second equation in~\eqref{cond2}.

In the next proposition we show that some of our assumptions on the constitutive functions are always satisfied by power-law hyperelastic materials or can be easily verified. To this purpose we use that, by~\eqref{hyperxy}, the constitutive functions of power-law hyperelastic materials are given by
\begin{subequations}\label{Ppower}
\begin{align}
&\mathcal{P}_\mathrm{rad}(x,y)= \sum_{j=1}^{m} x^{3+\gamma_j} \sum_{i=1}^{n_j}\alpha_{ij}\beta_{ij}y^{1+\beta_{ij}},\label{pradpower}\\
&\mathcal{P}_\mathrm{tan}(x,y)=\frac{1}{2}\sum_{j=1}^m x^{3+\gamma_j} \sum_{i=1}^{n_j}\alpha_{ij}(\gamma_j-\beta_{ij})y^{1+\beta_{ij}},
\end{align}
\end{subequations}
by which it follows that
%\begin{equation}\label{thetapowerlaw}
\[
\Theta(x,y)=-\frac{1}{2}\sum_{j=1}^mx^{3+\gamma_j}\sum_{i=1}^{n_j}\alpha_{ij}(3\beta_{ij}-\gamma_j)\frac{y^{1+\beta_{ij}}}{1-y},
\]
see~\eqref{Thetadef}.
%\end{equation}
%Let us define the (possibly empty) set $J$ by
%\begin{equation}\label{J}
%J\subset\{1,\dots,m\}:n_j=1\text{ and }\gamma_j=-3,\ \text{for $j\in J$}.
%\end{equation}
Let
\begin{subequations}\label{definitions}
\begin{equation}
I_j:=\{i\in\{1,\dots,n_j\}:\beta_{ij}\neq -1, \beta_{ij}\neq 0\};
\end{equation}
by the last condition in Definition~\ref{powerdef}, $I_j$ is not empty for at least one $j\in\{1,\dots, m\}$. Define
\begin{equation}
\gamma_*=\min\{\gamma_{j}:  j=1,\dots, m, I_j\neq\varnothing\}\,,\quad
\beta_*=\min\{\beta_{ij}: i\in I_j,  j=1,\dots, m, I_j\neq\varnothing\}\,. 
\end{equation}
\end{subequations}
As the exponents $\gamma_1,\gamma_2,\dots $ are increasing, then $\gamma_*=\gamma_p$, where $p$ is the lowest value of $j$ such that $I_j$ is not empty. Denote by $\alpha_{(-1)}$, respectively $\alpha_{(0)}$, the coefficient $\alpha_{ij}$ of the term with exponent $\beta_{ip}=-1$, respectively $\beta_{ip}=0$. If such term does not exist we set  $\alpha_{(-1)}=0$, respectively $\alpha_{(0)}=0$. Moreover we define
\begin{equation}\label{sigma}
\sigma=(3+\gamma_*)\alpha_{(-1)}+\gamma_* \alpha_{(0)}.
\end{equation}

%and for each $j\in\{1,\dots, m\}$ denote by $a_{-1j}$, and $a_{0j}$ the coefficients $a_{ij}$ of the terms with exponents $\beta_{ij}=-1$, and $\beta_{ij}=0$ respectively.

Recall the definitions of $\Gamma(x,y)$ and $\Upsilon(x,y)$ in Assumption~\ref{gamma-ups}. We have the following proposition. %was proved in~\cite{AC19}

\begin{proposition}\label{powerprop}
	For power-law hyperelastic materials Assumptions~\ref{As1}-\ref{samepcenter} are satisfied and Assumption~\ref{initial-reg}(ii) follows by Assumption~\ref{initial-reg}(i). Moreover Assumption~\ref{gamma-ups} holds if and only if $\beta_*=\beta_{qp}$ for some (necessarily unique) $q\in I_p$.
	%	, and
	%	$$
	%	(3+\gamma_j)a_{1j}+\gamma_j a_{2j} =0
	%	$$
	%	for all $j<p$. 
	The exponents are given by $a=-3-\gamma_{*}$, 
	$b=-\beta_*$, and 
	\[
	c=
	\left\{\begin{array}{cc} 0 & \text{if $\beta_*<0$ or $\beta_*>0$ and $\sigma = 0$}\\ \beta_* &\text{if $\beta_*>0$ and $\sigma \neq 0$}
	\end{array}\right..
	\]
	%	\begin{equation*}
	%	\begin{split}
	%	&c=0,\quad \text{if}\quad \beta_*<0 \quad \text{or}\quad  \beta_*>0 \quad\text{and}\quad (3+\gamma_*)a_{-1p}+\gamma_* a_{0p} = 0 \\
	%	&c=\beta_*,\quad \text{if}\quad \beta_*>0 \quad\text{and}\quad (3+\gamma_*)a_{-1p}+\gamma_* a_{0p} \neq 0 \,.
	%	\end{split}
	%	\end{equation*}
	In particular, for power-law hyperelastic materials the second condition in Assumption~\ref{abcass}, i.e., $b>0$, implies the third, i.e., $c=0$.
	Finally, the function $\Upsilon_0(y)=\Upsilon(0,y)$ satisfies
	\begin{equation}\label{UAU}
	((b+c)\Upsilon_0(y)-y\Upsilon'_0(y))(1-y)+y\Upsilon_0(y)=y^c((a+3)-ay)\Gamma(0,y),
	\end{equation}
	and thus for $c=0$ and $a\geq -3$ the inequality~\eqref{dulac-condition} follows by the bound $\Gamma(0,y)>0$ in Assumption~\ref{hardone}, while the same inequality is violated for $c=0$ and $a<-3$ when $y$ is sufficiently close to zero. 
\end{proposition}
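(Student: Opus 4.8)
The proposition bundles four claims: (A) Assumptions~\ref{As1}--\ref{samepcenter} hold and Assumption~\ref{initial-reg}(ii) follows from Assumption~\ref{initial-reg}(i); (B) Assumption~\ref{gamma-ups} holds iff $\beta_*=\beta_{qp}$ for some $q\in I_p$, with $a,b,c$ as stated; (C) the identity~\eqref{UAU}; (D) its consequences for~\eqref{dulac-condition}. I would prove them in this order. For (A) I would substitute the explicit formulas in~\eqref{Ppower} and their first partials into the conditions and evaluate at $(x,y)=(1,1)$: one reads off $\Prad(1,1)=\Ptan(1,1)=0$ and the prescribed Lam\'e values of $\partial_x\Prad,\partial_y\Prad,\partial_x\Ptan,\partial_y\Ptan$ at $(1,1)$ directly from the normalization~\eqref{cond1} and the moment relations~\eqref{cond2}--\eqref{cond3}. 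For Assumption~\ref{samepcenter}, \eqref{Ppower} gives $\Prad(x,y)-\Ptan(x,y)=\tfrac32\sum_j x^{3+\gamma_j}\sum_i\alpha_{ij}(\beta_{ij}-\gamma_j/3)y^{1+\beta_{ij}}$, and putting $y=1$ kills each inner sum by~\eqref{cond3}; then Assumption~\ref{initial-reg}(ii) is immediate from Lemma~\ref{hypergamma}, since power-law materials are hyperelastic.

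For (B) I would expand $\partial_y\Prad$, $x\partial_x\Prad$ and $\Theta$ from~\eqref{Ppower} into finite sums of terms $x^{3+\gamma_j}y^{\beta_{ij}}$ (the $\Theta$-terms carrying a factor $(1-y)^{-1}$, regular at $y=0$). Two elementary facts control the behaviour at the origin: terms with $\beta_{ij}\in\{0,-1\}$ disappear from $\partial_y\Prad$, so only blocks with $I_j\neq\varnothing$ survive there; and a whole block with $I_j=\varnothing$ contributes nothing to $x\partial_x\Prad+2\Theta$, because by~\eqref{cond3} the combination of its (at most two) coefficients that would otherwise survive is forced to vanish. Hence $x^{3+\gamma_*}$ is the lowest power of $x$ in both $\partial_y\Prad$ and $x\partial_x\Prad+2\Theta$, and it comes from block $p$. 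For $\Gamma$ to extend continuously to $[0,\infty)^2$ with nonzero value at the origin the lowest surviving monomial must be unique, and matching $x$- and $y$-exponents shows this happens precisely when the minimal $y$-exponent $\beta_*$ is attained inside block $p$, i.e. $\beta_*=\beta_{qp}$ for some $q\in I_p$, forcing $a=-3-\gamma_*$ and $b=-\beta_*$; conversely, if no such $q$ exists, no admissible $(a,b)$ can make $\Gamma$ both continuous and nonzero at $0$. For $\Upsilon$ the same block $p$ governs the limit, but its $\beta_{ip}\in\{-1,0\}$ terms leave a residue $\sigma\,y/(1-y)$ with $\sigma$ as in~\eqref{sigma}; comparing the order of this residue ($\sim y$) with that of the leading $i\in I_p$ contribution ($\sim y^{1+\beta_*}$) gives $c=0$ when $\beta_*<0$ (the $I_p$-term dominates) or when $\sigma=0$, and $c=\beta_*$ when $\beta_*>0$ and $\sigma\neq0$. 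Since $b>0$ means $\beta_*<0$, this already yields $c=0$, which is the implication claimed for Assumption~\ref{abcass}. This exponent bookkeeping — pinning $c$ down exactly and tracking how the $\beta_{ip}\in\{-1,0\}$ contributions combine through~\eqref{cond3} — is the part I expect to be most delicate.

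For (C), with $\Gamma_0(y)=\Gamma(0,y)$ and $\Upsilon_0(y)=\Upsilon(0,y)$ now available as explicit rational functions of $y$ (the block-$p$ parts of the above), I would substitute them into~\eqref{UAU}, multiply through by $(1-y)$ to clear denominators, and verify that the resulting polynomial identity in $y$ holds term by term on the $i\in I_p$ monomials and is reconciled on the $\beta_{ip}\in\{-1,0\}$ monomials again by~\eqref{cond3} for $j=p$ (equivalently, by $\sigma=3\sum_{i\in I_p}\alpha_{ip}(\beta_{ip}-\gamma_*/3)$); this is calculation-heavy but uses no new idea.

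For (D), specialize~\eqref{UAU} to $c=0$: its left-hand side is the left-hand side of the second inequality in~\eqref{dulac-condition} and equals $\bigl((a+3)-ay\bigr)\Gamma_0(y)$. When $a\geq-3$ one has $(a+3)-ay=(a+3)(1-y)+3y>0$ on $(0,1)$ and $\Gamma_0>0$ by Assumption~\ref{hardone}, so that inequality holds; to obtain the remaining bound $\Upsilon_0>0$, I would rewrite the $c=0$ identity as $\tfrac{d}{dy}\bigl(y^{-b}(1-y)\Upsilon_0(y)\bigr)=-y^{-b-1}\bigl((a+3)-ay\bigr)\Gamma_0(y)<0$, note that $y^{-b}(1-y)\Upsilon_0(y)\to0$ as $y\to1^-$ by continuity of $\Upsilon_0$, conclude that it stays positive on $(0,1)$, and hence $\Upsilon_0>0$ there. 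Finally, if $a<-3$, letting $y\to0^+$ in the same identity gives the left-hand side $\to(a+3)\Gamma_0(0)<0$, so the second inequality in~\eqref{dulac-condition} is violated for $y$ near $0$.
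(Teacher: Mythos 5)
Your proposal is correct and takes essentially the same route as the paper's proof: Assumptions~\ref{As1}--\ref{samepcenter} hold by the construction \eqref{cond1}--\eqref{cond3} with Assumption~\ref{initial-reg2}(ii) coming from Lemma~\ref{hypergamma}, and the exponents $a,b,c$ are obtained by the same bookkeeping in which the blocks with $I_j=\varnothing$ drop out of $x\partial_x\Prad+2\Theta$ via \eqref{cond3} while the $\beta_{ip}\in\{-1,0\}$ terms of block $p$ produce the residue $\sigma$, after which \eqref{UAU} and $\Gamma(0,\cdot)>0$ give \eqref{dulac-condition}. The only harmless deviations are that you verify \eqref{UAU} by direct term-by-term expansion rather than the paper's rewriting of its left-hand side as $-y^{1+b}\tfrac{d}{dy}\bigl(y^{-b}(1-y)\Upsilon_0(y)\bigr)+c(1-y)\Upsilon_0(y)$, and that you supply an explicit monotonicity argument for $\Upsilon_0>0$, a point the paper leaves implicit.
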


\begin{proof}
	As already mentioned, Assumption~\ref{As1} and Assumption~\ref{samepcenter} are satisfied by definition of power-law hyperelastic material. The claim on Assumption~\ref{initial-reg} follows by Lemma~\ref{hypergamma}.
	%For instance 
	%\[
	%\Prad(x,1)-\Ptan(x,1)=\frac{3}{2}\sum_{j=1}^mx^{3+\gamma_j}\sum_{j= 1}^{n_j}(a_{ij}(\beta_{ij}-\gamma_j/3))=0,
	%\]
	%which shows that~\eqref{cond2}. 
	%Similarly for Assumption~\ref{strong-regularity} we have
	%\[
	%\partial_x\Prad(x,1)-\partial_x\Ptan(x,1)=\frac{3}{2}\sum_{j=1}^mx^{3+\gamma_j-1}(3+\gamma_j)\sum_{j= 1}^{n_j}(a_{ij}(\beta_{ij}-\gamma_j/3))=0,
	%\]
	%again by the second equation in~\eqref{cond2}. 
	As to the statement concerning Assumption~\ref{gamma-ups}, we first observe that
	\begin{equation}\label{dypradpower}
	\partial_y\Prad(x,y)=\sum_{j=1}^mx^{3+\gamma_j}\sum_{i= 1}^{n_j}\alpha_{ij}\beta_{ij}(1+\beta_{ij})y^{\beta_{ij}}=\sum_{j=1}^mx^{3+\gamma_j}\sum_{i\in I_j}\alpha_{ij}\beta_{ij}(1+\beta_{ij})y^{\beta_{ij}},
	\end{equation}
	where the sum on $i$ is zero when $I_j$ is empty.
	%The terms in the sum~\eqref{dypradpower} with $\beta_{ij}=0$ or $-1$ vanish, i.e., there are no terms in~\eqref{dypradpower} containing factors multiplying $y^0$ or $y^{-1}$ (such terms could only appear in~\eqref{dypradpower} if the stored energy function contained logarithmic terms, which is excluded for power-law materials). 
	%If $n_1=1$ and $\gamma_1=-3$, then by~\eqref{cond3} $\beta_{11}=-1$ must hold and thus $I_1$ is empty, i.e.,  the first term in the $j$-sum is zero.  Hence if we let $\gamma_*=\gamma_1$ in case $\mathrm{(I)}$ and $\gamma_*=\gamma_2$ in case $\mathrm{(II)}$, 
	By the definitions of $\gamma_*$ and $\beta_*$,  $3+\gamma_*$ is the lowest exponent of $x$ and $\beta_*$ is the lowest exponent of $y$ in~\eqref{dypradpower}, and so $a=-3-\gamma_*$, $b=-\beta_*$ are the lowest exponents such that $x^ay^{b}\partial_y\Prad(x,y)$ extend continuously for $x=0$ and $y=0$. Moreover if we define $\Gamma$ by using $a>-3-\gamma_*$ and/or $b>-\beta_*$, then the condition $\Gamma(0,0)\neq0$ would be violated. Thus $a=-3-\gamma_*$ and $b=-\beta_*$ must necessarily hold in order that $\Gamma(0,0)\neq 0$. It follows that the function $\Gamma(x,y)$ must be given by
	\begin{equation}\label{semilasttemp}
	\Gamma(x,y)=(\lambda+2\mu)^{-1}\sum_{j=1}^mx^{\gamma_j-\gamma_*}\sum_{i\in I_j}\alpha_{ij}\beta_{ij}(1+\beta_{ij})y^{\beta_{ij}-\beta_*}.
	\end{equation}
	Evaluating~\eqref{semilasttemp} at $x=0$ the only term in the $j$-sum which does not vanish is the $j=p$ term (because $\gamma_*=\gamma_p<\gamma_j$, for $j\neq p$ and $I_j\neq\varnothing$), hence
	\[
	(\lambda+2\mu)\Gamma(0,0)=\sum_{i\in I_p}\alpha_{ip}\beta_{ip}(1+\beta_{ip})(y^{\beta_{ip}-\beta_*})_{y=0}.
	\]
	If there is no $q\in I_p$ such that $\beta_{qp}=\beta_*$, then every exponent of the variable $y$ is positive and thus the above sum is zero. If there exists a (necessarily unique) $q\in\{1,\dots, n_p\}$ such that $\beta_{qp}=\beta_*$, then 
	$(\lambda+2\mu)\Gamma(0,0)=\alpha_{qp}\beta_*(1+\beta_*)\neq 0$.
	%and $\Gamma(0,0)>0$ if and only if $a_{q1}>0$. 
	%A similar argument applies to case $\mathrm{(II)}$, which
	%\[
	%(\lambda+2\mu)\Gamma(0,0)=\sum_{i=1}^{n_1}a_{i1}\beta_{i1}(1+\beta_{i1})(y^{\beta_{i1}-\beta_*})_{y=0}=a_{11}\beta_{11}(1+\beta_{11})(y^{\beta_{11}-\beta_*})_{y=0},
	%\]
	%which is not zero if and only if $\beta_*=\beta_{11}$ and $\beta_{11}\neq 0$. 
	This proves the statement about the coefficients $a,b$. 
	The condition on $c$ results from imposing  $\Upsilon(0,0)\neq 0$. Indeed we have
	\begin{align}
	x\partial_x\Prad(x,y)+2\Theta(x,y)&=\sum_{j=1}^m
	x^{3+\gamma_j}(3+\gamma_j)\sum_{i=1}^{n_j}\alpha_{ij}\beta_{ij}y^{1+\beta_{ij}}\nonumber\\
	&\quad-\frac{1}{1-y}\sum_{j=1}^mx^{3+\gamma_j}\sum_{i= 1}^{n_j}\alpha_{ij}(3\beta_{ij}-\gamma_j)y^{1+\beta_{ij}}\nonumber\\
	&=\frac{1}{1-y}\Big[\sum_{j=1}^mx^{3+\gamma_j}\gamma_j\sum_{i=1}^{n_j}\alpha_{ij}(\beta_{ij}+1)y^{1+\beta_{ij}}\nonumber\\
	&\quad-\sum_{j=1}^mx^{3+\gamma_j}(3+\gamma_j)\sum_{i=1}^{n_j}\alpha_{ij}\beta_{ij}y^{2+\beta_{ij}}\Big].\label{lasttemp}
	\end{align}
	Multiplying by $x^a y^{b-1}$, with $a,b$ given as above, we obtain 
	\begin{align}
	[x^ay^{b-1}(x\partial_x\Prad(x,y)+2\Theta(x,y))]&=\frac{1}{1-y}\sum_{j=1}^mx^{\gamma_j-\gamma_*}\Big[\gamma_j\sum_{i=1}^{n_j}\alpha_{ij}(\beta_{ij}+1)y^{\beta_{ij}-\beta_*}\nonumber\\
	&\quad-(3+\gamma_j)\sum_{i=1}^{n_j}\alpha_{ij}\beta_{ij}y^{1+\beta_{ij}-\beta_*}\Big].\label{Upsilon}
	\end{align}
	By~\eqref{cond3}, if $I_j=\varnothing$, then the corresponding $i$-sums in~\eqref{Upsilon} cancel out. In particular all terms for which $\gamma_j-\gamma_*<0$ vanish. Upon evaluating~\eqref{Upsilon} at $x=0$ the only term in the $j$-sum which does not vanish is therefore the $j=p$ term. It follows that
	\begin{align}
	(1-y) [x^ay^{b-1}(x\partial_x\Prad(x,y)+2\Theta(x,y))]_{x=0}&=\gamma_*\sum_{i=1}^{n_p}\alpha_{ip}(\beta_{ip}+1)y^{\beta_{ip}-\beta_*}\nonumber\\
	&\quad-(3+\gamma_*)\sum_{i=1}^{n_p}\alpha_{ip}\beta_{ip}y^{1+\beta_{ip}-\beta_*}\nonumber\\
	&=\sigma y^{-\beta_*}+\gamma_*\sum_{i\in I_p}\alpha_{ip}(\beta_{ip}+1)y^{\beta_{ip}-\beta_*}\nonumber\\
	&\quad-(3+\gamma_*)\sum_{i\in I_p}\alpha_{ip}\beta_{ip}y^{1+\beta_{ip}-\beta_*}.\label{Upsilon2}
	\end{align}
	Evaluating at $y=0$ and using $\beta_*\leq \beta_{ip}$, for all $i\in I_p$, the last term in~\eqref{Upsilon2} vanishes. If $\beta_{*}<0$ or if $\beta_*>0$ and $\sigma=0$, the first term also vanishes at $y=0$. Hence in this case we have $(\lambda+2\mu)\Upsilon(0,0)=\gamma_* \alpha_{qp}(\beta_*+1)\neq 0$.
	%If $\gamma_1=-3$ or $\beta_{i1}\neq -1$ for all $i=1,\dots, n_1$, then the second sum is zero by definition of $\beta_*$. However if $\gamma_1\neq -3$ and if there exists a (necessarily unique) $s\in \{1,\dots,n_1\}$ such that $\beta_{s1}=-1$, then 
	%\[
	%(3+\gamma_1)\sum_{i=1}^{n_1}a_{i1}\beta_{i1}(y^{1+\beta_{i1}-\beta_*})_{y=0}=-(3+\gamma_1)a_{s1}(y^{-\beta_*})_{y=0}.
	%\]
	%When $\beta_*>0$ this term is singular and thus in order for $\Upsilon$ to be defined at $y=0$ it is necessary to multiply~\eqref{lasttemp} further by the factor $y^c$, $c=\beta_*$, in which case the value of the function $(\lambda+2\mu)\Upsilon$ at $(0,0)$ is $(3+\gamma_1)a_{s1}\neq 0$. This proves the claim on the exponent $c$ in the case $\mathrm{(I)}$ and the proof in case $\mathrm{(II)}$ is similar (Remark: in case $\mathrm{(II)}$ $\gamma_2\neq -3$ holds due to the fact that $\gamma_2>\gamma_1$).
	%By the same argument we find
	%\[
	%(\lambda+2\mu)\Upsilon_0(y)=(3+\gamma_1)\sum_{i=1}^{n_1}a_{i1}\beta_{i1}y^{\beta_{i1}-\beta_*}-(1-y)^{-1}\sum_{i=1}^{n_1}a_{i1}(3\beta_{i1}-\gamma_1)y^{\beta_{i1}-\beta_*}.
	%\]
	When $\beta_*>0$ and $\sigma\neq0$ the first term is singular and thus in order for $\Upsilon$ to be defined at $y=0$ it is necessary to multiply~\eqref{Upsilon} further by the factor $y^c$, $c=\beta_*$, in which case 
	%   If there is no $q\in\{1,\dots, n_p\}$ such that $\beta_{qp}=\beta_*$, then $\Upsilon(0,0)=0$, while 
	%   if there exists a (necessarily unique) $q\in\{1,\dots, n_p\}$ such that $\beta_{qp}=\beta_*$, then  
	%
	$(\lambda+2\mu)  \Upsilon(0,0)=\sigma \neq0$.
	Now, to prove~\eqref{UAU} we use that the left hand side can be written as
	\begin{equation}\label{proof1}
	((b+c)\Upsilon_0(y)-y\Upsilon'_0(y))(1-y)+y\Upsilon_0(y)=-y^{1+b}\frac{d}{dy}\left(\frac{\Upsilon_0(y)}{y^{b}}(1-y)\right)+c(1-y)\Upsilon_0(y).
	\end{equation}
	%hence~\eqref{UAU} is equivalent to
	%\begin{equation}\label{uno}
	%\Upsilon_0(y)(1-y)-[\Upsilon_0(y)(1-y)]_{y=1}y^b=y^{b}\int_y^1\frac{a+3-a\tau}{\tau^{1+b}}\Gamma(0,\tau)\,d\tau.
	%\end{equation}
	By~\eqref{Upsilon2}, 
	\[
	\frac{\Upsilon_0(y)}{y^b}(1-y)=y^c\left[\frac{a}{\lambda+2\mu}\sum_{i\in I_p}^{n_p}\alpha_{ip}\beta_{ip}y^{1+\beta_{ip}}-\frac{3+a}{\lambda+2\mu}\sum_{i\in I_p}\alpha_{ip}(1+\beta_{ip})y^{\beta_{ip}}+\sigma\right],
	\]
	hence
	\begin{equation}\label{proof2}
	y^{1+b}\frac{d}{dy}\left(\frac{\Upsilon_0(y)}{y^{b}}(1-y)\right)=c(1-y)\Upsilon_0(y)+y^{1+b+c}F'(y),
	\end{equation}
	where
	\[
	F(y)=\frac{a}{\lambda+2\mu}\sum_{i\in I_p}^{n_p}\alpha_{ip}\beta_{ip}y^{1+\beta_{ip}}-\frac{3+a}{\lambda+2\mu}\sum_{i\in I_p}\alpha_{ip}(1+\beta_{ip})y^{\beta_{ip}}.
	\]
	Moreover, by~\eqref{semilasttemp},
	\[
	\Gamma(0,y)=(\lambda+2\mu)^{-1}\sum_{i\in I_p}\alpha_{ip}\beta_{ip}(1+\beta_{ip})y^{\beta_{ip}+b},
	\]
	hence $y^{1+b}F'(y)=(a y-(3+a))\Gamma(0,y)$. Thus~\eqref{proof2} becomes
	\[
	y^{1+b}\frac{d}{dy}\left(\frac{\Upsilon_0(y)}{y^{b}}(1-y)\right)=c(1-y)\Upsilon_0(y)+y^{c}(a y-(3+a))\Gamma(0,y).
	\]
	Substituting in~\eqref{proof1} concludes the proof of~\eqref{UAU}.
	%the right hand side of~\eqref{uno} is
	%\begin{align*}
	%y^{b}\int_y^1\frac{a+3-a\tau}{\tau^{1+b}}\Gamma(0,\tau)\,d\tau&=\frac{y^{b}}{\lambda+2\mu}\sum_{i\in I_p}a_{ip}\beta_{ip}(1+\beta_{ip})\int_y^1(a+3-a\tau)\tau^{\beta_{ip}-1}\,d\tau\\
	%&=(a+3)\frac{y^{b}}{\lambda+2\mu}\sum_{i\in I_p}a_{ip}(1+\beta_{ip})(1-y^{\beta_{ip}})\\
	%&\quad-a\frac{y^{b}}{\lambda+2\mu}\sum_{i\in I_p}a_{ip}\beta_{ip}(1-y^{\beta_{ip}+1})\\
	%&=\frac{\gamma_*}{\lambda+2\mu}\sum_{i\in I_p}a_{ip}(1+\beta_{ip})y^{\beta_{ip}+b}-\frac{(3+\gamma_*)}{\lambda+2\mu}\sum_{i\in I_p}a_{ip}\beta_{ip}y^{1+\beta_{ip}+b},
	%\end{align*}
	%where in the last step we use $a=-3-\gamma_*$ and~\eqref{cond3}. This concludes the proof of the proposition.
\end{proof}
{\it Remark.} As power-law hyperelastic materials satisfy Assumption~\ref{samepcenter}, then within this class of material models, any regular static self-gravitating elastic ball is strongly regular, see~Theorem~\ref{regtheo}.

\section{Applications of Theorem~\ref{maintheo} and further results}
\label{Examples}

We now discuss a selection of elastic material models that satisfy the assumptions in Section~\ref{assumptions} and to which therefore Theorem~\ref{maintheo} applies. All these examples belong to the class of power-law hyperelastic materials and thus Propostion~\ref{powerprop} applies as well. Although it is not always necessary, we assume that the Lam\'e coefficients satisfy
	\[
	3\lambda+2\mu>0,
	\]
	i.e., the bulk modulus $\kappa=\lambda+2\mu/3$ is positive, which is a standard condition in elasticity theory~\cite{Ciarlet83,MH} satisfied by all known materials. We refer to~\cite{AC18,AC19} for the derivation of the stored energy functions in this section starting from their original form in Lagrangian  coordinates.

\subsection{Saint Venant-Kirchhoff materials}
Saint Venant-Kirchhoff materials are hyperelastic with type $(3,2)$ power-law  stored energy function given by
\begin{align*}
\mathcal{W}(x,y)&=x^{-4}\left(\textfrac{1}{8}(\lambda+2\mu)y^{-4}+\textfrac{1}{2}\lambda y^{-2}+\textfrac{1}{2}(\lambda+\mu)\right)\\
&\quad+x^{-2}\left(-\textfrac{1}{4}(3\lambda+2\mu)y^{-2}-\textfrac{1}{2}(3\lambda+2\mu)\right)+\textfrac{3}{8}(3\lambda+2\mu).
\end{align*}
%
%note that $\gamma_*=\gamma_1=-4$ and $\beta_*=-4$.
%
%\begin{equation*}
%\mu>0\quad , \quad 3\lambda+2\mu >0. 
%\end{equation*}
%
%
By Proposition~\ref{powerprop}, Saint Venant-Kirchhoff materials satisfy Assumptions~\ref{As1}-\ref{samepcenter}, as well as Assumption~\ref{gamma-ups} with exponents
\[
a=-3-\gamma_*=1,\quad b=-\beta_*=4,\quad c=0,
\]
where $\gamma_*=\gamma_1=-4$ and $\beta_*=\beta_{11}=-4$, see~\eqref{definitions}.
In particular Assumption~\ref{abcass} is satisfied.
%Moreover $\Gamma(0,0)>0$ holds if and only if $
Next we compute
\begin{align*}
\mathcal{P}_\mathrm{rad}(x,y)&= (xy^3)^{-1}(-\textfrac{1}{2}(\lambda+2\mu)-(\lambda -\textfrac{1}{2}(3\lambda+2\mu)x^2)y^{2}), \\
%(2xy^3)^{-1}(x^2y^2(3\lambda+2\mu)-2\lambda y^2-\lambda-2\mu),\\
\mathcal{P}_\mathrm{tan}(x,y)&= (xy^3)^{-1}(-\textfrac{1}{2}\lambda y^{2}-((\lambda+\mu) - \textfrac{1}{2}(3\lambda+2\mu)x^2)y^4),\\
%(2xy)^{-1}(x^2y^2(3\lambda+2\mu)-2(\lambda+\mu)y^2-\lambda)\\
\partial_x\mathcal{P}_\mathrm{rad}(x,y)&=(x^2y^3)^{-1}(\textfrac{1}{2}(\lambda+2\mu)+(\lambda+\textfrac{1}{2}(3\lambda+2\mu)x^2)y^2),\\
\partial_y\mathcal{P}_\mathrm{rad}(x,y)&=(xy^4)^{-1}(\textfrac{3}{2}(\lambda+2\mu)+(\lambda-\textfrac{1}{2}(3\lambda+2\mu)x^2 ) y^2).
%\Theta(x,y)&=(xy^3)^{-1}(\textfrac{1}{2}(\lambda+2\mu) +((\lambda+\mu)-\textfrac{1}{2}(3\lambda+2\mu)x^2)y^2)(1+y).
\end{align*}
As $\partial_y\mathcal{P}_\mathrm{rad}(x,1)=x^{-1}(\textfrac{1}{2}(5\lambda+6\mu)-\textfrac{1}{2}(3\lambda+2\mu)x^2)$, and having assumed $3\lambda+2\mu>0$,
Assumption~\ref{initial-reg} is satisfied with 
\[
X_\flat=\sqrt{\frac{5\lambda+6\mu}{3\lambda+2\mu}}.
\]
Note also that $\partial_y\Prad(X_\flat,1)=0$ and $\partial_y\Prad(x,1)<0$ for $x>X_\flat$, that is $\partial_\delta\prad(X_\flat^3,X_\flat^3)=0$ and $\partial_\delta\prad(\delta_c,\delta_c)<0$ for $\delta_c>X_\flat^3$, hence the condition $\delta_c<X_\flat^3$ is necessary for the strict hyperbolicity condition~\eqref{stricthypcond} to be satisfied at the center. 
The functions $\Gamma,\Upsilon$ are given by
\begin{align*}
\Gamma(x,y)%&=1+\frac{1}{2}(1-y^2)+\frac{3\lambda+2\mu}{2(\lambda+2\mu)}(1-x^2)y^2\\
&=\frac{3}{2}(1-y^2)+\frac{3\lambda+2\mu}{2(\lambda+2\mu)}(X_\flat^2-x^2)y^2, \\
\Upsilon(x,y)&=
%-x^2 y^3 (3 \lambda +2 \mu )-\textfrac{1}{2} x^2 y^2 (3 \lambda +2
%  \mu )+2 y^3 (\lambda +\mu )+y^2 (3 \lambda +2 \mu )+y (\lambda +2 \mu )\\
%   &\quad +\textfrac{3}{2} (\lambda +2 \mu )\\
\frac{3}{2}(1+\frac{y}{3}(5-y))(1-y)+\frac{3}{2}\frac{3\lambda+2\mu}{\lambda+2\mu}(X_\flat^2-x^2)y^2,
\end{align*}
which yields
\begin{align*}
&\Gamma(x,0)=\frac{3}{2},\quad\quad         \Gamma(0,y)=\frac{3}{2}+\frac{\lambda}{\lambda+2\mu}y^2>0,\quad y\in [0,1), \\
&\Upsilon(x,0)=\frac{3}{2},\quad\quad\Upsilon(0,y)=\frac{3}{2}+y+\frac{3}{2}\frac{3\lambda+2\mu}{\lambda+2\mu} y^2+\frac{y^3}{2}:=\Upsilon_0(y)>0,\quad y\in (0,1).
\end{align*}
%
%Thus the exponents $a,b,c$ for Saint Venant-Kirchhoff materials satisfy the conditions in Assumption~\ref{abcass}.
Since $\Gamma(x,y)>0$ for $(x,y)\in [0,X_\flat)\times[0,1)$ and, by~\eqref{UAU},
\[
(b\Upsilon_0(y)-y\Upsilon_0'(y))(1-y)+y\Upsilon_0(y)=(4-y)\Gamma(0,y),
\]
then Assumption~\ref{hardone} is satisfied as well. At this point we have shown that the hypotheses of Theorem~\ref{global-regular} hold for Saint Venant-Kirchhoff materials. We now prove that the remaining assumptions in Theorem~\ref{maintheo} are also satisfied. Since
\[
\mathcal{P}_\mathrm{tan}(x,y)=y^2\mathcal{P}_\mathrm{rad}(x,y)+\mu(xy)^{-1}(1-y^2),
\]
then Assumption~\ref{ptanpositive} is satisfied with $X_\sharp =\infty$. Let $X=\min(X_\sharp ,X_\flat)=X_\flat$; 
since 
\begin{equation}\label{pcentersvk}
\mathcal{P}_\mathrm{rad}(x,1)=\textfrac{1}{2}(3\lambda+2\mu)x^{-1}(x^2-1),
\end{equation}
and since $\Prad(1,y)=\textfrac{1}{2}(\lambda+2\mu)(y^2-1)<0$, for all $0<y<1$, then Assumption~\ref{neg-press} is satisfied. 
Note that in this case the constant $\Delta$ in Theorem~\ref{maintheo} is finite and given by $\Delta=\Delta_\flat:=X_\flat^{3}$, see~\eqref{deltaflat}. Moreover, by~\eqref{pcentersvk}, when $x_c=\delta_c^{1/3}\leq 1$ (i.e., $\rho_c\leq\mathcal{K}$) the central pressure is $p_\mathrm{rad}(0)=\Prad(x_c,1)\leq0$ and therefore the elastic body is not a ball, which means that the condition $\delta_c>1$ in Theorem~\ref{maintheo} is necessary in the Saint Venant-Kirchhoff model. 
%Finally we observe that , hence the inequality~\eqref{mr} in Lemma~\ref{lemmino} holds for regular static self-gravitating elastic balls in  the Saint Venant-Kirchhoff model. 
Thus Theorem~\ref{maintheo} for Saint Venant-Kirchhoff materials can be sharpened to the following.
\begin{theorem}
	When the elastic material is given by the Saint Venant-Kirchhoff model, the condition $\delta_c:=\rho_c/ \mathcal{K}>1$ is necessary for the existence of regular static self-gravitating balls, while the condition $\delta_c<\Delta_\flat$, where
	\[
	\Delta_\flat=\left(\frac{5\lambda+6\mu}{3\lambda+2\mu}\right)^{3/2},
	\] 
	 is necessary for the strict hyperbolicity condition~\eqref{stricthypcond} to be satisfied at the center. When $1<\delta_c<\Delta_\flat$ 
	there exists a unique strongly regular static self-gravitating ball with central density $\rho(0)=\rho_c$. Moreover $\partial_\delta\prad(\delta_c,\delta_c)> 0$ and
	\[
\partial_\delta\prad(\delta(r),\eta(r))>0,\quad \rho(r)<\rho_c,\quad \frac{4\pi}{3}\max(\rho(r),\mathcal{K})r^3<m(r)<\frac{4\pi}{3}\rho_c r^3
\]
hold for all $r\in (0,R]$, where $R>0$ is the radius of the ball. 
%	Additionally the mass $M$ and the radius $R$ of the ball satisfy
%	\begin{equation}\label{MRsvk2}
%	R<\frac{\mathcal{K}}{\mu}\left(\frac{5\lambda+6\mu}{3\lambda+2\mu}\right)^2M.
%	\end{equation}
%	and if $\lambda>0$ we also have
%	\begin{equation}\label{MRsvk}
%	R< \sqrt{\frac{3\lambda+2\mu}{2\lambda}}\left(\frac{3M}{4\pi\mathcal{K}}\right)^{1/3}, \qquad \text{for}\quad \lambda>0.
%	\end{equation}
\end{theorem}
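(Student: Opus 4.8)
The plan is to assemble the statement from three ingredients: (a) the verification — already carried out in the running discussion above — that the Saint Venant-Kirchhoff constitutive functions satisfy Assumptions~\ref{As1}--\ref{neg-press}, so that Theorem~\ref{maintheo} applies with $\Delta=\Delta_\flat$; (b) the strong-regularity upgrade furnished by Theorem~\ref{regtheo}; and (c) two elementary necessity arguments for the bounds $1<\delta_c<\Delta_\flat$.

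For (a) I would simply collect the facts established above. Since the SVK stored energy is power-law hyperelastic of type $(3,2)$, Proposition~\ref{powerprop} gives Assumptions~\ref{As1}--\ref{samepcenter} and reduces Assumption~\ref{initial-reg2}(ii) to Assumption~\ref{initial-reg}(i); the displayed expression for $\partial_y\Prad$ identifies the exponents $a=1$, $b=4$, $c=0$ (with $\gamma_*=\gamma_1=-4$, $\beta_*=\beta_{11}=-4$), so Assumptions~\ref{gamma-ups} and~\ref{abcass} hold. The explicit forms of $\Gamma,\Upsilon$ show $\Gamma>0$ on $[0,X_\flat)\times[0,1)$ with $X_\flat=\sqrt{(5\lambda+6\mu)/(3\lambda+2\mu)}$ and $\partial_y\Prad(X_\flat,1)=0$, and the identity~\eqref{UAU} with $c=0$, which here reduces to $(b\Upsilon_0(y)-y\Upsilon_0'(y))(1-y)+y\Upsilon_0(y)=(4-y)\Gamma(0,y)>0$ on $(0,1)$, yields~\eqref{dulac-condition}; hence Assumptions~\ref{initial-reg}--\ref{hardone} hold. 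Finally the factorization $\Ptan=y^2\Prad+\mu(xy)^{-1}(1-y^2)$ gives Assumption~\ref{ptanpositive} with $X_\sharp=\infty$, and~\eqref{pcentersvk} together with $\Prad(1,y)=\tfrac12(\lambda+2\mu)(y^2-1)<0$ gives Assumption~\ref{neg-press}. Therefore $X=\min(X_\flat,X_\sharp)=X_\flat$, $\Delta=X_\flat^3=\Delta_\flat$, and Theorem~\ref{maintheo} applies verbatim: for $1<\delta_c<\Delta_\flat$ there is a unique regular static self-gravitating ball with $\rho(0)=\rho_c$, satisfying~\eqref{central-conditions} and the bounds~\eqref{bounds}, which are precisely the "moreover" claims.

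For (b), since SVK is hyperelastic, power-law and satisfies Assumption~\ref{samepcenter} (i.e.\ $\chi\equiv0$), the last statement of Theorem~\ref{regtheo} shows that the hypotheses~\eqref{chi} of that theorem hold — the condition $\partial_\eta\widehat p_{\mathrm{iso}}(\delta,\delta)=0$ being automatic from $\chi\equiv0$ via~\eqref{isopressid} — while $\prad,\ptan$ are rational with no poles in $(0,\infty)^2$, hence in $C^2$. The remaining hypothesis $\partial_\delta\prad(\delta_c,\delta_c)\neq0$ follows from~\eqref{central-conditions}, equivalently from $x_c<X_\flat\Rightarrow\partial_y\Prad(x_c,1)>0$ together with $\partial_\delta\prad(\delta_c,\delta_c)=\partial_y\Prad(x_c,1)/\delta_c$. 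Thus Theorem~\ref{regtheo} renders the ball strongly regular.

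For (c), if $(\rho,p_\mathrm{rad},p_\mathrm{tan})$ is any regular ball then Definition~\ref{balldef}(ii) forces $p_\mathrm{rad}(0)=\prad(\delta_c,\delta_c)=\Prad(x_c,1)>0$; since by~\eqref{pcentersvk} $\Prad(x_c,1)=\tfrac12(3\lambda+2\mu)x_c^{-1}(x_c^2-1)$ and $3\lambda+2\mu>0$, this forces $x_c>1$, i.e.\ $\delta_c>1$. For the second necessity claim, from $\partial_y\Prad(x,1)=\tfrac12x^{-1}((5\lambda+6\mu)-(3\lambda+2\mu)x^2)$ one reads off that $\partial_\delta\prad(\delta,\delta)=\partial_y\Prad(\delta^{1/3},1)/\delta>0$ holds exactly when $\delta<X_\flat^3=\Delta_\flat$; since the strict hyperbolicity condition~\eqref{stricthypcond} must hold at $r=0$, where $\delta(0)=\delta_c$, the bound $\delta_c<\Delta_\flat$ is necessary. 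This closes the proof. I do not expect a serious obstacle here, since every assumption has already been verified in the running discussion; the only point requiring care is the consistent translation between the $(x,y)$ and $(\delta,\eta)$ variables in the two necessity arguments and in the identity $\partial_\delta\prad(\delta_c,\delta_c)=\partial_y\Prad(x_c,1)/\delta_c$ used to transfer~\eqref{central-conditions} and~\eqref{stricthypcond} back to the physical variables.
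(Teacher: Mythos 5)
Your proposal is correct and follows essentially the same route as the paper: verify Assumptions~\ref{As1}--\ref{neg-press} for the Saint Venant-Kirchhoff constitutive functions (via Proposition~\ref{powerprop} and the explicit formulas for $\Prad$, $\Ptan$, $\Gamma$, $\Upsilon$), apply Theorem~\ref{maintheo} with $\Delta=X_\flat^3=\Delta_\flat$, upgrade to strong regularity via Theorem~\ref{regtheo}, and obtain the necessity of $1<\delta_c<\Delta_\flat$ from the sign of $\Prad(x_c,1)$ and of $\partial_y\Prad(x,1)$. The variable translation $\partial_\delta\prad(\delta_c,\delta_c)=\partial_y\Prad(x_c,1)/\delta_c$ you flag is handled correctly.
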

%\begin{proof}
	
%	The inequality~\eqref{MRsvk} is equivalent to $x(r)>1$, for $r\in (0,R]$. Since $x(0)>1$, ti suffices to show that $\Prad(1,y)<0$, for all $y\in (0,1)$. In fact $\Prad(1,y)=\textfrac{1}{2}(\lambda+2\mu)(y^2-1)<0$.
%	since $y(_*)<1$. Hence $x(r)>1$ must hold in the interior of the ball, which proves~\eqref{MRsvk}. As to~\eqref{MRsvk2}, we first observe that we use that, by~\eqref{estR} in Lemma~\ref{lemmino} and $p_\mathrm{tan}(R)=\mu (xy)^{-1}(R)(1-y(R)^2)$, see~\eqref{ptansvk},
%	\begin{align*}
%	R\leq\mathcal{K}\frac{M}{2\mu}\delta(R)\frac{x(R)y(R)}{1-y(R)^2}....
%	\end{align*}
%	Solving $\Prad(x,y)=0$ in terms of $y$ we obtain
%	\[
%	\Prad(x,y)=0 \Leftrightarrow y^2=\frac{\lambda+2\mu}{(3\lambda+2\mu)x^2-2\lambda}.
%	\]
%	Hence $x^2(R)>2\lambda/(3\lambda+2\mu)$ must hold, which is equivalent to~\eqref{MRsvk}.%
%\end{proof}

\subsection{John (harmonic) materials}
John materials have the following type (1,3,2) power-law stored energy function:
\begin{align*}
\mathcal{W}(x,y)&=-2\mu x^{-3}y^{-1}+x^{-2}\left(\textfrac{1}{2}(\lambda+2\mu)y^{-2}+2(\lambda+2\mu)y^{-1}+2(\lambda+2\mu)\right)\\
&\quad-x^{-1}\left((3\lambda+4\mu)y^{-1}+2(3\lambda+4\mu)\right)+\textfrac{1}{2}(9\lambda+10\mu).
\end{align*}
By Proposition~\ref{powerprop}, John materials satisfy Assumptions~\ref{As1},~\ref{samepcenter} as well as Assumption~\ref{gamma-ups} with exponents
\[
a=-3-\gamma_*=-1,\quad b=-\beta_*=2,\quad c=0,
\]
where $\gamma_*=\gamma_2=-2$ and $\beta_*=\beta_{21}=-2$, see~\eqref{definitions}.
In particular Assumption~\ref{abcass} is satisfied.
%%
%%where the Lam\'e constants satisfy
%%%
%%\begin{equation*}
%%\mu>0\quad , \quad 3\lambda+2\mu >0. 
%%\end{equation*}
%%
Moreover
\begin{align*}
\mathcal{P}_\mathrm{rad}(x,y)&=2\mu+xy^{-1}(-(\lambda +2 \mu )-(2(\lambda +2 \mu )-(3 \lambda +4 \mu )x)y),\\
\mathcal{P}_\mathrm{tan}(x,y)&=2\mu+xy^{-1}(-(\lambda +2 \mu )y-(2(\lambda +2 \mu )-(3 \lambda +4 \mu )x)y^2),\\
%\Theta(x,y)&=xy^{-1}((\lambda+2\mu)+(2(\lambda+2\mu)-(3\lambda+4\mu)x)y) \\
% (-3\lambda xy-4\mu x+2(\lambda-2\mu)y+\lambda+2\mu), \\
\partial_x\mathcal{P}_\mathrm{rad}(x,y)&=y^{-1}(-(\lambda +2 \mu )-(2(\lambda +2 \mu )-2(3\lambda+4\mu)x)y), \\
%-(2 y+1) (\lambda +2 \mu )+x^2 (6 \lambda +8 \mu ),\\
\partial_y\mathcal{P}_\mathrm{rad}(x,y)&=xy^{-2}(\lambda+2\mu),
\end{align*}
from which we see that Assumption~\ref{initial-reg} is satisfied with $X_\flat=\infty$.
In this case $\Gamma$ and $\Upsilon$ are independent of $x$ and given by
\[
\Gamma(x,y)=1,\quad
\Upsilon(x,y)=1+2y.
\]
%The exponents $a,b,c$ satisfy the restrictions imposed in Assumption~\ref{abcass}.
By~\eqref{UAU}
\[
(b\Upsilon_0(y)-y\Upsilon_0'(y))(1-y)+y\Upsilon_0(y)=2+y
\]
and therefore Assumption~\ref{hardone} is satisfied. Thus Theorem~\ref{global-regular} holds for John materials. Next we note that $\Ptan(x,y)$ can be rewritten as
%At this point we have established Theorem~\ref{global-regular} for John materials.
%To establish Theorem~\ref{maintheo}, note that
%
\begin{equation*}
\mathcal{P}_\mathrm{tan}(x,y)=y\Prad(x,y)+2\mu(1-y),
\end{equation*}
hence Assumption~\ref{ptanpositive} is satisfied with $X_\sharp =\infty$. In particular $X=\min(X_\flat,X_\sharp )=\infty$ and so $\Delta$ in Theorem~\ref{maintheo} is given by $\Delta=\infty$, see~\eqref{deltaflat}. Since
\begin{equation}\label{pcenterjohn}
\mathcal{P}_\mathrm{rad}(x,1)=(2\mu-(3\lambda+4\mu)x)(1-x)>0,\quad \text{for $x\in (0,X_*)\cup(1,\infty)$},
\end{equation}
where
\[
X_*=\frac{2\mu}{3\lambda+4\mu}<1,
\]
and
\[
\Prad(1,y)=-(\lambda+2\mu)y^{-1}(1-y)<0,\quad \text{for all $y\in (0,1)$,}
\]
then Assumption~\ref{neg-press} is satisfied. Thus Theorem~\ref{maintheo} for John materials can be sharpened to the following.
\begin{theorem}\label{johntheo}
	When the elastic material is given by the John model, for all $\delta_c:=\rho_c/\mathcal{K}>1$ there exists a unique strongly regular static self-gravitating ball with central density $\rho(0)=\rho_c$. Moreover $\partial_\delta\prad(\delta_c,\delta_c)> 0$ and
	\[
\partial_\delta\prad(\delta(r),\eta(r))>0,\quad \rho(r)<\rho_c,\quad \frac{4\pi}{3}\max(\rho(r),\mathcal{K})r^3<m(r)<\frac{4\pi}{3}\rho_c r^3
\]
hold for all $r\in (0,R]$, where $R>0$ is the radius of the ball. 
\end{theorem}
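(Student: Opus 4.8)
The plan is to verify, one assumption at a time, that the John stored energy function meets the hypotheses of Theorem~\ref{maintheo}, and then read off the sharpened statement. The heavy lifting is already done by Proposition~\ref{powerprop} and Theorem~\ref{global-regular}, so most of the work is a finite checklist of explicit polynomial computations, together with the identification of the constant $\Delta$.

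\textbf{Step 1: the power-law structure and Assumptions~\ref{As1}--\ref{abcass}.} First I would exhibit the stored energy function $\mathcal{W}(x,y)$ of John materials and observe that it is of power-law type $(1,3,2)$ in the sense of Definition~\ref{powerdef}; in particular the normalization conditions~\eqref{condsyst} hold, so by Proposition~\ref{powerprop} Assumptions~\ref{As1} and~\ref{samepcenter} are automatic. I would then compute $\partial_y\Prad(x,y)$ from~\eqref{hyperxy} (or directly from the formula for $\Prad$), identify $\gamma_*=\gamma_2=-2$ and $\beta_*=\beta_{21}=-2$, and conclude via Proposition~\ref{powerprop} that $a=-3-\gamma_*=-1$, $b=-\beta_*=2$, $c=0$, which checks Assumption~\ref{gamma-ups} and Assumption~\ref{abcass} ($a>-4$, $b>0$, $c=0$). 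Since the material is hyperelastic, Assumption~\ref{initial-reg}(ii) follows from~(i) by Lemma~\ref{hypergamma}.

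\textbf{Step 2: the functions $\Gamma,\Upsilon$ and Assumptions~\ref{initial-reg}--\ref{hardone}.} The key simplification for John materials is that $\partial_y\Prad(x,y)=xy^{-2}(\lambda+2\mu)>0$ for all $(x,y)\in(0,\infty)^2$, so $\Gamma(x,y)\equiv 1$ and $X_\flat=\infty$; this disposes of Assumptions~\ref{initial-reg}(i) and~\ref{initial-reg2}(ii) and of the positivity bound~\eqref{positivityGamma}. Computing $x\partial_x\Prad+2\Theta$ and multiplying by $x^a y^{b-1}$ I would find $\Upsilon(x,y)=1+2y$, independent of $x$, so $\Upsilon_0(y)=1+2y>0$ on $(0,1)$. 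For the Dulac condition~\eqref{dulac-condition} I would invoke the identity~\eqref{UAU} from Proposition~\ref{powerprop}: with $c=0$ and $a=-1\geq -3$ it gives $(b\Upsilon_0(y)-y\Upsilon_0'(y))(1-y)+y\Upsilon_0(y)=y^0((a+3)-ay)\Gamma(0,y)=(2+y)\cdot 1=2+y>0$. Hence Assumption~\ref{hardone} holds and all hypotheses of Theorem~\ref{global-regular} are satisfied.

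\textbf{Step 3: truncation Assumptions~\ref{ptanpositive}--\ref{neg-press} and identification of $\Delta$.} I would rewrite the tangential pressure as $\Ptan(x,y)=y\Prad(x,y)+2\mu(1-y)$, which immediately shows that $\Prad\geq 0$ and $0<y<1$ force $\Ptan>0$, so Assumption~\ref{ptanpositive} holds with $X_\sharp=\infty$; consequently $X=\min(X_\flat,X_\sharp)=\infty$ and $\Delta=X^3=\infty$ by~\eqref{deltaflat}. For Assumption~\ref{neg-press} I would evaluate $\Prad(x,1)=(2\mu-(3\lambda+4\mu)x)(1-x)$, which is strictly positive for $x>1$ (both factors negative), giving the first part of the assumption on $(1,X)=(1,\infty)$, and $\Prad(1,y)=-(\lambda+2\mu)y^{-1}(1-y)<0$ for $0<y<1$, giving~\eqref{neg}. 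Then Theorem~\ref{maintheo} applies for every $\delta_c=\rho_c/\mathcal{K}\in(1,\Delta)=(1,\infty)$, the resulting ball is strongly regular by the remark after Proposition~\ref{powerprop} (the material is hyperelastic power-law, hence satisfies the hypotheses of Theorem~\ref{regtheo}), and the bounds~\eqref{bounds} translate verbatim into the stated inequalities; the positivity $\partial_\delta\prad(\delta_c,\delta_c)>0$ is the $r\to 0^+$ limit of the first bound in~\eqref{bounds}. The only genuinely delicate point is making sure the computation of $\Upsilon$ and the verification of the Dulac condition are carried out correctly; using the ready-made identity~\eqref{UAU} rather than differentiating $\Upsilon_0$ by hand is the safest route, and with $\Gamma\equiv 1$ there is no real obstacle beyond bookkeeping.
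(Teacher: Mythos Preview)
Your proposal is correct and follows essentially the same route as the paper: verify that the John stored energy is power-law of type $(1,3,2)$, invoke Proposition~\ref{powerprop} to dispatch Assumptions~\ref{As1}--\ref{abcass} with $(a,b,c)=(-1,2,0)$, compute $\Gamma\equiv 1$, $\Upsilon=1+2y$ to check Assumptions~\ref{initial-reg}--\ref{hardone} (using~\eqref{UAU} for the Dulac condition), rewrite $\Ptan=y\Prad+2\mu(1-y)$ to get $X_\sharp=\infty$, and evaluate $\Prad(x,1)$ and $\Prad(1,y)$ for Assumption~\ref{neg-press}; then apply Theorem~\ref{maintheo} with $\Delta=\infty$ and upgrade to strong regularity via Theorem~\ref{regtheo}. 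The only cosmetic slip is that the positivity $\partial_\delta\prad(\delta_c,\delta_c)>0$ is part of the conclusion~\eqref{central-conditions} of Theorem~\ref{maintheo} (or follows directly from $\partial_y\Prad(x,y)=(\lambda+2\mu)xy^{-2}>0$), not merely the limit of~\eqref{bounds}, which would a priori only give a non-strict inequality.
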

%\begin{proof}
%	It only remains to prove~\eqref{MRjohn}. By~\eqref{cappero} and $x_c>1$, $x(r)>1$ holds in the interior of the ball, which is equivalent to~\eqref{MRjohn}.
	% is equivalent to $x(r)>1$ when $x(0)>1$ (i.e., $\rho_c>\mathcal{K}$). It remains to show that the condition $\rho_c>\mathcal{K}$ is necessary. As to the latter, the issue here is that the central pressure is positive for $\delta_c=\rho_c/\mathcal{K}\in (0,\delta_\sharp)$, where $\delta_\sharp=x_\sharp^3$ and $x_\sharp$ is given by~\eqref{xsharp}. To prove that central data $\delta_c\in (0,\delta_\sharp)$ do not give rise to a ball, we recall from the proof of Theorem~\ref{global-regular} that $x(r)$ is decreasing, hence $x(r)\leq x_\sharp$. HOW TO GO ON???
	%
	%The bound~\eqref{MRjohn} follows by the fact that
	%\[
	%\Prad(x,y)=0\Leftrightarrow y=\frac{x (\lambda +2 \mu )}{2 \mu +x^2 (3 \lambda +4 \mu )-2 x (\lambda +2 \mu )}
	%\]
	%Hence $y>0$ requires $h(x):=2 \mu +x^2 (3 \lambda +4 \mu )-2 x (\lambda +2 \mu )>0$ (if the latter holds, then $y<1$ follows by $x\in (0,x_\sharp)\cup(1,\infty)$). Now, if $\lambda+2\mu>\lambda^2/(2\mu)$, then $h(x)>0$ for all $x>0$. Otherwise $h(x)>0$ for $0<x<x_-$ and $x>x_+$, where
	%\[
	%x_\pm=\frac{\lambda+2\mu\pm\sqrt{\lambda^2-2\mu(\lambda+2\mu)}}{3\lambda+4\mu}.
	%\] 
	%Note that $x_\pm>0$. 
%\end{proof}
{\it Remark.} Due to~\eqref{pcenterjohn}, the radial pressure at the center is also positive if we choose the center datum $x_c$ such that $0<x_c<X_*$, i.e., $\delta_c<\Delta_*$, where $0<\Delta_*:=X_*^3<1$. We have not been able to prove or disprove the existence of finite radius ball solutions for these data, hence the question of whether the condition $\delta_c>1$ in Theorem~\ref{johntheo} is necessary remains open.

{\it Remark.} The John model presented here is a special case of a more general class of hyperelastic power-law harmonic materials, see~\cite{AC19}.

\subsection{Hadamard materials}
The stored energy function $\widehat{w}(\delta,\eta)$ of Hadamard materials is defined up to an additive term  $h(\delta^{-2})$, where $h$ is a positive function. Choosing $h(z)\sim z^k$, with $k$ independent of $\lambda,\mu$, gives the following power-law stored energy function~\cite{AC18,AC19}:
\begin{align*}
\mathcal{W}(x,y)&=x^{-4}\left(\frac{\lambda+2k\mu}{2(1-k)}y^{-2}+\frac{\lambda+2k\mu}{4(1-k)}\right)+x^{-2}\left(\frac{\lambda+2(2k-1)\mu}{4(k-1)}y^{-2}+\frac{\lambda+2(2k-1)\mu}{2(k-1)}\right)\\
&\quad+\frac{\lambda+2\mu}{4k(k-1)}x^{-6k}y^{-2k}+\frac{\lambda+2(1-3k+3k^2)\mu}{4k(1-k)},\quad k\neq0,\quad k\neq 1.
\end{align*}
%Moreover we choose $k<1/3$, so that the term $x^{-6k}x^{-2k}$ is less singular then both $x^{-4}y^{-2}$ and $x^{-2}y^{-2}$; this choice is not strictly necessary but simplifies the discussion to follow.
By Proposition~\ref{powerprop}, Hadamard materials satisfy Assumptions~\ref{As1},~\ref{samepcenter} as well as Assumption~\ref{gamma-ups} with exponents given by
\begin{align*}
&k>1\Rightarrow a=-3+6k,\quad b=2k,\quad c=0,\\
&\textfrac{2}{3}<k< 1\Rightarrow a=-3+6k,\quad b=2, \quad c=0,\\
&k\leq\textfrac{2}{3},\ k\neq 0\Rightarrow a=1,\quad b=2,\quad c=0.
\end{align*}
In particular Assumption~\ref{abcass} is satisfied for all $k\neq 0$, $k\neq 1$.
%It follows that Assumption~\ref{gamma-ups} is violated for $k\leq 0$.
For the verification of the remaining assumptions we restrict ourselves to the case
\[
k=1/2,
\]
which simplifies considerably the calculations (in particular, the powers in the stored energy function become integers). With this choice, the stored energy function of Hadamard materials takes the form
\begin{align}
\mathcal{W}(x,y)&=x^{-4}\left((\lambda+\mu)y^{-2}+\textfrac{1}{2}(\lambda+\mu)\right)-(\lambda+2\mu)x^{-3}y^{-1}\nonumber\\
&\quad+x^{-2}\left(-\textfrac{1}{2}\lambda y^{-2}-\lambda\right)+\textfrac{1}{2}(2\lambda+\mu),\label{storedhad}
\end{align}
which is of type (2,1,2) with $a=1$, $b=2$ and $c=0$.
We also require
\[
\lambda>0,
\]
which is a condition satisfied by most materials.
For the stored energy function~\eqref{storedhad} we have
\begin{align*}
\mathcal{P}_\mathrm{rad}(x,y)&=(xy)^{-1}(\lambda x^2+(\lambda+2\mu)xy-2(\lambda+\mu)),\\
\mathcal{P}_\mathrm{tan}(x,y)&=(xy)^{-1}(\lambda x^2y^2+(\lambda+2\mu)xy-(\lambda+\mu)y^2-\lambda-\mu),\\
 \partial_x\mathcal{P}_\mathrm{rad}(x,y)&=(x^2y)^{-1}(\lambda x^2+2(\lambda+\mu)),\\
\partial_y\mathcal{P}_\mathrm{rad}(x,y)&=(xy^2)^{-1}(2 \lambda +2 \mu -\lambda  x^2).
\end{align*}
%and so
%
%\begin{align*}
%, \\
%  
%\end{align*}
Assumption~\ref{initial-reg} is satisfied with
\[
X_\flat=\sqrt{2\frac{\lambda+\mu}{\lambda}}.
\]
Moreover $\partial_y\Prad(X_\flat,1)=0$ and $\partial_y\Prad(x,1)<0$ for $x>X_\flat$, that is $\partial_\delta\prad(X_\flat^3,X_\flat^3)=0$ and $\partial_\delta\prad(\delta_c,\delta_c)<0$ for $\delta_c>X_\flat^3$, hence the condition $\delta_c<X_\flat^3$ is necessary for the strict hyperbolicity condition~\eqref{stricthypcond} to be satisfied at the center. 
%Moreover, as in the case of the Saint-Venant-Kirchhoff model, also for Hadamard materials the condition $x<X$ is necessary for the strict hyperbolicity condition~\eqref{stricthypcond} to be satisfied at the center.
The functions $\Gamma,\Upsilon$ are
\[
 \Gamma(x,y)=\frac{\lambda}{\lambda+2\mu}(X_\flat^2-x^2),\quad
\Upsilon(x,y)=(2+y)\frac{\lambda}{\lambda+2\mu}(X_\flat^2-x^2)+\frac{\lambda}{\lambda+2\mu}(1-y)x^2.
%2\frac{\lambda+\mu}{\lambda+2\mu}(2+y)-\frac{\lambda}{\lambda+2\mu}x^2(1+2y).
\]
By~\eqref{UAU}, $\Upsilon_0(y)=\Upsilon(0,y)$ satisfies
\[
\Upsilon_0(y)=2\frac{\lambda+\mu}{\lambda+2\mu}(2+y)>0,\quad (b\Upsilon_0(y)-y\Upsilon_0'(y))(1-y)+y\Upsilon_0(y)=(4-y)\Gamma(0,y),
\]
hence Assumption~\ref{hardone} holds. Thus Theorem~\ref{global-regular} holds true for this particular Hadamard material model. Moreover since
\[
\Ptan(x,y)=(xy)^{-1}(1-y^2)(\lambda+\mu-\lambda x^2)+\Prad(x,y)
\]
then Assumption~\ref{ptanpositive} is satisfied with
\[
X_\sharp =\sqrt{\frac{\lambda+\mu}{\lambda}}.
\]
In particular
\[
X=\min(X_\flat,X_\sharp )=X_\sharp ,
\]
hence $\Delta=\Delta_\sharp:=X_\sharp ^3$ in Theorem~\ref{maintheo}. Finally
\[
\Prad(x,1)=x^{-1}(x-1)(\lambda x+2(\lambda+\mu)),
\]
and $\Prad(1,y)=(\lambda+2\mu)y^{-1}(y-1)<0$, for $y\in (0,1)$, 
which shows that Assumption~\ref{neg-press} holds and that $x_c>1$ (i.e., $\delta_c>1$) is necessary for the existence of elastic balls. Thus we have proved the following final theorem.

\begin{theorem}\label{hadtheo}
When the elastic material is given by the Hadamard model with stored energy function~\eqref{storedhad} and $\lambda>0$, the condition $\delta_c:=\rho_c/ \mathcal{K}>1$ is necessary for the existence of regular static self-gravitating balls. For 
\[
1<\delta_c<\left(\frac{\lambda+\mu}{\lambda}\right)^{3/2}=\Delta_\sharp
\] 
there exists a unique strongly regular static self-gravitating ball with central density $\rho(0)=\rho_c$. Moreover $\partial_\delta\prad(\delta_c,\delta_c)> 0$ and
\[
\partial_\delta\prad(\delta(r),\eta(r))>0,\quad \rho(r)<\rho_c,\quad \frac{4\pi}{3}\max(\rho(r),\mathcal{K})r^3<m(r)<\frac{4\pi}{3}\rho_c r^3
\]
hold for all $r\in (0,R]$, where $R>0$ is the radius of the ball. 
\end{theorem}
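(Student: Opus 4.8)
The plan is to obtain Theorem~\ref{hadtheo} as a specialization of Theorem~\ref{maintheo}, supplemented by Theorem~\ref{regtheo} to upgrade regularity to strong regularity; the bulk of the argument is therefore the verification that the constitutive functions $\Prad,\Ptan$ of the Hadamard model with $k=1/2$ and $\lambda>0$ satisfy Assumptions~\ref{As1}--\ref{neg-press}, together with the explicit computation of the constant $\Delta$.

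First I would observe that the stored energy function~\eqref{storedhad} is a power-law stored energy function of type $(2,1,2)$, so Proposition~\ref{powerprop} immediately yields Assumptions~\ref{As1} and~\ref{samepcenter}, identifies the exponents $(a,b,c)=(1,2,0)$ (hence Assumption~\ref{abcass} holds), and reduces Assumption~\ref{initial-reg2} to Assumption~\ref{initial-reg} via Lemma~\ref{hypergamma}. Next, a direct computation of $\Prad,\Ptan,\partial_x\Prad,\partial_y\Prad$ from~\eqref{hyperxy} produces the explicit rational expressions for $\Gamma(x,y)$ and $\Upsilon(x,y)$; one reads off $\Gamma(0,0),\Upsilon(0,0)\neq 0$ and the $C^1$ regularity, giving Assumption~\ref{gamma-ups} and the smoothness part of Assumption~\ref{hardone}. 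From $\partial_y\Prad(x,1)=x^{-1}(2\lambda+2\mu-\lambda x^2)$ one obtains Assumption~\ref{initial-reg} with $X_\flat=\sqrt{2(\lambda+\mu)/\lambda}$. The positivity~\eqref{positivityGamma} of $\Gamma$ on $[0,X_\flat)\times[0,1)$ is immediate from its explicit form, and the Dulac inequality~\eqref{dulac-condition} follows from identity~\eqref{UAU} in Proposition~\ref{powerprop} since $c=0$ and $a=1\geq -3$, using $\Gamma(0,y)>0$. At this point Theorem~\ref{global-regular} applies. To prepare the truncation I would then use the identity $\Ptan(x,y)=(xy)^{-1}(1-y^2)(\lambda+\mu-\lambda x^2)+\Prad(x,y)$ to obtain Assumption~\ref{ptanpositive} with $X_\sharp=\sqrt{(\lambda+\mu)/\lambda}$, and the identities $\Prad(x,1)=x^{-1}(x-1)(\lambda x+2(\lambda+\mu))$ and $\Prad(1,y)=(\lambda+2\mu)y^{-1}(y-1)$ to obtain Assumption~\ref{neg-press}: the factor $\lambda x+2(\lambda+\mu)$ is positive for $x>0$, so $\Prad(x,1)>0$ whenever $x>1$, and $\Prad(1,y)<0$ for $y\in(0,1)$, which is~\eqref{neg}.

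With all assumptions established I would assemble the conclusion. Since $\lambda+\mu<2(\lambda+\mu)$ we have $X_\sharp<X_\flat$, hence $X=\min(X_\flat,X_\sharp)=X_\sharp$ and, by~\eqref{deltaflat}, $\Delta=X^3=((\lambda+\mu)/\lambda)^{3/2}=\Delta_\sharp$; Theorem~\ref{maintheo} then gives, for $1<\delta_c<\Delta_\sharp$, the unique regular static self-gravitating elastic ball with $\rho(0)=\rho_c$, the inequality $\partial_\delta\prad(\delta_c,\delta_c)>0$, and the bounds~\eqref{bounds}. Since the material is power-law hyperelastic it satisfies Assumption~\ref{samepcenter}, i.e.\ $\chi\equiv 0$, so condition~\eqref{chi} holds (for hyperelastic materials $\chi\equiv0$ implies $\partial_\eta\widehat{p}_\mathrm{iso}(\delta,\delta)=0$), and with $\partial_\delta\prad(\delta_c,\delta_c)\neq 0$ just obtained Theorem~\ref{regtheo} upgrades the ball to strongly regular. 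Finally, necessity of $\delta_c>1$ follows because for $x_c=\delta_c^{1/3}\leq 1$ one has $\Prad(x_c,1)\leq 0$, so the central radial pressure $p_\mathrm{rad}(0)$ is non-positive, violating condition~(ii) of Definition~\ref{balldef}. The only genuinely non-mechanical point is the computation of $\Delta$: one must check that the tangential-pressure constraint $X_\sharp$ is the binding one here (in contrast with the Saint Venant--Kirchhoff case, where $X_\sharp=\infty$ and $X=X_\flat$), so that $\Delta=\Delta_\sharp<\Delta_\flat$; all remaining steps are routine algebra, and Assumption~\ref{hardone} is trivialized by identity~\eqref{UAU}.
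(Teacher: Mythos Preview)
Your proposal is correct and follows essentially the same route as the paper: the entire argument for Theorem~\ref{hadtheo} in the paper is precisely the verification of Assumptions~\ref{As1}--\ref{neg-press} for the stored energy~\eqref{storedhad} with $\lambda>0$, the explicit computation of $X_\flat,X_\sharp$ and hence of $\Delta=X_\sharp^3=\Delta_\sharp$, followed by an appeal to Theorem~\ref{maintheo} (and to Theorem~\ref{regtheo}/Proposition~\ref{powerprop} for strong regularity). Your ordering, the specific identities you cite (in particular $\Ptan=(xy)^{-1}(1-y^2)(\lambda+\mu-\lambda x^2)+\Prad$ and the factorizations of $\Prad(x,1)$, $\Prad(1,y)$), and your observation that $X_\sharp<X_\flat$ so that $\Delta=\Delta_\sharp$ all match the paper's treatment.
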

{\it Remark.} The bound $\delta_c<\Delta_\sharp$ in Theorem~\ref{hadtheo} might be stronger than necessary. In particular, let 
\[
\Delta_\flat:=X_\flat^3=\left(2\frac{\lambda+\mu}{\lambda}\right)^{3/2}>\Delta_\sharp.
\]
For $\delta_c\geq\Delta_\flat$ the strict hyperbolicity condition~\eqref{stricthypcond} is violated at the center. An interesting open question is whether the bound $\delta_c<\Delta_\sharp$ in Theorem~\ref{hadtheo} can be replaced by $\delta_c<\Delta_\flat$, or more generally if Assumption~\ref{ptanpositive} may be relaxed so that the only restriction on the center datum in Theorem~\ref{maintheo} is provided by Assumption~\ref{initial-reg}.
\subsection{Important examples not covered by Theorem~\ref{maintheo}}
The are of course elastic material models which violate some of our assumptions on the constitutive equations. In this section we discuss briefly two important models which are not covered by Theorem~\ref{maintheo}. We limit ourselves to discuss the violation of Assumption~\ref{abcass}, as this is the most important restriction imposed on the constitutive functions.

The first example is the Seth model, for which the constitutive functions are given by
\[
\Prad(x,y)=\lambda x^2+\frac{\lambda+2\mu}{2}x^2y^2-p_0,\quad \Ptan(x,y)=(\lambda+\mu) x^2+\frac{\lambda}{2}x^2y^2-p_0
\]
where $p_0=(3\lambda+2\mu)/2>0$, see~\eqref{pseth}.
%%Hence
%\[
%x^2 \left(\lambda  \left(y^2+2\right)+2 \mu  \left(y^2+y+1\right)\right),
%\]
%by which it follows that the only possible choice of the exponents in the definition of the function $\Upsilon$ in Assumption~\ref{gamma-ups} are 
For this model Assumption~\ref{gamma-ups} holds with $a=-2$, $b=-1$ and $c=2$, hence Assumption~\ref{abcass} is violated. 
%However with this choice the function $\Gamma$ is given by
%\[
%\Gamma(x,y)=(\lambda+2\mu)^{-1}x^{-2}y\partial_y\Prad(x,y)=y^2,
%\]
%hence $\Gamma(x,0)=0$, which violates Assumption~\ref{gamma-ups}. By inspecting the proof of Theorem~\ref{maintheo} one can see that the assumption $\Gamma(x,0)>0$ was used to exclude that $y=\delta/\eta\to 0$ as $r\to 0^+$ or $r\to\infty$, without the need of a center manifold analysis.
%While this analysis would not be particularly technical for the specific case of the Seth model, it becomes so when it is attempted for general material models. 
However the conclusions of Theorem~\ref{maintheo} still hold true for the Seth model, as we proved in~\cite{AC18}. The main difference between the approach in this paper and the approach in~\cite{AC18} is that in~\cite{AC18} we used a different set of dynamical variables for which $y=0$ is no longer an invariant set. It appears that the formulation employed in~\cite{AC18} is more convenient for models with $c>0$.
%, although in this formulation it becomes more difficult to prove that solutions are bounded.

The second important example not covered by the results of this paper is the Signorini model. In one special case (called quasi-linear), the stored energy function of Signorini materials is given by
\begin{align*}
\mathcal{W}(x,y)&=\frac{9\lambda+5\mu}{8}x^{-3}y^{-1}+x^{-1}\left(-\frac{3\lambda+\mu}{2}y^{-1}-\frac{3\lambda+\mu}{4}y\right)\\
&\quad+x\left(\frac{\lambda+\mu}{2}y^{-1}+\frac{\lambda+\mu}{2}y+\frac{\lambda+\mu}{8}y^3\right)-\mu,
\end{align*}
which is a type (1,2,3) power-law stored energy function with  $a=-2$, $b=-2$ and $c=1$, thus Assumption~\ref{abcass} is violated by the Signorini model. In this case the question of existence of regular static self-gravitating balls remains open.

\subsection{Exact solutions}\label{exactsolutions}
We conclude the paper by discussing the existence of (typically non-regular) exact self-similar solutions for the system~\eqref{dyn-sys}. Exact solutions can be constructed when the functions $\Gamma$ and $\Upsilon$ are independent of $x$, in which case the differential equation for $x$ decouples from the rest of the system, thus generating a skew product flow that further implies the existence of an interior straight orbit $--\rightarrow \mathrm{P}$ for the dynamical system~\eqref{dyn-sys}, where $\mathrm{P}$ is the fixed point~\eqref{attractor} on the boundary $x=0$.
We present below two examples of exact non-regular balls; see~\cite{AC19} for a general discussion on this problem.
% is for each of these two possibilities. 

\subsubsection*{An exact non-regular static self-gravitating ball in the Seth model}
For the Seth model
% $a=-2$, $b=-1$ and~\eqref{exactsol} becomes 
the exact self-similar solution has been found in~\cite{AC18}, namely
\[
\delta_\star(r)=\frac{d^{3/4}}{2r^{3/2}},\quad \eta_\star(r)=\frac{d^{3/4}}{r^{3/2}},\quad d=\frac{3 (9\lambda +14 \mu )}{16\pi \mathcal{K}^2}.
\]
The associated principal pressures~\eqref{pseth} are given by
\begin{align*}
&p_\mathrm{rad}^\star(r)=-p_0+\frac{9\lambda+2\mu}{8}\frac{d^{1/2}}{r},\\
& p_\mathrm{tan}^\star(r)=-p_0+\frac{9\lambda+8\mu}{8}\frac{d^{1/2}}{r},
\end{align*}
where $p_0=(3\lambda+2\mu)/2>0$.
Note that $p_\mathrm{tan}(r)>p_\mathrm{rad}(r)$, for all $r>0$, the radial pressure is positive for $r<R$, negative for $r>R$ and vanishes at $r=R$, where
\[
R=\frac{(9 \lambda +2 \mu ) \sqrt{27 \lambda +42 \mu }}{\sqrt{\pi } \mathcal{K}  (48 \lambda +32 \mu )}.
\]
Hence the self-similar solution truncated at $r=R$ describes a static, self-gravitating ball with irregular center.

%For Eulerian linear stress-strain laws the radial and tangential pressures for this special solutions is
%%
%\begin{subequations}
%	\begin{align}
%	&P_{\mathrm{rad}} %\left(\left(\frac{c}{r^{2}}\right)^{\frac{1}{6-2n}},\frac{4-2n}{6-2n}\right) 
%	= \frac{1}{2n}\left[-\frac{1+\nu}{1-\nu}+\left(\frac{2\nu}{1-\nu}+\left(\frac{4-2n}{6-2n}\right)^{2n}\right)\left(\frac{c}{r^{2}}\right)^{\frac{2n}{6-2n}}\right] \\
%	&P_{\mathrm{tan}} 
%	= \frac{1}{2n}\left[-\frac{1+\nu}{1-\nu}+\left(\frac{1}{1-\nu}+\frac{\nu}{1-\nu}\left(\frac{4-2n}{6-2n}\right)^{2n}\right)\left(\frac{c}{r^{2}}\right)^{\frac{2n}{6-2n}}\right]
%	\end{align}
%\end{subequations}
%%
%the radial pressure attains the negative finite value $p_0=-\frac{1}{2n}\frac{1+\nu}{1-\nu}$. Thus the pressure vanishes at some finite radius given by
%\begin{equation}
%R = c^{\frac{1}{2}}\left(\frac{2\nu+\left(\frac{4-2n}{6-2n}\right)^{2n}(1-\nu)}{1+\nu}\right)^{\frac{6-2n}{2n}}
%\end{equation}
%%
%
%
\subsubsection*{An exact non-regular static self-gravitating ball in the John model}
Another example of material for which $\Gamma,\Upsilon$ are independent of $x$ is the John model. In this case we find the solution
\[
\delta_\star(r)=\frac{3}{5}\frac{d^{3/5}}{r^{6/5}},\quad \eta_\star(r)=\frac{d^{3/5}}{r^{6/5}},\quad d=\frac{11(\lambda+2\mu)}{6\pi\mathcal{K}^2}.
\]
The associated principal pressures are
%\begin{subequations}
	\begin{align*}
	&p_{\mathrm{rad}}^\star(r) = 2\mu-\textfrac{11}{3}(\lambda+2\mu)\left(\frac{d}{r^{2}}\right)^{\frac{1}{5}}+(3\lambda+4\mu)\left(\frac{d}{r^{2}}\right)^{\frac{2}{5}}, \\
	&p_{\mathrm{tan}}^\star(r) = 2\mu-\textfrac{11}{5}(\lambda+2\mu)\left(\frac{d}{r^{2}}\right)^{\frac{1}{5}}+\textfrac{3}{5}(3\lambda+4\mu)\left(\frac{d}{r^{2}}\right)^{\frac{2}{5}}.
	\end{align*}
%\end{subequations}
%
The radial pressure vanishes at the radii
%\begin{equation}
\[
R_{\pm} = d^{\frac{1}{2}} \left(\frac{6(3\lambda+4\mu)}{11(\lambda+2\mu)\pm\sqrt{(11(\lambda+2\mu))^2-72\mu(3\lambda+4\mu)}}\right)^{\frac{5}{2}}.
\]
%\end{equation}
%
Note that $(11(\lambda+2\mu))^2-72\mu(3\lambda+4\mu)=121\lambda^2+268\lambda\mu+196\mu^2$, hence assuming for instance that $\lambda>0$, which is satisfied by most materials, $R_+<R_-$ are both well defined and we have
\[
p_\mathrm{rad}(R_\pm)=0,\quad p_\mathrm{rad}(r)>0,\quad\text{for $r\in (0,R_+)\cup (R_-,\infty):=J$.}
\]
It is straightforward to verify that $p_\mathrm{tan}(r)>0$, for $r\in J$, thus in the case of the John model the self-similar solution gives rise to a non-regular static self-gravitating ball with radius $R_+$. 

\section*{Acknowledgements}
S.~C. thanks Stephen Pankavich for discussions on the proof of Theorem~\ref{regtheo}.
A.~A. is supported by the project (GPSEinstein) PTDC/MAT-ANA/1275/2014,
by CAMGSD, Instituto Superior T{\'e}cnico, through FCT/Portugal 
UID/MAT/04459/2013, and by the FCT Grant No. SFRH/BPD/85194/2012.

%\appendix

%

\end{document}